\documentclass[twocolumn]{autart}
\usepackage{graphicx}
\usepackage[latin9]{inputenc}

\usepackage{amsmath}
\usepackage{amssymb}

\usepackage{array}
\usepackage{booktabs}
\usepackage{multirow}
\usepackage{makecell}
\usepackage{diagbox}
\usepackage{colortbl}

\usepackage{xcolor}
\usepackage{enumerate}
\usepackage{cite}
\usepackage{subfigure}
\usepackage{algorithm}
\usepackage{algpseudocode}

\usepackage[
    caption=false,
    font=normalsize,
    labelfont=sf,
    textfont=sf
]{subfig}
\usepackage{booktabs}
\usepackage{tabularx}
\usepackage{makecell}
\usepackage{array}

\newcolumntype{Y}{>{\centering\arraybackslash}X}

\input{tcilatex}

\algrenewcommand\algorithmicrequire{\textbf{Input:}}
\algrenewcommand\algorithmicensure{\textbf{Output:}}

\begin{document}

\begin{frontmatter}

\title{Data-Driven Policy Iteration Without an Initial
Stabilizing Policy: A Finite-Horizon Bootstrap
Method
}
\vspace{-4ex}

\thanks[*]{Corresponding author: Yang~Zhu.}

\author[a]{Jiacheng~Wu}\ead{jiachengwu@zju.edu.cn},
\author[a]{Yang~Zhu}*\ead{zhuyang88@zju.edu.cn}

\address[a]{College of Control Science and Engineering, Zhejiang University, Hangzhou 310027, China}
\vspace{-1ex}
\begin{keyword}
Policy iteration, data-driven control, off-policy reinforcement learning, optimal control.
\end{keyword}

\vspace{-1ex}
\begin{abstract}
This article investigates data-driven policy iteration (PI) for continuous-time linear systems without requiring an initially stabilizing policy. Standard infinite-horizon PI is not self-starting because its policy-evaluation step is well posed only when the feedback gain is stabilizing. However, verifying this property is difficult when the system matrices are unknown. To remove this requirement, we develop a finite-horizon bootstrap method. The key idea is to perform policy evaluation over a compact interval for a shifted system, where the evaluation equation is well defined for arbitrary bounded time-varying policies. We show that, for a sufficiently long horizon, the initial-time optimal gain of the shifted finite-horizon problem, when applied as a constant feedback gain, achieves a prescribed stability margin for the original system. We then derive a data-driven implementation from an off-policy identity evaluated along trajectories of the original plant. We use basis-function approximations to reconstruct the finite-horizon value matrix and policy, and we characterize the resulting error through a perturbed policy-improvement recursion. A data-driven Lyapunov certificate is further introduced to verify admissibility of the candidate gain before it is used to initialize infinite-horizon PI. Numerical studies of a batch reactor and a two-mass-spring system demonstrate the effectiveness of the proposed bootstrap method.
\end{abstract}
\end{frontmatter}

\section{Introduction}

Policy iteration (PI) is a widely used Newton-type method for continuous-time linear quadratic regulator (LQR) problems \cite{lewis2012optimal,hull2013optimal,liu2025adaptive}. PI is initialized with a stabilizing feedback gain and proceeds by repeatedly evaluating the current policy and updating the control law. The iteration preserves closed-loop stability, generates a monotonically nonincreasing sequence of value matrices, and converges locally at the Newton rate \cite{kleinman1968iterative,lopez2023efficient}. Integral and off-policy reinforcement learning (RL) formulations implement both steps using measured state--input data without explicitly identifying the system matrices \cite{cui2025learning,vrabie2009adaptive,jiang2012computational,shen2024data,luo2026recent,rizvi2018output}. Together, these properties make PI attractive for data-driven optimal control. Yet standard infinite-horizon PI is not self-starting \cite{gao2016adaptive,lian2022data,jiang2025off}. The policy-evaluation step in infinite-horizon PI requires the current feedback gain to stabilize the closed-loop system. If the gain is not stabilizing, closed-loop trajectories need not decay, the infinite-horizon cost is generally not finite, and the associated Lyapunov equation no longer defines a valid value matrix. Consequently, the set of admissible policies defines the domain on which infinite-horizon policy evaluation admits a valid value-function interpretation. When the system matrices are unknown, constructing and certifying an admissible gain may require precisely the model information that data-driven PI seeks to avoid. This circular dependence poses a structural obstacle to implementing infinite-horizon PI.

Value iteration (VI) avoids the need for an initial stabilizing feedback gain by initializing a value matrix instead \cite{possieri2022value,wei2016value,luo2019balancing}. In this setting, existing continuous-time VI schemes typically rely on diminishing or small step sizes and may use bounded-set projections or reset rules \cite{bian2016value,pang2020adaptive}. Their first-order updates may also converge more slowly than the local Newton iteration underlying PI. Hybrid iteration (HI) uses VI to reach the admissible-policy region and then switches to PI to exploit its fast local convergence \cite{gao2022resilient,wu2025distributed,qasem2023experimental}. The HI framework has also been applied to Markov jump systems \cite{shen2024secure,wang2026secure}, networked control systems \cite{zhang2025prescribed}, and autonomous vehicles \cite{liang2025cooperative}. Although HI reduces reliance on an initial stabilizing feedback gain, its VI stage still requires step-size selection, boundedness verification, and an explicit VI-to-PI switching criterion.

\begin{table*}[t]
\caption{Comparison of initialization requirements for continuous-time
data-driven RL algorithms.}
\label{tab:comparison_initialization}
\centering
\scriptsize
\setlength{\tabcolsep}{3.2pt}
\renewcommand{\arraystretch}{1.08}
\newcommand{\headercell}[1]{%
  \parbox[b][2.4\baselineskip][b]{\linewidth}{%
    \centering\bfseries #1}}
\newcommand{\headercellleft}[1]{%
  \parbox[b][2.4\baselineskip][b]{\linewidth}{%
    \raggedright\bfseries #1}}

\begin{tabularx}{\textwidth}{@{}>{\raggedright\arraybackslash}p{0.17\textwidth}YYYYY@{}}
\toprule
\headercellleft{References}
& \headercell{Initial stabilizing\\policy}
& \headercell{Stabilizing\\intermediate policy}
& \headercell{Infinite-horizon\\subproblem}
& \headercell{Step-size\\mechanism}
& \headercell{Bias\\mechanism} \\
\midrule

Standard PI \cite{jiang2012computational}
& Required
& Required
& Required
& N/A
& N/A \\[0.8pt]

VI \cite{bian2016value}
& Not required
& N/A
& Not required
& \makecell{Required}
& N/A \\[0.8pt]

HI \cite{gao2022resilient,wu2025distributed,qasem2023experimental}
& Not required
& \makecell{Required after\\switching}
& \makecell{Required after\\switching}
& Required
& N/A \\[0.8pt]

Bias PI \cite{jiang2022bias}
& Relaxed
& \makecell{Required for\\auxiliary system}
& Required
& N/A
& Required \\[0.8pt]

Homotopy PI \cite{chen2022homotopic,fan2025homotopy,ma2025adaptive}
& Not required
& Required
& Required
& N/A
& Required \\

\midrule
\rowcolor{black!5}
\textbf{This paper}
& \textbf{Not required}
& \textbf{Not required}
& \textbf{Not required}
& \textbf{Not required}
& \textbf{Not required} \\
\bottomrule
\end{tabularx}
\end{table*}

A second class of methods modifies the policy-evaluation operator to enlarge the initialization region. Bias PI introduces a bias parameter into policy evaluation and transfers the stability requirement from the original closed-loop matrix to a shifted auxiliary matrix \cite{jiang2022bias}. A smaller bias yields PI-like behavior, whereas a larger bias can expand the initialization region at the cost of making the iteration more VI-like. Related ideas have been explored for discrete-time systems \cite{yang2021model} and zero-sum games \cite{shen2025data,zhao2024novel}. Bias PI thus relaxes the initial admissibility requirement while retaining linear matrix equations at each iteration. Implementing the method requires selecting a sufficiently large bias to ensure that auxiliary policy evaluation is well posed. Global variants additionally use boundedness tests and bias-adjustment rules. 

Homotopy-based methods take a different approach. They start from an artificially stabilized system that admits a trivial or readily constructed stabilizing policy. They then gradually deform the auxiliary system into the original plant while maintaining a stabilizing policy for each intermediate problem. Homotopic PI methods move the artificial dynamics toward those of the original plant using a cumulative continuation factor and provide an explicit bound on the closed-loop pole location \cite{chen2022homotopic,ma2025adaptive}. A more recent predictor-corrector homotopy method constructs a path of stabilizing Riccati solutions and uses first-order differential information to predict the controller for the next auxiliary problem, thereby enabling faster continuation along the path \cite{fan2025homotopy}. Related homotopy-based methods have subsequently been extended to nonlinear systems \cite{chen2023adaptive}, inverse RL \cite{wu2025memory}, multiagent systems \cite{li2025cooperative}, and Markov jump systems \cite{wang2025parallel}. These methods substantially relax the requirement for an initially stabilizing policy for the original plant. Nevertheless, each continuation stage still involves an infinite-horizon subproblem, and the corresponding intermediate policy must be stabilizing to ensure that policy evaluation is well posed. Thus, homotopy methods address the initialization difficulty by constructing a stability-preserving path through a sequence of auxiliary infinite-horizon problems. Table 1 summarizes the initialization requirements of the representative continuous-time data-driven RL methods discussed above and highlights the distinctions between these methods and the proposed finite-horizon bootstrap method.

The difficulty of evaluating a nonstabilizing policy stems from the unbounded evaluation interval. By contrast, over a compact interval, the terminal-value policy-evaluation equation is well posed for any bounded time-varying feedback policy, regardless of closed-loop stability. This distinction suggests using finite-horizon PI as a bootstrap stage rather than as a replacement for the original infinite-horizon problem. The finite-horizon stage generates the stabilizing policy needed to initialize infinite-horizon PI. However, implementing this bootstrap from data raises two main challenges. First, the time-varying finite-horizon value matrix and policy must be reconstructed using trajectories of the original plant alone, without collecting data from the auxiliary shifted system. Second, approximation and regression errors inevitably affect this reconstruction. The resulting constant gain must therefore be independently certified to stabilize the original plant before being used to initialize infinite-horizon PI. The main contributions are as follows:

\begin{itemize}
\item We introduce a finite-horizon bootstrap mechanism that eliminates the need for an initially stabilizing policy. Unlike initialization methods based on bias PI \cite{jiang2022bias}, homotopy \cite{chen2022homotopic,ma2025adaptive}, or VI \cite{gao2022resilient,wu2025distributed,qasem2023experimental}, our method allows arbitrary bounded intermediate policies because it performs policy evaluation over a compact time interval.

\item We establish a finite-to-infinite-horizon connection showing that the initial-time gain generated by the finite-horizon problem approaches the optimal gain of a shifted infinite-horizon LQR problem as the horizon increases. This connection provides a principled route to a prescribed feasible stability margin.

\item We develop a data-driven off-policy implementation based solely on trajectories of the original plant and characterize the effects of basis-approximation and regression errors through a perturbed finite-horizon policy-improvement recursion. We further introduce an independent data-driven Lyapunov certificate to verify the admissibility of the candidate gain before using the gain to initialize infinite-horizon PI.
\end{itemize}

\noindent \textbf{Notation:} Throughout this article, we use $\mathbb{R}^{n}$ for the $n$-dimensional real Euclidean
space and $\mathbb{R}^{m\times n}$ for the space of real
$m\times n$ matrices. The identity matrix of order $n$ is
written as $I_n$, whereas $\mathbf{0}$ represents a zero scalar,
vector, or matrix whose dimensions are inferred from context. For a matrix $\textbf{A}$, we use $\sigma_{\min}(\textbf{A})$
to represent its smallest singular value, while
$\textbf{A}^{\dagger}$ refers to its Moore--Penrose
pseudoinverse. When $\textbf{A}$ is square,
$\sigma(\textbf{A})$ represents its spectrum, and its spectral
abscissa is given by
$
   \alpha(\textbf{A})
   =
   \max_{\lambda\in\sigma(\textbf{A})}
   \operatorname{Re}(\lambda)
$,
where $\operatorname{Re}(\lambda)$ represents the real part of
$\lambda$. Moreover,
$
    \mathbb{C}_{-}
    =
    \{z\in\mathbb{C}:\operatorname{Re}(z)<0\}
$
denotes the open left-half complex plane. The minimum and maximum eigenvalues of a symmetric matrix
$\textbf{X}$ are denoted by $\lambda_{\min}(\textbf{X})$ and
$\lambda_{\max}(\textbf{X})$, respectively. For matrix-valued
function $\textbf{H}$ defined on $[0,T]$, let
$
    \|\textbf{H}\|_{\infty,T}
    =
    \sup_{t\in[0,T]}\|\textbf{H}(t)\|_F .
$ Given a matrix $\textbf{A}=[a_{ij}]\in\mathbb{R}^{m\times n}$,
$\operatorname{vec}(\textbf{A})$ is formed by stacking the
columns of $\textbf{A}$ into a single vector.
For a symmetric matrix $\textbf{P}=[p_{ij}]\in\mathbb{R}^{n\times n}$, define
$\operatorname{vecs}(\textbf{P})=[p_{11},2p_{12},\ldots,2p_{1n},p_{22},2p_{23},\ldots,p_{nn}]^T.$ For a vector $x=[x_1,\ldots,x_n]^T\in\mathbb{R}^{n}$, define $\operatorname{vecv}(x)=[x_1^2,x_1x_2,\ldots,x_1x_n,x_2^2,x_2x_3,\ldots,x_n^2]^T.$ For matrices $X_1,\ldots,X_q$ having the same number of columns,
$\operatorname{col}(X_1,\ldots,X_q)
= [X_1^{T},\ldots,X_q^{T}]^{T}$
denotes their vertical concatenation.

\section{Preliminaries and problem statement}

Consider the continuous-time linear system
\begingroup
\setlength{\abovedisplayskip}{6pt}
\setlength{\belowdisplayskip}{6pt}
\setlength{\abovedisplayshortskip}{6pt}
\setlength{\belowdisplayshortskip}{6pt}
\begin{equation}
\dot{x}(t)
=\mathbf{A}x(t)+\mathbf{B}u(t),
\label{1}
\end{equation}
\endgroup
where $x(t)\in \mathbb{R}^{n}$ is the system state and
$u(t)\in \mathbb{R}^{m}$ is the control signal. The system
matrices $\mathbf{A}\in\mathbb{R}^{n\times n}$ and
$\mathbf{B}\in\mathbb{R}^{n\times m}$ are constant but
unavailable. The infinite-horizon quadratic cost is
\begingroup
\setlength{\abovedisplayskip}{6pt}
\setlength{\belowdisplayskip}{6pt}
\setlength{\abovedisplayshortskip}{6pt}
\setlength{\belowdisplayshortskip}{6pt}
\begin{equation}
J(x_{0},u)
=
\int_{0}^{\infty}
\left[
x^{T}(t)\mathbf{Q}x(t)
+
u^{T}(t)\mathbf{R}u(t)
\right]
\mathrm{d}t,
\label{2}
\end{equation}
\endgroup
where $\mathbf{Q}=\mathbf{Q}^{T}\geq \mathbf{0}$ and $\mathbf{R}=\mathbf{R}%
^{T}>\mathbf{0}$ are the state and
control penalties, respectively. The pair $(\mathbf{A},\mathbf{B})$ is assumed to
be stabilizable, and $(\mathbf{A},\mathbf{Q}^{1/2})$ is assumed to be
detectable.

\begin{definition}
A constant feedback gain $\mathbf{K}\in\mathbb{R}^{m\times n} $ is called
admissible if $\mathbf{A}-\mathbf{B}\mathbf{K}$ is Hurwitz. For a prescribed
constant $\varepsilon\geq 0$, $\mathbf{K}$ is called $\varepsilon$%
-admissible if
\begingroup
\setlength{\abovedisplayskip}{6pt}
\setlength{\belowdisplayskip}{6pt}
\setlength{\abovedisplayshortskip}{6pt}
\setlength{\belowdisplayshortskip}{6pt}
\begin{equation}
\max_{\lambda\in\sigma(\mathbf{A}-\mathbf{B}\mathbf{K})}
\operatorname{Re}(\lambda)
<
-\varepsilon.
\label{3}
\end{equation}
\endgroup
\end{definition}

When $\mathbf{A}$ and $\mathbf{B}$ are available, the optimal control law is
\begingroup
\setlength{\abovedisplayskip}{6pt}
\setlength{\belowdisplayskip}{6pt}
\setlength{\abovedisplayshortskip}{6pt}
\setlength{\belowdisplayshortskip}{6pt}
\begin{equation}
u^{*}(t)
=
-\mathbf{K}^{*}x(t)
=
-\mathbf{R}^{-1}\mathbf{B}^{T}\mathbf{P}^{*}x(t),
\label{4}
\end{equation}
\endgroup
where $\mathbf{P}^{*}=(\mathbf{P}^{*})^{T}\geq\mathbf{0}$ is the unique stabilizing solution
of the algebraic Riccati equation
\begingroup
\setlength{\abovedisplayskip}{6pt}
\setlength{\belowdisplayskip}{6pt}
\setlength{\abovedisplayshortskip}{6pt}
\setlength{\belowdisplayshortskip}{6pt}
\begin{equation}
\mathbf{A}^{T}\mathbf{P}^{*}
+
\mathbf{P}^{*}\mathbf{A}
-
\mathbf{P}^{*}\mathbf{B}\mathbf{R}^{-1}\mathbf{B}^{T}\mathbf{P}^{*}
+
\mathbf{Q}
=
\mathbf{0}.
\label{5}
\end{equation}
\endgroup
The standard continuous-time PI for solving \eqref{5} is
Kleinman's iteration. The following standard result of Kleinman's iteration is recalled.

\begin{lemma}[\hspace{-0.05em}{\protect\cite{kleinman1968iterative}}]
Let $\mathbf{K}^{[0]}$ be any admissible gain matrix, i.e., $\sigma (\mathbf{%
A}-\mathbf{B}\mathbf{K}^{[0]})\subset \mathbb{C}_{-}$. For $j=0,1,2,\ldots $%
, let $\mathbf{P}^{[j]}=(\mathbf{P}^{[j]})^{T}$ solve
\begingroup
\setlength{\abovedisplayskip}{6pt}
\setlength{\belowdisplayskip}{6pt}
\setlength{\abovedisplayshortskip}{6pt}
\setlength{\belowdisplayshortskip}{6pt}
\begin{equation}
\begin{aligned}
&(\mathbf{A}-\mathbf{B}\mathbf{K}^{[j]})^{T}\mathbf{P}^{[j]}
+\mathbf{P}^{[j]}(\mathbf{A}-\mathbf{B}\mathbf{K}^{[j]}) \\
&\quad+\mathbf{Q}
+(\mathbf{K}^{[j]})^{T}\mathbf{R}\mathbf{K}^{[j]}
=\mathbf{0},
\end{aligned}
\label{6}
\end{equation}
\endgroup
and update the control gain by
\begingroup
\setlength{\abovedisplayskip}{6pt}
\setlength{\belowdisplayskip}{6pt}
\setlength{\abovedisplayshortskip}{6pt}
\setlength{\belowdisplayshortskip}{6pt}
\begin{equation}
\mathbf{K}^{[j+1]}
=
\mathbf{R}^{-1}\mathbf{B}^{T}\mathbf{P}^{[j]}.
\label{7}
\end{equation}
\endgroup
Then, the following properties hold:

\begin{enumerate}
\item $\sigma(\mathbf{A}-\mathbf{B}\mathbf{K}^{[j]})\subset\mathbb{C}_{-}$
for all $j\geq0$;

\item $\mathbf{P}^{*}\leq \mathbf{P}^{[j+1]}\leq\mathbf{P}^{[j]}$ for
all $j\geq0$;

\item $\lim_{j\rightarrow\infty}\mathbf{K}^{[j]}=\mathbf{K}^{*}$ and $%
\lim_{j\rightarrow\infty}\mathbf{P}^{[j]}=\mathbf{P}^{*}$.
\end{enumerate}
\end{lemma}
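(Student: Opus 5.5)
The plan is the classical induction argument of Kleinman, organized around one algebraic identity relating consecutive iterates. Write $\mathbf{A}_j=\mathbf{A}-\mathbf{B}\mathbf{K}^{[j]}$. If $\mathbf{A}_j$ is Hurwitz, then \eqref{6} has the unique solution
\[
\mathbf{P}^{[j]}=\int_0^\infty e^{\mathbf{A}_j^T t}\left(\mathbf{Q}+(\mathbf{K}^{[j]})^T\mathbf{R}\mathbf{K}^{[j]}\right)e^{\mathbf{A}_j t}\,\mathrm{d}t\geq\mathbf{0}.
\]
Using $\mathbf{B}^T\mathbf{P}^{[j]}=\mathbf{R}\mathbf{K}^{[j+1]}$ from \eqref{7}, rewrite \eqref{6} in terms of $\mathbf{A}_{j+1}$ by completing the square:
\[
\mathbf{A}_{j+1}^T\mathbf{P}^{[j]}+\mathbf{P}^{[j]}\mathbf{A}_{j+1}+\mathbf{Q}+(\mathbf{K}^{[j+1]})^T\mathbf{R}\mathbf{K}^{[j+1]}+(\mathbf{K}^{[j]}-\mathbf{K}^{[j+1]})^T\mathbf{R}(\mathbf{K}^{[j]}-\mathbf{K}^{[j+1]})=\mathbf{0}.
\]
The same manipulation applied to \eqref{5}, with $\mathbf{K}^*=\mathbf{R}^{-1}\mathbf{B}^T\mathbf{P}^*$, gives the analogous identity for $\mathbf{P}^*$ along an arbitrary gain:
\[
\mathbf{A}_j^T\mathbf{P}^*+\mathbf{P}^*\mathbf{A}_j+\mathbf{Q}+(\mathbf{K}^{[j]})^T\mathbf{R}\mathbf{K}^{[j]}-(\mathbf{K}^{[j]}-\mathbf{K}^*)^T\mathbf{R}(\mathbf{K}^{[j]}-\mathbf{K}^*)=\mathbf{0}.
\]

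\textbf{Stability (item 1).} Argue by induction; the base case is the hypothesis on $\mathbf{K}^{[0]}$. Suppose $\mathbf{A}_j$ is Hurwitz, and let $\mathbf{A}_{j+1}v=\lambda v$ with $v\neq\mathbf{0}$. Multiply the rewritten identity by $v^*$ on the left and $v$ on the right. This gives
\[
2\operatorname{Re}(\lambda)\,v^*\mathbf{P}^{[j]}v=-v^*\mathbf{M}v\leq 0,
\]
where $\mathbf{M}$ is the positive semidefinite sum of the last three terms. Two cases must be excluded.
\begin{itemize}
\item If $v^*\mathbf{M}v=0$, then $\mathbf{Q}^{1/2}v=\mathbf{0}$, $\mathbf{K}^{[j+1]}v=\mathbf{0}$ and $\mathbf{K}^{[j]}v=\mathbf{0}$. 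Hence $\mathbf{A}_jv=\mathbf{A}v=\lambda v$. Since $\mathbf{A}_j$ is Hurwitz, $\operatorname{Re}\lambda<0$ automatically.
\item If $v^*\mathbf{M}v>0$, then $\operatorname{Re}\lambda\geq0$ would force $v^*\mathbf{P}^{[j]}v>0$ and the left-hand side to be nonnegative, a contradiction; the same holds when $v^*\mathbf{P}^{[j]}v=0$. So $\operatorname{Re}\lambda<0$.
\end{itemize}
The delicate point is that $\mathbf{Q}$ is only semidefinite, so the Lyapunov argument is eigenvector-wise rather than a direct inertia argument.

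\textbf{Monotonicity (item 2).} Subtract the equation \eqref{6} at step $j+1$ from the rewritten identity at step $j$. This yields
\[
\mathbf{A}_{j+1}^T(\mathbf{P}^{[j]}-\mathbf{P}^{[j+1]})+(\mathbf{P}^{[j]}-\mathbf{P}^{[j+1]})\mathbf{A}_{j+1}+(\mathbf{K}^{[j]}-\mathbf{K}^{[j+1]})^T\mathbf{R}(\mathbf{K}^{[j]}-\mathbf{K}^{[j+1]})=\mathbf{0}.
\]
Because $\mathbf{A}_{j+1}$ is Hurwitz, the integral representation gives $\mathbf{P}^{[j]}\geq\mathbf{P}^{[j+1]}$.

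For the lower bound, subtract the $\mathbf{P}^*$ identity from \eqref{6} at step $j$:
\[
\mathbf{A}_j^T(\mathbf{P}^{[j]}-\mathbf{P}^*)+(\mathbf{P}^{[j]}-\mathbf{P}^*)\mathbf{A}_j+(\mathbf{K}^{[j]}-\mathbf{K}^*)^T\mathbf{R}(\mathbf{K}^{[j]}-\mathbf{K}^*)=\mathbf{0}.
\]
Since $\mathbf{A}_j$ is Hurwitz, this gives $\mathbf{P}^{[j]}\geq\mathbf{P}^*$ for every $j$, including $j+1$.

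\textbf{Convergence (item 3).} The sequence $\{\mathbf{P}^{[j]}\}$ is monotone nonincreasing and bounded below by $\mathbf{P}^*$, so it converges to some $\bar{\mathbf{P}}\geq\mathbf{P}^*$. By \eqref{7}, $\mathbf{K}^{[j]}\to\bar{\mathbf{K}}=\mathbf{R}^{-1}\mathbf{B}^T\bar{\mathbf{P}}$. Passing to the limit in \eqref{6} shows that $\bar{\mathbf{P}}$ satisfies the ARE \eqref{5}.

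The main obstacle is identifying $\bar{\mathbf{P}}$ with $\mathbf{P}^*$, since the ARE may have other nonnegative solutions and the limit $\mathbf{A}-\mathbf{B}\bar{\mathbf{K}}$ need only be a limit of Hurwitz matrices. The plan is to show that $\bar{\mathbf{P}}$ is stabilizing and then invoke uniqueness of the stabilizing solution.
\begin{itemize}
\item Write the ARE for $\bar{\mathbf{P}}$ as a Lyapunov equation for $\bar{\mathbf{A}}=\mathbf{A}-\mathbf{B}\bar{\mathbf{K}}$ with right-hand side $-(\mathbf{Q}+\bar{\mathbf{K}}^T\mathbf{R}\bar{\mathbf{K}})$.
\item Suppose $\bar{\mathbf{A}}v=\lambda v$ with $\operatorname{Re}\lambda\geq0$. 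Then either $v^*\bar{\mathbf{P}}v>0$ with $\operatorname{Re}\lambda<0$, a contradiction, or $\mathbf{Q}^{1/2}v=\mathbf{0}$ and $\bar{\mathbf{K}}v=\mathbf{0}$. In the latter case $\mathbf{A}v=\lambda v$ is an unstable mode with $\mathbf{Q}^{1/2}v=\mathbf{0}$, contradicting detectability of $(\mathbf{A},\mathbf{Q}^{1/2})$.
\end{itemize}
Hence $\bar{\mathbf{A}}$ is Hurwitz, so $\bar{\mathbf{P}}$ is the stabilizing solution and $\bar{\mathbf{P}}=\mathbf{P}^*$, which gives $\mathbf{K}^{[j]}\to\mathbf{K}^*$.
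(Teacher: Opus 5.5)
Your proof is correct and is essentially the classical Kleinman argument that the paper invokes only by citation (the paper gives no proof of Lemma~1 itself); it rests on the same completing-the-square identities the paper later uses for the finite-horizon analogue in \eqref{20}--\eqref{21} and \eqref{51}. Two cosmetic points: in the final step, state explicitly that $\bar{\mathbf{P}}\geq\mathbf{P}^{*}\geq\mathbf{0}$ is what lets you sign $v^{*}\bar{\mathbf{P}}v$, and rename your positive semidefinite sum $\mathbf{M}$, which clashes with the paper's terminal weight.
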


Lemma 1 shows that Kleinman's iteration is not self-starting, because its
well-posedness and stability-preserving property both require an admissible
initial gain. The problem addressed in this paper is
therefore to construct, from measured data, a constant gain $\mathbf{K}%
^{[0]} $ satisfying \eqref{3}, which can serve as the initializer for
infinite-horizon data-driven PI.

\subsection{Why a finite-horizon bootstrap method}

The central difficulty in removing the initial admissible controller is that
the infinite-horizon policy evaluation operator is intrinsically defined
only on stabilizing policies. For a fixed gain $\mathbf{K}$, the evaluation
step requires the Lyapunov equation
\begingroup
\setlength{\abovedisplayskip}{6pt}
\setlength{\belowdisplayskip}{6pt}
\setlength{\abovedisplayshortskip}{6pt}
\setlength{\belowdisplayshortskip}{6pt}
\begin{equation*}
\begin{aligned}
(\mathbf{A}-\mathbf{B}\mathbf{K})^{T}\mathbf{P}_{K}
+\mathbf{P}_{K}(\mathbf{A}-\mathbf{B}\mathbf{K})
+\mathbf{Q}
+\mathbf{K}^{T}\mathbf{R}\mathbf{K}
=\mathbf{0}.
\end{aligned}
\end{equation*}
\endgroup
When $\mathbf{A}-\mathbf{B}\mathbf{K}$ is Hurwitz, its solution admits the
value representation
\begingroup
\setlength{\abovedisplayskip}{6pt}
\setlength{\belowdisplayskip}{6pt}
\setlength{\abovedisplayshortskip}{6pt}
\setlength{\belowdisplayshortskip}{6pt}
\begin{equation*}
\mathbf{P}_{K}
=
\int_{0}^{\infty}
e^{(\mathbf{A}-\mathbf{B}\mathbf{K})^{T}t}
\left(\mathbf{Q}+\mathbf{K}^{T}\mathbf{R}\mathbf{K}\right)
e^{(\mathbf{A}-\mathbf{B}\mathbf{K})t}
\,\mathrm{d}t.
\end{equation*}
\endgroup
If $\mathbf{K}$ is not stabilizing, the state-transition matrix $e^{(\mathbf{%
A}-\mathbf{B}\mathbf{K})t}$ does not decay exponentially. Since the cost is
accumulated over the unbounded interval $[0,\infty)$, the integral is
generally not finite, and the evaluation no longer defines a valid cost
matrix.

As shown in Fig. 1, a finite-horizon evaluation avoids this obstruction because it is posed on a
compact interval. For any bounded time-varying gain $\mathbf{K}(t)$ on $%
[0,T] $, the terminal-value Lyapunov equation
\begingroup
\setlength{\abovedisplayskip}{6pt}
\setlength{\belowdisplayskip}{6pt}
\setlength{\abovedisplayshortskip}{6pt}
\setlength{\belowdisplayshortskip}{6pt}
\begin{equation}
\begin{aligned} -\dot{\mathbf P}(t) =&\,(\mathbf A-\mathbf B\mathbf
K(t))^T\mathbf P(t) +\mathbf P(t)(\mathbf A-\mathbf B\mathbf K(t)) \\
&+\mathbf Q+\mathbf K^T(t)\mathbf R\mathbf K(t), \qquad \mathbf P(T)=\mathbf
M \end{aligned}  \label{8}
\end{equation}
\endgroup
has a unique solution on $[0,T]$ with a terminal weight
$\textbf{M}=\textbf{M}^{T}\geq\textbf{0}$. This well-posedness does not require the
time-varying closed-loop system to be exponentially stable. Thus, unlike the
infinite-horizon evaluation, the finite-horizon evaluation is well defined
for arbitrary bounded policies and can therefore be used before an
admissible gain is available.

The finite-horizon stage must still produce a gain that is admissible for
the original infinite-horizon problem. To this end, the evaluation is
performed for the shifted dynamics $\dot x(t)=\mathbf{A}_{\rho }x(t)+\mathbf{%
B}u(t)$ with $\mathbf{A}_{\rho }=\mathbf{A}+\rho \mathbf{I}$ and $%
\rho>\varepsilon$. Let $\mathbf{K}_{\rho,T}^\ast(t)$ denote the optimal
finite-horizon gain of this shifted problem. As $T$ increases, the
initial-time gain $\mathbf{K}_{\rho,T}^\ast(0)$ approaches the stabilizing
infinite-horizon gain $\mathbf{K}_\rho^\ast=\mathbf{R}^{-1}\mathbf{B}^T%
\mathbf{P}_\rho^\ast , $ where $\mathbf{P}_\rho^\ast$ solves the shifted
algebraic Riccati equation. Since $\mathbf{A}_\rho-\mathbf{B}\mathbf{K}%
_\rho^\ast$ is Hurwitz, applying the same gain to the original system gives
\begingroup
\setlength{\abovedisplayskip}{6pt}
\setlength{\belowdisplayskip}{6pt}
\setlength{\abovedisplayshortskip}{6pt}
\setlength{\belowdisplayshortskip}{6pt}
\begin{equation*}
\alpha(\mathbf{A}-\mathbf{B}\mathbf{K}_\rho^\ast) = \alpha(\mathbf{A}_\rho-%
\mathbf{B}\mathbf{K}_\rho^\ast)-\rho <-\rho<-\varepsilon .
\end{equation*}
\endgroup
Therefore, by continuity of the spectral abscissa, $\mathbf{K}%
_{\rho,T}^\ast(0)$ is also $\varepsilon$-admissible for all sufficiently
large $T$.\begin{figure}[t]
\centering
\includegraphics[width=0.9\linewidth]{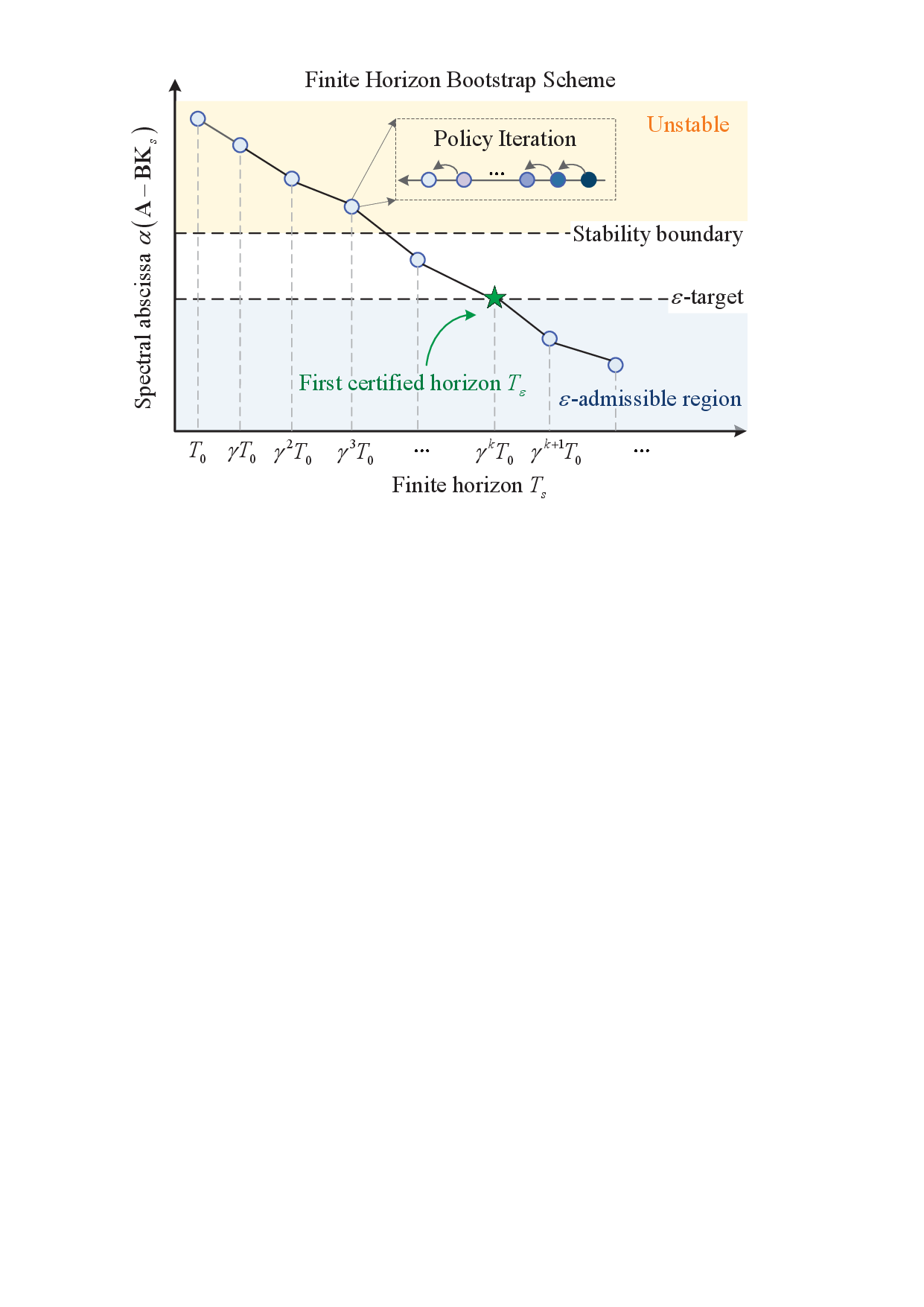}
\caption{Schematic of the finite-horizon bootstrap mechanism. At each horizon $T_s=\gamma^sT_0$, finite-horizon PI generates the candidate gain $\textbf{K}_s=\textbf{K}_{\rho,T_s}^{[\bar{j}_s]}(0)$. As the horizon is enlarged, the closed-loop spectral abscissa $\alpha_s=\alpha(\textbf{A}-\textbf{B}\textbf{K}_s)$ eventually enters the $\varepsilon$-admissible region. The first certified horizon is denoted by $T_\varepsilon$, and the inset shows the inner PI iterations at a fixed horizon.}
\end{figure}

\section{Model-based finite-horizon bootstrap algorithm}
This section presents the model-based version of the proposed finite-horizon
bootstrap method. The model-based construction clarifies the role of the
finite horizon and serves as the reference algorithm for the data-driven
development.

Let $\varepsilon \geq 0$ be the desired stability margin in \eqref{3}.
Choose a design parameter $\rho >\varepsilon $ and define the shifted system
matrix $\mathbf{A}_{\rho }$. The following assumption is imposed on the
shifted pair.

\textbf{Assumption 1.} The pair $(\mathbf{A}_{\rho},\mathbf{B})$ is
stabilizable, and $(\mathbf{A}_{\rho},\mathbf{Q}^{1/2})$ is detectable.

Assumption 1 guarantees the existence of the stabilizing solution $\mathbf{P}%
_{\rho }^{\ast }$ to the shifted algebraic Riccati equation
\begingroup
\setlength{\abovedisplayskip}{6pt}
\setlength{\belowdisplayskip}{6pt}
\setlength{\abovedisplayshortskip}{6pt}
\setlength{\belowdisplayshortskip}{6pt}
\begin{equation}
\mathbf{A}_{\rho }^{T}\mathbf{P}_{\rho }^{\ast }+\mathbf{P}_{\rho }^{\ast }%
\mathbf{A}_{\rho }-\mathbf{P}_{\rho }^{\ast }\mathbf{B}\mathbf{R}^{-1}%
\mathbf{B}^{T}\mathbf{P}_{\rho }^{\ast }+\mathbf{Q}=\mathbf{0}\text{.}
\label{9}
\end{equation}%
\endgroup
The control gain is $\mathbf{K}_{\rho }^{\ast }=\mathbf{R}^{-1}\mathbf{%
B}^{T}\mathbf{P}_{\rho }^{\ast }$. Since $\mathbf{A}_{\rho }-\mathbf{B}%
\mathbf{K}_{\rho }^{\ast }$ is Hurwitz, the matrix $\mathbf{A}-\mathbf{B}%
\mathbf{K}_{\rho }^{\ast }=\mathbf{A}_{\rho }-\mathbf{B}\mathbf{K}_{\rho
}^{\ast }-\rho \mathbf{I}$ is Hurwitz with a decay margin greater than $\rho
$. The purpose of the finite-horizon stage is to approximate this gain
without requiring an admissible initial policy.

For a horizon $T>0$ and a terminal matrix $\mathbf{M}=\mathbf{M}^{T}\geq
\mathbf{0}$, consider the shifted finite-horizon Riccati differential
equation
\begingroup
\setlength{\abovedisplayskip}{6pt}
\setlength{\belowdisplayskip}{6pt}
\setlength{\abovedisplayshortskip}{6pt}
\setlength{\belowdisplayshortskip}{6pt}
\begin{align}
-\dot{\mathbf{P}}_{\rho,T}^{\ast}(t)
={}&
\mathbf{A}_{\rho}^{T}\mathbf{P}_{\rho,T}^{\ast}(t)
+\mathbf{P}_{\rho,T}^{\ast}(t)\mathbf{A}_{\rho}
\notag\\
&-\mathbf{P}_{\rho,T}^{\ast}(t)
\mathbf{B}\mathbf{R}^{-1}\mathbf{B}^{T}
\mathbf{P}_{\rho,T}^{\ast}(t)
+\mathbf{Q}\text{,}
\label{10}
\end{align}
\endgroup
with terminal condition $\mathbf{P}_{\rho ,T}^{\ast }(T)=\mathbf{M}$. The
associated finite-horizon optimal gain is
\begingroup
\setlength{\abovedisplayskip}{6pt}
\setlength{\belowdisplayskip}{6pt}
\setlength{\abovedisplayshortskip}{6pt}
\setlength{\belowdisplayshortskip}{6pt}
\begin{equation}
\mathbf{K}_{\rho ,T}^{\ast }(t)=\mathbf{R}^{-1}\mathbf{B}^{T}\mathbf{P}%
_{\rho ,T}^{\ast }(t)\text{, } \qquad t\in \lbrack 0,T].  \label{11}
\end{equation}
\endgroup
The finite-horizon gain in \eqref{11} can be computed by a finite-horizon PI
without requiring a stabilizing initial policy. Let $\mathbf{K}_{\rho
,T}^{[0]}(t)$ be any bounded time-varying matrix on $[0,T]$. For $%
j=0,1,2,\ldots $, solve
\begingroup
\setlength{\abovedisplayskip}{6pt}
\setlength{\belowdisplayskip}{6pt}
\setlength{\abovedisplayshortskip}{6pt}
\setlength{\belowdisplayshortskip}{6pt}
\begin{align}
-\dot{\mathbf{P}}_{\rho ,T}^{[j]}(t)&=(\mathbf{A}_{\rho }-%
\mathbf{B}\mathbf{K}_{\rho ,T}^{[j]}(t))^{T}\mathbf{P}_{\rho ,T}^{[j]}(t)+%
\mathbf{P}_{\rho ,T}^{[j]}(t)(\mathbf{A}_{\rho }  \notag \\
&\!\!\!\!\!\!-\mathbf{B}\mathbf{K}_{\rho ,T}^{[j]}(t))+\mathbf{Q%
}+(\mathbf{K}_{\rho ,T}^{[j]}(t))^{T}\mathbf{R}\mathbf{K}_{\rho ,T}^{[j]}(t)
\label{12}
\end{align}%
\endgroup
with $\mathbf{P}_{\rho ,T}^{[j]}(T)=\mathbf{M}$, and update
\begingroup
\setlength{\abovedisplayskip}{6pt}
\setlength{\belowdisplayskip}{6pt}
\setlength{\abovedisplayshortskip}{6pt}
\setlength{\belowdisplayshortskip}{6pt}
\begin{equation}
\mathbf{K}_{\rho ,T}^{[j+1]}(t)=\mathbf{R}^{-1}\mathbf{B}^{T}\mathbf{P}%
_{\rho ,T}^{[j]}(t).  \label{13}
\end{equation}%
\endgroup
Because \eqref{12} is solved over a compact interval, the initial policy $%
\mathbf{K}_{\rho ,T}^{[0]}(t)$ need not stabilize the shifted system. As
will be shown below, the finite-horizon PI sequence converges to $\mathbf{K}%
_{\rho ,T}^{\ast }(t)$. After convergence, with $\bar{j}$ denoting the
terminal iteration index, the initializer is taken as $\mathbf{K}^{[0]}=%
\mathbf{K}_{\rho ,T}^{[\bar{j}]}(0)$. In the model-based setting,
admissibility can be checked directly by verifying
\begingroup
\setlength{\abovedisplayskip}{6pt}
\setlength{\belowdisplayskip}{6pt}
\setlength{\abovedisplayshortskip}{6pt}
\setlength{\belowdisplayshortskip}{6pt}
\begin{equation}
\max_{\lambda \in \sigma (\mathbf{A}-\mathbf{B}\mathbf{K}^{[0]})}\func{Re}%
(\lambda )<-\varepsilon .  \label{14}
\end{equation}%
\endgroup
If \eqref{14} holds, $\mathbf{K}^{[0]}$ is used to initialize Kleinman's
infinite-horizon PI \eqref{6}--\eqref{7} for the system \eqref{1}. Otherwise,
after the finite-horizon PI has converged for the current horizon, the
horizon is updated according to
\begingroup
\setlength{\abovedisplayskip}{6pt}
\setlength{\belowdisplayskip}{6pt}
\setlength{\abovedisplayshortskip}{6pt}
\setlength{\belowdisplayshortskip}{6pt}
\begin{equation}
T_{s+1}=\gamma T_{s},\qquad \gamma >1.  \label{15}
\end{equation}
\endgroup
\begin{algorithm}[t]
\caption{Model-Based Finite-Horizon Bootstrap Algorithm}
\renewcommand{\algorithmicrequire}{\textbf{Input:}}
\renewcommand{\algorithmicensure}{\textbf{Output:}}

\begin{algorithmic}[1]

\Require $\varepsilon \geq 0$, $\rho > \varepsilon$,
$\epsilon_{\mathrm{PI}} > 0$,
$T_0 > 0$, $\gamma > 1$, and
$\mathbf{M}=\mathbf{M}^{T}\geq\mathbf{0}$.

\Ensure An $\varepsilon$-admissible gain $\mathbf{K}^{[0]}$.

\State Set $s=0$ and $T_s=T_0$.

\Loop

    \State Choose a bounded policy
    $\mathbf{K}_{\rho,T_s}^{[0]}(t)$
    on $[0,T_s]$.

    \State Set $j=0$.

    \Loop

        \State Solve \eqref{12} on $[0,T_s]$.

        \State Update
        $\mathbf{K}_{\rho,T_s}^{[j+1]}(t)$
        according to \eqref{13}.

        \State Set $j \gets j+1$.

        \State Set
        $\mathbf{K}_{c}
        =
        \mathbf{K}_{\rho,T_s}^{[j]}(0)$.

        \If{
            $\displaystyle
            \max_{\lambda\in\sigma(
            \mathbf{A}-\mathbf{B}\mathbf{K}_{c})}
            \operatorname{Re}(\lambda)
            < -\varepsilon$
        }

            \State Set $\mathbf{K}^{[0]}=\mathbf{K}_{c}$.
            \State \textbf{return} $\mathbf{K}^{[0]}$.

        \EndIf

        \If{
            $\displaystyle
            \left\|
            \mathbf{K}_{\rho,T_s}^{[j]}
            -
            \mathbf{K}_{\rho,T_s}^{[j-1]}
            \right\|_{\infty,T_s}
            \leq \epsilon_{\mathrm{PI}}$
        }

            \State \textbf{break}.

        \EndIf

    \EndLoop

    \State Set
    $T_{s+1}=\gamma T_s$
    and $s \gets s+1$.

\EndLoop
\end{algorithmic}
\end{algorithm}
Algorithm 1 summarizes the model-based finite-horizon bootstrap procedure.
The following theorem justifies the algorithm by separating the
fixed-horizon finite-horizon PI property from the admissibility guarantee
obtained by enlarging the shifted horizon.

\begin{theorem}
Let $\rho >\varepsilon \geq 0$, $T>0$, and $\mathbf{M}=\mathbf{M}^{T}\geq
\mathbf{0}$. For any bounded piecewise-continuous initial policy $\mathbf{K}%
_{\rho ,T}^{[0]}(t)$ on $[0,T]$, generate $\mathbf{P}_{\rho ,T}^{[j]}(t)$
and $\mathbf{K}_{\rho ,T}^{[j+1]}(t)$ by \eqref{12}--\eqref{13}. Then, the
following statements hold.

\begin{enumerate}
\item For every $j\geq0$, $\mathbf{P}_{\rho,T}^{[j]}(t)$ is uniquely defined
on $[0,T]$.

\item The sequence satisfies
\begingroup
\setlength{\abovedisplayskip}{6pt}
\setlength{\belowdisplayskip}{6pt}
\setlength{\abovedisplayshortskip}{6pt}
\setlength{\belowdisplayshortskip}{6pt}
\begin{equation}
\mathbf{P}_{\rho ,T}^{\ast }(t)\leq \mathbf{P}_{\rho ,T}^{[j+1]}(t)\leq
\mathbf{P}_{\rho ,T}^{[j]}(t),\qquad \!\!\!\!t\in \lbrack 0,T],  \label{16}
\end{equation}%
\endgroup
and $\lim_{j\rightarrow \infty }\mathbf{P}_{\rho ,T}^{[j]}(t)=\mathbf{P}%
_{\rho ,T}^{\ast }(t),$ $\lim_{j\rightarrow \infty }\mathbf{K}_{\rho
,T}^{[j]}(t)=\mathbf{K}_{\rho ,T}^{\ast }(t)$ uniformly on $[0,T]$.

\item If Assumption~1 holds and $\mathbf{K}_{T}=\mathbf{K}_{\rho ,T}^{\ast
}(0)$, then $\lim_{T\rightarrow \infty }\mathbf{K}_{T}=\mathbf{K}_{\rho
}^{\ast }$. Moreover, there exists $T_{\varepsilon }>0$ such that, for all $%
T\geq T_{\varepsilon }$,
\begingroup
\setlength{\abovedisplayskip}{6pt}
\setlength{\belowdisplayskip}{6pt}
\setlength{\abovedisplayshortskip}{6pt}
\setlength{\belowdisplayshortskip}{6pt}
\begin{equation}
\alpha (\mathbf{A}-\mathbf{B}\mathbf{K}_{T})<-\varepsilon .  \label{17}
\end{equation}
\endgroup

\item If Assumption~1 holds, then for every $T\geq T_{\varepsilon }$, there
exists an integer $j_{\varepsilon }(T)$ such that
\begingroup
\setlength{\abovedisplayskip}{6pt}
\setlength{\belowdisplayskip}{6pt}
\setlength{\abovedisplayshortskip}{6pt}
\setlength{\belowdisplayshortskip}{6pt}
\begin{equation}
\alpha \!\left( \mathbf{A}-\mathbf{B}\mathbf{K}_{\rho ,T}^{[j]}(0)\right)
<-\varepsilon ,\qquad \forall j\geq j_{\varepsilon }(T).  \label{18}
\end{equation}
\endgroup
\end{enumerate}
\end{theorem}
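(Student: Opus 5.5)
Statements 1 and 2 are fixed-horizon facts that never use stability; statements 3 and 4 combine the classical finite-to-infinite-horizon Riccati convergence with continuity of the spectral abscissa.

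\emph{Statement 1.} Equation \eqref{12} is a linear matrix ODE with bounded, piecewise-continuous coefficients on the compact interval $[0,T]$. It therefore has a unique absolutely continuous solution, and no stability property of $\mathbf{A}_\rho-\mathbf{B}\mathbf{K}^{[j]}_{\rho,T}$ is needed. This solution is bounded, so $\mathbf{K}^{[j+1]}_{\rho,T}$ is bounded and continuous, and induction on $j$ gives the claim for every $j$.

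\emph{Statement 2.} I would use the standard completion-of-squares identity. Write $\Phi_j(t,\tau)$ for the transition matrix of $\mathbf{A}_\rho-\mathbf{B}\mathbf{K}^{[j]}_{\rho,T}$, and let $\Delta_j=\mathbf{P}^{[j]}_{\rho,T}-\mathbf{P}^{[j+1]}_{\rho,T}$. Subtracting the equations for consecutive iterates and substituting $\mathbf{K}^{[j+1]}=\mathbf{R}^{-1}\mathbf{B}^T\mathbf{P}^{[j]}$ gives a linear ODE
\[
-\dot\Delta_j=(\mathbf{A}_\rho-\mathbf{B}\mathbf{K}^{[j+1]})^T\Delta_j+\Delta_j(\mathbf{A}_\rho-\mathbf{B}\mathbf{K}^{[j+1]})+(\mathbf{K}^{[j]}-\mathbf{K}^{[j+1]})^T\mathbf{R}(\mathbf{K}^{[j]}-\mathbf{K}^{[j+1]}),
\]
with $\Delta_j(T)=\mathbf{0}$. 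Its variation-of-constants representation is an integral of a positive semidefinite forcing term, so $\Delta_j\ge\mathbf{0}$. The same computation applied to $\mathbf{P}^{[j+1]}-\mathbf{P}^*_{\rho,T}$, using the Riccati equation \eqref{10}, yields a nonnegative forcing $(\mathbf{K}^{[j+1]}-\mathbf{K}^*)^T\mathbf{R}(\mathbf{K}^{[j+1]}-\mathbf{K}^*)$ and hence the lower bound in \eqref{16}. Existence of $\mathbf{P}^*_{\rho,T}$ on all of $[0,T]$ is standard because $\mathbf{Q},\mathbf{M}\ge\mathbf{0}$.

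The sequence is therefore monotone and bounded below, so it converges pointwise to some $\bar{\mathbf{P}}$. The sequence $\mathbf{P}^{[j]}$ is uniformly bounded, being squeezed between $\mathbf{P}^*_{\rho,T}$ and $\mathbf{P}^{[1]}$. Hence the $\mathbf{K}^{[j]}$ are uniformly bounded for $j\ge1$, and so are the derivatives $\dot{\mathbf{P}}^{[j]}$. This equicontinuity lets Dini's theorem (or Arzel\`a--Ascoli) upgrade pointwise convergence to uniform convergence. Passing to the limit in the integral form of \eqref{12}--\eqref{13} shows that $\bar{\mathbf{P}}$ solves \eqref{10} with terminal value $\mathbf{M}$. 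Uniqueness of the Riccati solution then gives $\bar{\mathbf{P}}=\mathbf{P}^*_{\rho,T}$, and uniform convergence of the gains follows from \eqref{13}.

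\emph{Statement 3.} By time invariance, $\mathbf{P}^*_{\rho,T}(0)=\Pi(T)$, where $\Pi$ solves the forward Riccati ODE $\dot\Pi=\mathbf{A}_\rho^T\Pi+\Pi\mathbf{A}_\rho-\Pi\mathbf{B}\mathbf{R}^{-1}\mathbf{B}^T\Pi+\mathbf{Q}$ with $\Pi(0)=\mathbf{M}$. Under Assumption 1, the classical convergence theorem for Riccati differential equations applies (e.g., Kwakernaak--Sivan, or Callier--Willems, valid for every $\mathbf{M}\ge\mathbf{0}$ under stabilizability and detectability). It gives $\Pi(T)\to\mathbf{P}^*_\rho$, hence $\mathbf{K}_T\to\mathbf{K}^*_\rho$.

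This is the main obstacle. If a self-contained argument is wanted, I would treat the case $\mathbf{M}=\mathbf{0}$ first, where $\Pi$ is monotone nondecreasing and bounded by the cost of $\mathbf{K}^*_\rho$. Its limit is then a positive semidefinite ARE solution, and detectability identifies it as the stabilizing one. For general $\mathbf{M}$, I would use the difference $\Pi-\mathbf{P}^*_\rho$, which satisfies a Riccati equation driven by the Hurwitz matrix $\mathbf{A}_\rho-\mathbf{B}\mathbf{K}^*_\rho$, to show that the difference decays.

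Given this convergence, $\alpha(\mathbf{A}-\mathbf{B}\mathbf{K}^*_\rho)<-\rho<-\varepsilon$. Continuity of eigenvalues, hence of $\alpha$, in the matrix entries then gives a $\delta>0$ such that $\|\mathbf{K}-\mathbf{K}^*_\rho\|<\delta$ implies $\alpha(\mathbf{A}-\mathbf{B}\mathbf{K})<-\varepsilon$. Choosing $T_\varepsilon$ so that $\|\mathbf{K}_T-\mathbf{K}^*_\rho\|<\delta$ for all $T\ge T_\varepsilon$ yields \eqref{17}.

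\emph{Statement 4.} Fix $T\ge T_\varepsilon$. Since $\alpha(\mathbf{A}-\mathbf{B}\mathbf{K}_T)<-\varepsilon$ is a strict inequality and $\alpha$ is continuous, there is a neighborhood of $\mathbf{K}_T$ in which the bound persists. Statement 2 gives $\mathbf{K}^{[j]}_{\rho,T}(0)\to\mathbf{K}_T$, so the iterates eventually enter that neighborhood, which provides $j_\varepsilon(T)$ and proves \eqref{18}.
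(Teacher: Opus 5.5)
Your proposal is correct and follows essentially the same route as the paper. Both use a variation-of-constants/completion-of-squares argument for monotonicity, equicontinuity plus uniqueness of the Riccati solution for uniform convergence, the forward Riccati flow $\mathbf{P}^{\ast}_{\rho,T}(0)=\Pi(T)\to\mathbf{P}^{\ast}_{\rho}$ under Assumption~1, and continuity of the spectral abscissa for statements 3 and 4. The minor deviations are these: you obtain the lower bound in \eqref{16} from a second Lyapunov identity for $\mathbf{P}^{[j+1]}_{\rho,T}-\mathbf{P}^{\ast}_{\rho,T}$, with forcing $(\mathbf{K}^{[j+1]}_{\rho,T}-\mathbf{K}^{\ast}_{\rho,T})^{T}\mathbf{R}(\mathbf{K}^{[j+1]}_{\rho,T}-\mathbf{K}^{\ast}_{\rho,T})$, rather than the paper's optimality argument; and you sketch a self-contained justification of the Riccati-flow convergence, which the paper simply invokes.
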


\begin{proof}
(1) For each $j\geq 0$, define $\mathbf{A}^{[j]}(t)=\mathbf{A}_{\rho }-%
\mathbf{B}\mathbf{K}_{\rho ,T}^{[j]}(t)$ and $\mathbf{G}^{[j]}(t)=\mathbf{Q}%
+(\mathbf{K}_{\rho ,T}^{[j]}(t))^{T}\mathbf{R}\mathbf{K}_{\rho ,T}^{[j]}(t)$%
. Since $\mathbf{K}_{\rho ,T}^{[0]}$ is bounded and piecewise continuous on $%
[0,T]$, $\mathbf{A}^{[0]}(t)$ and $\mathbf{G}^{[0]}(t)$ are bounded on $%
[0,T] $. Therefore, the terminal-value Lyapunov equation \eqref{12} has a
unique symmetric solution on $[0,T]$. Let $\boldsymbol{\Phi }^{[j]}(\tau ,t)$
be the transition matrix generated by $\dot{z}(\tau )=\mathbf{A}^{[j]}(\tau
)z(\tau )$. The solution is
\begingroup
\setlength{\abovedisplayskip}{6pt}
\setlength{\belowdisplayskip}{6pt}
\setlength{\abovedisplayshortskip}{6pt}
\setlength{\belowdisplayshortskip}{6pt}
\begin{align}
\mathbf{P}_{\rho,T}^{[j]}(t)
={}&
\left(\boldsymbol{\Phi}^{[j]}(T,t)\right)^{T}
\mathbf{M}\boldsymbol{\Phi}^{[j]}(T,t)
\notag\\
&\!\!\!\!\!+
\int_{t}^{T}
\left(\boldsymbol{\Phi}^{[j]}(\tau,t)\right)^{T}
\mathbf{G}^{[j]}(\tau)
\boldsymbol{\Phi}^{[j]}(\tau,t)
\,\mathrm{d}\tau .
\label{19}
\end{align}
\endgroup
The right-hand side is finite for every $t\in \lbrack 0,T]$. Moreover, once $%
\mathbf{P}_{\rho ,T}^{[j]}$ is obtained, it is continuous on the compact
interval $[0,T]$ and hence bounded. Thus $\mathbf{K}_{\rho ,T}^{[j+1]}=%
\mathbf{R}^{-1}\mathbf{B}^{T}\mathbf{P}_{\rho ,T}^{[j]}$ is bounded. By
induction, the recursion is well defined for every $j\geq 0$.

(2) Consider the trajectory generated by the improved policy $u(t)=-\mathbf{K%
}_{\rho ,T}^{[j+1]}(t)x(t)$ for the shifted system. Along this trajectory,
\begingroup
\setlength{\abovedisplayskip}{6pt}
\setlength{\belowdisplayskip}{6pt}
\setlength{\abovedisplayshortskip}{6pt}
\setlength{\belowdisplayshortskip}{6pt}
\begin{align}
-\dot{\mathbf{P}}_{\rho,T}^{[j]}(t)
={}&
\bigl(\mathbf{A}^{[j+1]}(t)\bigr)^{T}
\mathbf{P}_{\rho,T}^{[j]}(t)
+\mathbf{P}_{\rho,T}^{[j]}(t)\mathbf{A}^{[j+1]}(t)
\notag\\
&+\mathbf{Q}
+\bigl(\mathbf{K}_{\rho,T}^{[j+1]}(t)\bigr)^{T}
\mathbf{R}\mathbf{K}_{\rho,T}^{[j+1]}(t)
\notag\\
&+\bigl(\Delta\mathbf{K}^{[j]}(t)\bigr)^{T}
\mathbf{R}\Delta\mathbf{K}^{[j]}(t).
\label{20}
\end{align}
\endgroup
where $\Delta \mathbf{K}^{[j]}(t)=\mathbf{K}_{\rho ,T}^{[j]}(t)-\mathbf{K}%
_{\rho ,T}^{[j+1]}(t)$. Subtracting the Lyapunov equation satisfied by $%
\mathbf{P}_{\rho ,T}^{[j+1]}$ gives
\begingroup
\setlength{\abovedisplayskip}{6pt}
\setlength{\belowdisplayskip}{6pt}
\setlength{\abovedisplayshortskip}{6pt}
\setlength{\belowdisplayshortskip}{6pt}
\begin{align}
-\dot{\mathbf{E}}^{[j]}(t)
={}&
\bigl(\mathbf{A}^{[j+1]}(t)\bigr)^{T}\mathbf{E}^{[j]}(t)
+\mathbf{E}^{[j]}(t)\mathbf{A}^{[j+1]}(t)
\notag\\
&\!\!\!\!\!\!\!\!+\bigl(\Delta\mathbf{K}^{[j]}(t)\bigr)^{T}
\mathbf{R}\Delta\mathbf{K}^{[j]}(t),
\qquad
\!\!\!\!\!\!\!\!\mathbf{E}^{[j]}(T)=\mathbf{0},
\label{21}
\end{align}
\endgroup
where $\mathbf{E}^{[j]}(t)=\mathbf{P}_{\rho ,T}^{[j]}(t)-\mathbf{P}_{\rho
,T}^{[j+1]}(t)$. Therefore,
\begingroup
\setlength{\abovedisplayskip}{6pt}
\setlength{\belowdisplayskip}{6pt}
\setlength{\abovedisplayshortskip}{6pt}
\setlength{\belowdisplayshortskip}{6pt}
\begin{align*}
\mathbf{E}^{[j]}(t)
={}&
\int_{t}^{T}
\left(\boldsymbol{\Phi}^{[j+1]}(\tau,t)\right)^{T}
\Delta\mathbf{K}^{[j]}(\tau)^{T} \\
&\times
\mathbf{R}\Delta\mathbf{K}^{[j]}(\tau)
\boldsymbol{\Phi}^{[j+1]}(\tau,t)
\,\mathrm{d}\tau
\geq \mathbf{0}.
\end{align*}
\endgroup
Thus $\mathbf{P}_{\rho ,T}^{[j+1]}(t)\leq \mathbf{P}_{\rho ,T}^{[j]}(t)$ for
$t\in \lbrack 0,T].$ Next, $\mathbf{P}_{\rho ,T}^{[j]}(t)$ is the
finite-horizon value matrix associated with the policy $u(t)=-\mathbf{K}%
_{\rho ,T}^{[j]}(t)x(t)$ for the shifted system. Since $\mathbf{P}_{\rho
,T}^{\ast }(t)$ is the optimal finite-horizon value matrix, we have $\mathbf{%
P}_{\rho ,T}^{\ast }(t)\leq \mathbf{P}_{\rho ,T}^{[j]}(t)$ for $t\in \lbrack
0,T]$. This proves the monotonicity relation. It remains to prove
convergence. Since $\mathbf{P}_{\rho ,T}^{\ast }(t)\leq \mathbf{P}_{\rho
,T}^{[j]}(t)\leq \mathbf{P}_{\rho ,T}^{[0]}(t)$, the sequence $\{\mathbf{P}%
_{\rho ,T}^{[j]}\}$ is uniformly bounded on $[0,T]$. Moreover, $\mathbf{K}%
_{\rho ,T}^{[j+1]}(t)=\mathbf{R}^{-1}\mathbf{B}^{T}\mathbf{P}_{\rho
,T}^{[j]}(t)$ implies that $\{\mathbf{K}_{\rho ,T}^{[j]}\}_{j\geq 1}$ is
also uniformly bounded on $[0,T]$. From \eqref{12}, the derivatives $\{\dot{%
\mathbf{P}}_{\rho ,T}^{[j]}\}_{j\geq 1}$ are uniformly bounded. Hence $\{%
\mathbf{P}_{\rho ,T}^{[j]}\}$ is equicontinuous on $[0,T]$. For each fixed $%
t\in \lbrack 0,T]$, the monotone bounded sequence $\{\mathbf{P}_{\rho
,T}^{[j]}(t)\}$ has a limit in the Loewner order $\mathbf{P}_{\infty }(t)$.
The equicontinuity on the compact interval $[0,T]$, together with the
pointwise convergence, implies that the convergence to $\mathbf{P}_{\infty }$
is uniform. Moreover, $\mathbf{K}_{\rho ,T}^{[j]}$ converges uniformly to $%
\mathbf{K}_{\infty }(t)=\mathbf{R}^{-1}\mathbf{B}^{T}\mathbf{P}_{\infty }(t)$%
. Passing to the limit in the integral form of \eqref{12} yields
\begingroup
\setlength{\abovedisplayskip}{6pt}
\setlength{\belowdisplayskip}{6pt}
\setlength{\abovedisplayshortskip}{6pt}
\setlength{\belowdisplayshortskip}{6pt}
\begin{align*}
-\dot{\mathbf{P}}_{\infty}(t)
={}&
\bigl(\mathbf{A}_{\rho}-\mathbf{B}\mathbf{K}_{\infty}(t)\bigr)^{T}
\mathbf{P}_{\infty}(t)
+\mathbf{P}_{\infty}(t) \\
&\times
\bigl(\mathbf{A}_{\rho}-\mathbf{B}\mathbf{K}_{\infty}(t)\bigr)
+\mathbf{Q}
+\mathbf{K}_{\infty}^{T}(t)\mathbf{R}\mathbf{K}_{\infty}(t),
\end{align*}
\endgroup
with $\mathbf{P}_{\infty }(T)=\mathbf{M}$, where $\mathbf{P}_{\infty }$
satisfies the finite-horizon Riccati equation \eqref{10}. By uniqueness of %
\eqref{10}, $\mathbf{P}_{\infty }(t)=\mathbf{P}_{\rho ,T}^{\ast }(t)$, $t\in
\lbrack 0,T]$. Therefore, $\mathbf{P}_{\rho ,T}^{[j]}$ and $\mathbf{K}_{\rho
,T}^{[j]}$ converge uniformly to $\mathbf{P}_{\rho ,T}^{\ast }$ and $\mathbf{%
K}_{\rho ,T}^{\ast }$, respectively, on $[0,T]$.

(3) Let $\mathbf{S}(\tau )$ be the solution of the forward Riccati flow
\begingroup
\setlength{\abovedisplayskip}{6pt}
\setlength{\belowdisplayskip}{6pt}
\setlength{\abovedisplayshortskip}{6pt}
\setlength{\belowdisplayshortskip}{6pt}
\begin{align}
\dot{\mathbf{S}}(\tau)
={}&
\mathbf{A}_{\rho}^{T}\mathbf{S}(\tau)
+\mathbf{S}(\tau)\mathbf{A}_{\rho}
\notag\\
&-\mathbf{S}(\tau)\mathbf{B}\mathbf{R}^{-1}
\mathbf{B}^{T}\mathbf{S}(\tau)
+\mathbf{Q},
\label{22}
\end{align}
\endgroup
with $\mathbf{S}(0)=\mathbf{M}$. By the time transformation $\tau =T-t$, the
finite-horizon Riccati solution satisfies $\mathbf{P}_{\rho ,T}^{\ast }(t)=%
\mathbf{S}(T-t)$, and in particular $\mathbf{P}_{\rho ,T}^{\ast }(0)=\mathbf{%
S}(T)$. Under Assumption~1, the continuous-time Riccati flow converges to
the stabilizing solution of the shifted algebraic Riccati equation \eqref{9}%
, that is, $\lim_{\tau \rightarrow \infty }\mathbf{S}(\tau )=\mathbf{P}%
_{\rho }^{\ast }.$ Therefore, $\lim_{T\rightarrow \infty }\mathbf{P}_{\rho
,T}^{\ast }(0)=\mathbf{P}_{\rho }^{\ast }$ and $\lim_{T\rightarrow \infty }%
\mathbf{K}_{\rho ,T}^{\ast }(0)=\mathbf{K}_{\rho }^{\ast }$. Since $\mathbf{A%
}_{\rho }-\mathbf{B}\mathbf{K}_{\rho }^{\ast }$ is Hurwitz,
\begingroup
\setlength{\abovedisplayskip}{6pt}
\setlength{\belowdisplayskip}{6pt}
\setlength{\abovedisplayshortskip}{6pt}
\setlength{\belowdisplayshortskip}{6pt}
\begin{equation*}
\alpha (\mathbf{A}-\mathbf{B}\mathbf{K}_{\rho }^{\ast })
=
\alpha (\mathbf{A}_{\rho }-\mathbf{B}\mathbf{K}_{\rho }^{\ast })
-\rho <-\rho <-\varepsilon .
\end{equation*}
\endgroup
Because the spectral abscissa is continuous with respect to the matrix
entries, and continuous with respect to $\mathbf{K}$, there exists a
neighborhood $\mathcal{N}$ of $\mathbf{K}_{\rho }^{\ast }$ such that $\alpha
(\mathbf{A}-\mathbf{B}\mathbf{K})<-\varepsilon $, $\forall \mathbf{K}\in
\mathcal{N}$. Since $\lim_{T\rightarrow \infty }\mathbf{K}_{\rho ,T}^{\ast
}(0)=\mathbf{K}_{\rho }^{\ast }$, there exists $T_{\varepsilon }>0$ such
that \eqref{17} holds.

(4) Fix any $T\geq T_{\varepsilon }$. From statement 3), $\alpha (\mathbf{A}-%
\mathbf{B}\mathbf{K}_{T})<-\varepsilon $. By continuity of the spectral
abscissa, there exists a neighborhood $\mathcal{N}_{T}$ of $\mathbf{K}_{T}$
such that $\alpha (\mathbf{A}-\mathbf{B}\mathbf{K})<-\varepsilon$, $\forall
\mathbf{K}\in \mathcal{N}_{T}$. From statement 2), $\lim_{j\rightarrow
\infty }\mathbf{K}_{\rho ,T}^{[j]}(0)=\mathbf{K}_{\rho ,T}^{\ast }(0)=%
\mathbf{K}_{T}$. Therefore, there exists an integer $j_{\varepsilon }(T)$
such that $\mathbf{K}_{\rho ,T}^{[j]}(0)\in \mathcal{N}_{T}$, $\forall j\geq
j_{\varepsilon }(T)$. This proves the theorem.
\end{proof}

The multiplicative update \eqref{15} gives $T_{s}=\gamma ^{s}T_{0}$. Since $%
\gamma >1$, $T_{s}\rightarrow \infty $ as $s\rightarrow \infty $. Theorem 2
implies that the horizon is sufficiently large if
\begingroup
\setlength{\abovedisplayskip}{6pt}
\setlength{\belowdisplayskip}{6pt}
\setlength{\abovedisplayshortskip}{6pt}
\setlength{\belowdisplayshortskip}{6pt}
\begin{equation}
s\geq s_{\varepsilon}
=
\max\left\{
0,
\left\lceil
\frac{\log(T_{\varepsilon}/T_{0})}{\log\gamma}
\right\rceil
\right\},
\label{23}
\end{equation}
\endgroup
then $T_{s}\geq T_{\varepsilon }$. Therefore, for every $s\geq
s_{\varepsilon }$, there exists $j_{\varepsilon }(T_{s})$ such that
\begingroup
\setlength{\abovedisplayskip}{6pt}
\setlength{\belowdisplayskip}{6pt}
\setlength{\abovedisplayshortskip}{6pt}
\setlength{\belowdisplayshortskip}{6pt}
\begin{equation}
\alpha \!\left(
\mathbf{A}-\mathbf{B}\mathbf{K}_{\rho,T_{s}}^{[j]}(0)
\right)
<-\varepsilon,
\qquad
\forall j\geq j_{\varepsilon}(T_{s}).
\label{24}
\end{equation}
\endgroup
Thus, the finite-horizon bootstrap mechanism ensures that an $\varepsilon$%
-admissible initializer can be obtained after a sufficiently large horizon
and sufficiently many finite-horizon PI iterations. In the model-based
algorithm, this admissibility is verified directly by the spectral test %
\eqref{14}.

\section{Data-driven finite-horizon bootstrap algorithm}

This section develops a data-driven realization of the finite-horizon PI stage in Algorithm 1. The data are collected from the original system $\dot{x}%
(t)=\mathbf{A}x(t)+\mathbf{B}u_{b}(t)$, where $u_{b}(t)$ is a behavior input.

For the current policy $u(t)=-\mathbf{K}_{\rho ,T}^{[j]}(t)x(t)$, define $%
\mathbf{Q}_{\rho ,T}^{[j]}(t)=\mathbf{Q}+(\mathbf{K}_{\rho ,T}^{[j]}(t))^{T}%
\mathbf{R}\mathbf{K}_{\rho ,T}^{[j]}(t)$ and $\mathbf{z}^{[j]}(t)=u_{b}(t)+%
\mathbf{K}_{\rho ,T}^{[j]}(t)x(t).$ The signal $\mathbf{z}^{[j]}(t)$ is the
off-policy correction term. Let $V^{[j]}(x,t)=x^{T}\mathbf{P}_{\rho
,T}^{[j]}(t)x$. Along the behavior trajectory, one obtains
\begingroup
\setlength{\abovedisplayskip}{6pt}
\setlength{\belowdisplayskip}{6pt}
\setlength{\abovedisplayshortskip}{6pt}
\setlength{\belowdisplayshortskip}{6pt}
\begin{align}
&\frac{\mathrm{d}}{\mathrm{d}t}
\left[
x^{T}(t)\mathbf{P}_{\rho,T}^{[j]}(t)x(t)
\right]
\notag\\
={}&
-x^{T}(t)\mathbf{Q}_{\rho,T}^{[j]}(t)x(t)
-2\rho x^{T}(t)\mathbf{P}_{\rho,T}^{[j]}(t)x(t)
\notag\\
&+2\left(
u_{b}(t)+\mathbf{K}_{\rho,T}^{[j]}(t)x(t)
\right)^{T}
\mathbf{R}\mathbf{K}_{\rho,T}^{[j+1]}(t)x(t).
\label{25}
\end{align}
\endgroup
Integrating \eqref{25} over each stored interval
$I_k=[t_k,t_{k+1}]\subset[0,T]$, $k=1,\ldots,N_d$, gives
\begingroup
\setlength{\abovedisplayskip}{6pt}
\setlength{\belowdisplayskip}{6pt}
\setlength{\abovedisplayshortskip}{6pt}
\setlength{\belowdisplayshortskip}{6pt}
\setlength{\jot}{2pt}
\begin{align}
&x^{T}(t_{k+1})\mathbf{P}_{\rho,T}^{[j]}(t_{k+1})x(t_{k+1})
-x^{T}(t_{k})\mathbf{P}_{\rho,T}^{[j]}(t_{k})x(t_{k})
\notag\\
={}&-\int_{t_{k}}^{t_{k+1}}
x^{T}(t)\mathbf{Q}_{\rho,T}^{[j]}(t)x(t)\,\mathrm{d}t
\notag\\
&+2\int_{t_{k}}^{t_{k+1}}
\bigl(u_{b}(t)+\mathbf{K}_{\rho,T}^{[j]}(t)x(t)\bigr)^{T}
\mathbf{R}\mathbf{K}_{\rho,T}^{[j+1]}(t)x(t)\,\mathrm{d}t
\notag\\
&-2\rho\int_{t_{k}}^{t_{k+1}}
x^{T}(t)\mathbf{P}_{\rho,T}^{[j]}(t)x(t)\,\mathrm{d}t.
\label{26}
\end{align}
\endgroup

\begin{remark}
Although the finite-horizon PI is formulated for the shifted matrix
$\mathbf{A}_\rho$, no trajectory of the shifted system is required. The
off-policy identity \eqref{26} is evaluated along the behavior trajectory of the
original plant \eqref{1}, and the effect of the shift is represented explicitly by the
term involving $\rho$. Therefore, when the horizon is enlarged from $T_s$ to
$T_{s+1}$, one only needs behavior data of the original system over the
enlarged interval.
\end{remark}
Choose differentiable basis functions $\phi _{\ell }(t)$, $\ell =1,\ldots
,N_{p}$ with $\phi _{\ell }(T)=0$ and basis functions $\psi _{\ell }(t)$, $%
\ell =1,\ldots ,N_{k}$. Approximate $\mathbf{P}_{\rho,T}^{[j]}(t)$ and $\mathbf{K}_{\rho,T}^{[j+1]}(t)$ as
\begingroup
\setlength{\abovedisplayskip}{6pt}
\setlength{\belowdisplayskip}{6pt}
\setlength{\abovedisplayshortskip}{6pt}
\setlength{\belowdisplayshortskip}{6pt}
\begin{equation}
\mathbf{P}_{\rho,T}^{[j]}(t)
=
\mathbf{M}
+\sum_{\ell=1}^{N_{p}}\phi_{\ell}(t)\mathbf{P}_{\ell}^{[j]}
+\Delta_{P}^{[j]}(t),
\label{27}
\end{equation}
\endgroup
where $\mathbf{P}_{\ell }^{[j]}=(\mathbf{P}_{\ell }^{[j]})^{T}$, and
\begingroup
\setlength{\abovedisplayskip}{6pt}
\setlength{\belowdisplayskip}{6pt}
\setlength{\abovedisplayshortskip}{6pt}
\setlength{\belowdisplayshortskip}{6pt}
\begin{equation}
\mathbf{K}_{\rho,T}^{[j+1]}(t)
=
\sum_{\ell=1}^{N_{k}}\psi_{\ell}(t)\mathbf{K}_{\ell}^{[j+1]}
+\Delta_{K}^{[j+1]}(t),
\label{28}
\end{equation}
\endgroup
where $\Delta _{P}^{[j]}(t)$ and $\Delta _{K}^{[j+1]}(t)$ denote the
approximation residuals induced by the chosen basis functions.
\begin{remark}
The expansions \eqref{27} and \eqref{28} are introduced to turn the
time-varying unknowns $\mathbf{P}_{\rho,T}^{[j]}(t)$ and $\mathbf{K}_{\rho,T}^{[j+1]}(t)$ in \eqref{26} into a finite set of coefficient matrices,
so that the off-policy identity  \eqref{26} can be solved from data as the linear
regression. In \eqref{27}, the term $\mathbf{M}$ and the condition
$\phi_\ell(T)=0$ enforce the terminal constraint
$\mathbf{P}_{\rho ,T}^{[j]}(T)=\mathbf{M}$ by construction. 
\end{remark}

To facilitate the solution of \eqref{26} using the representations in \eqref{27}-\eqref{28}, we define the following interval-wise quantities for each stored interval $I_{k}$
\begingroup
\setlength{\abovedisplayskip}{6pt}
\setlength{\belowdisplayskip}{6pt}
\setlength{\abovedisplayshortskip}{6pt}
\setlength{\belowdisplayshortskip}{6pt}
\setlength{\jot}{2pt}
\begin{align*}
\boldsymbol{\omega}_{P,k,\ell}
={}&
\phi_{\ell}(t_{k+1})
\mathrm{vecv}^{T}\!\bigl(x(t_{k+1})\bigr)
-\phi_{\ell}(t_{k})
\mathrm{vecv}^{T}\!\bigl(x(t_{k})\bigr)
\\
&+2\rho\int_{t_{k}}^{t_{k+1}}
\phi_{\ell}(t)\mathrm{vecv}^{T}\!\bigl(x(t)\bigr)
\,\mathrm{d}t,
\\[1mm]
\boldsymbol{\omega}_{K,k,\ell}^{[j]}
={}&
-2\int_{t_{k}}^{t_{k+1}}
\psi_{\ell}(t)
\left[
x^{T}(t)\otimes
\bigl(\mathbf{z}^{[j]}(t)\bigr)^{T}\mathbf{R}
\right]
\,\mathrm{d}t,
\\[1mm]
b_{k}^{[j]}
={}&
-\int_{t_{k}}^{t_{k+1}}
x^{T}(t)\mathbf{Q}_{\rho,T}^{[j]}(t)x(t)
\,\mathrm{d}t
\\
&-\left[
x^{T}(t_{k+1})\mathbf{M}x(t_{k+1})
-x^{T}(t_{k})\mathbf{M}x(t_{k})
\right]
\\
&-2\rho\int_{t_{k}}^{t_{k+1}}
x^{T}(t)\mathbf{M}x(t)
\,\mathrm{d}t,
\\[1mm]
r_{k}^{[j]}
={}&
-\left[
x^{T}(t_{k+1})
\Delta_{P}^{[j]}(t_{k+1})
x(t_{k+1})
\right.
\\
&\left.\qquad
-x^{T}(t_{k})
\Delta_{P}^{[j]}(t_{k})
x(t_{k})
\right]
\\
&-2\rho\int_{t_{k}}^{t_{k+1}}
x^{T}(t)\Delta_{P}^{[j]}(t)x(t)
\,\mathrm{d}t
\\
&+2\int_{t_{k}}^{t_{k+1}}
\bigl(\mathbf{z}^{[j]}(t)\bigr)^{T}
\mathbf{R}\Delta_{K}^{[j+1]}(t)x(t)
\,\mathrm{d}t.
\end{align*}
\endgroup
Then \eqref{26} is equivalent to
\begingroup
\setlength{\abovedisplayskip}{6pt}
\setlength{\belowdisplayskip}{6pt}
\setlength{\abovedisplayshortskip}{6pt}
\setlength{\belowdisplayshortskip}{6pt}
\begin{equation}
\boldsymbol{\omega}_{k}^{[j]}\theta^{[j]}
=
b_{k}^{[j]}+r_{k}^{[j]},
\label{209}
\end{equation}
\endgroup
where
\begingroup
\setlength{\abovedisplayskip}{6pt}
\setlength{\belowdisplayskip}{6pt}
\setlength{\abovedisplayshortskip}{6pt}
\setlength{\belowdisplayshortskip}{6pt}
\setlength{\jot}{2pt}
\begin{align*}
\boldsymbol{\omega}_{k}^{[j]}
={}&
\bigl[
\boldsymbol{\omega}_{P,k,1},\ldots,
\boldsymbol{\omega}_{P,k,N_{p}},
\\
&\qquad
\boldsymbol{\omega}_{K,k,1}^{[j]},\ldots,
\boldsymbol{\omega}_{K,k,N_{k}}^{[j]}
\bigr],
\\[1mm]
\theta^{[j]}
={}&
\operatorname{col}\bigl(
\operatorname{vecs}(\mathbf{P}_{1}^{[j]}),\ldots,
\operatorname{vecs}(\mathbf{P}_{N_{p}}^{[j]}),
\\
&\qquad
\operatorname{vec}(\mathbf{K}_{1}^{[j+1]}),\ldots,
\operatorname{vec}(\mathbf{K}_{N_{k}}^{[j+1]})
\bigr).
\end{align*}
\endgroup
Stacking all stored intervals gives
\begingroup
\setlength{\abovedisplayskip}{6pt}
\setlength{\belowdisplayskip}{6pt}
\setlength{\abovedisplayshortskip}{6pt}
\setlength{\belowdisplayshortskip}{6pt}
\begin{equation}
\boldsymbol{\Omega}_{\rho,T}^{[j]}\theta^{[j]}
=
\mathbf{b}_{\rho,T}^{[j]}
+\mathbf{r}_{\rho,T}^{[j]},
\label{29}
\end{equation}
\endgroup
where
\begingroup
\setlength{\abovedisplayskip}{3pt}
\setlength{\belowdisplayskip}{3pt}
\begin{align*}
\boldsymbol{\Omega}_{\rho,T}^{[j]}
&=
\operatorname{col}\left(
\boldsymbol{\omega}_{1}^{[j]},\ldots,
\boldsymbol{\omega}_{N_{d}}^{[j]}
\right),\\
\mathbf{b}_{\rho,T}^{[j]}
&=
\operatorname{col}\left(
b_{1}^{[j]},\ldots,b_{N_{d}}^{[j]}
\right),\\
\mathbf{r}_{\rho,T}^{[j]}
&=
\operatorname{col}\left(
r_{1}^{[j]},\ldots,r_{N_{d}}^{[j]}
\right),
\end{align*}
\endgroup
where
$\boldsymbol{\Omega}_{\rho,T}^{[j]}
\in\mathbb{R}^{N_{d}\times N_{\theta}}$,
$\theta^{[j]}\in\mathbb{R}^{N_{\theta}}$,
and
$\mathbf{b}_{\rho,T}^{[j]}\in\mathbb{R}^{N_{d}}$,
with
$N_{\theta}
=
N_{p}\frac{n(n+1)}{2}+N_{k}mn$. The residual $%
\mathbf{r}_{\rho,T}^{[j]}$ is induced by the approximation errors associated
with the selected basis functions. The following lemma provides an explicit
upper bound on this term.

\begin{lemma}
Fix $T>0$ and $j\geq 0$. Define $\delta _{P}^{[j]}=\sup_{t\in \lbrack
0,T]}\Vert \Delta _{P}^{[j]}(t)\Vert _{F}$, $\delta _{K}^{[j+1]}=\sup_{t\in
\lbrack 0,T]}\Vert \Delta _{K}^{[j+1]}(t)\Vert _{F}$, $X_{T}=\sup_{t\in
\lbrack 0,T]}\Vert x(t)\Vert _{2}$, $Z_{T}^{[j]}=\sup_{t\in \lbrack
0,T]}\Vert \mathbf{z}^{[j]}(t)\Vert _{2}$, and let $h_{\max }=\max_{1\leq
k\leq N_{d}}(t_{k+1}-t_{k})$. Then, for $k=1,\ldots ,N_{d}$, $%
|r_{k}^{[j]}|\leq \eta _{T}^{[j]}$ and
\begingroup
\setlength{\abovedisplayskip}{6pt}
\setlength{\belowdisplayskip}{6pt}
\setlength{\abovedisplayshortskip}{6pt}
\setlength{\belowdisplayshortskip}{6pt}
\begin{equation}
\left\|\mathbf{r}_{\rho,T}^{[j]}\right\|_{2}
\leq
\sqrt{N_{d}}\,\eta_{T}^{[j]},
\label{30}
\end{equation}
\endgroup
where $\eta _{T}^{[j]}=\eta _{P}^{[j]}+\eta _{K}^{[j+1]}$ with
\begingroup
\setlength{\abovedisplayskip}{6pt}
\setlength{\belowdisplayskip}{6pt}
\setlength{\abovedisplayshortskip}{6pt}
\setlength{\belowdisplayshortskip}{6pt}
\setlength{\jot}{2pt}
\begin{align*}
\eta_{P}^{[j]}
&=
2(1+\rho h_{\max})X_{T}^{2}\delta_{P}^{[j]},\\
\eta_{K}^{[j+1]}
&=
2h_{\max}Z_{T}^{[j]}
\left\|\mathbf{R}\right\|_{2}
X_{T}\delta_{K}^{[j+1]}.
\end{align*}
\endgroup
\end{lemma}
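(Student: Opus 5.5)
The bound follows by estimating the three groups of terms in the definition of $r_{k}^{[j]}$ separately, using only the suprema $X_T$, $Z_T^{[j]}$, $\delta_P^{[j]}$, $\delta_K^{[j+1]}$ and the interval length $t_{k+1}-t_k\le h_{\max}$. The vector bound \eqref{30} then follows from $\|\mathbf r\|_2^2=\sum_k |r_k^{[j]}|^2\le N_d(\eta_T^{[j]})^2$.

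\textbf{Step 1 (boundary terms).} For any $t\in[0,T]$ and any matrix $\Delta$,
\[
|x^T(t)\Delta x(t)|\le \|\Delta\|_2\|x(t)\|_2^2\le \|\Delta\|_F\|x(t)\|_2^2 .
\]
Hence each of the two endpoint quadratic forms in $r_k^{[j]}$ is bounded by $X_T^2\delta_P^{[j]}$, since $t_k,t_{k+1}\in[0,T]$. The triangle inequality then bounds the bracketed difference by $2X_T^2\delta_P^{[j]}$.

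\textbf{Step 2 (the $\rho$-integral).} The same pointwise estimate gives
\[
\Bigl|2\rho\int_{t_k}^{t_{k+1}}x^T\Delta_P^{[j]}x\,\mathrm dt\Bigr|\le 2\rho h_{\max}X_T^2\delta_P^{[j]}.
\]
Adding this to Step 1 yields exactly $\eta_P^{[j]}=2(1+\rho h_{\max})X_T^2\delta_P^{[j]}$.

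\textbf{Step 3 (the policy-residual integral).} By Cauchy--Schwarz and submultiplicativity,
\[
|(\mathbf z^{[j]})^T\mathbf R\Delta_K^{[j+1]}x|\le \|\mathbf z^{[j]}\|_2\|\mathbf R\|_2\|\Delta_K^{[j+1]}\|_2\|x\|_2\le Z_T^{[j]}\|\mathbf R\|_2\delta_K^{[j+1]}X_T .
\]
Here we again use $\|\cdot\|_2\le\|\cdot\|_F$. Integrating over an interval of length at most $h_{\max}$ and multiplying by $2$ gives $\eta_K^{[j+1]}$.

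Summing Steps 1--3 gives $|r_k^{[j]}|\le\eta_T^{[j]}$ for every $k$, and the $\ell_2$ bound follows as noted above.

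There is no real obstacle. The only point needing care is that the suprema are finite. $X_T<\infty$ because the trajectory is continuous on the compact interval $[0,T]$. $Z_T^{[j]}<\infty$ because $u_b$ and $\mathbf K_{\rho,T}^{[j]}$ are bounded there, by Theorem 2(1). Finally, $\delta_P^{[j]}$ and $\delta_K^{[j+1]}$ are finite whenever the basis approximations are bounded; they are taken as given in the statement.
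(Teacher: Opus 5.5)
Your proposal is correct and follows the paper's proof essentially step for step. You bound the endpoint terms by $2X_T^2\delta_P^{[j]}$ using $\|\cdot\|_2\le\|\cdot\|_F$, bound the two integral terms by $h_{\max}$ times the pointwise estimate, and obtain the $\sqrt{N_d}$ factor by stacking the $N_d$ per-interval bounds.
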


\begin{proof}
From the definition of $r_{k}^{[j]}$, the endpoint term satisfies
\begingroup
\setlength{\abovedisplayskip}{6pt}
\setlength{\belowdisplayskip}{6pt}
\setlength{\abovedisplayshortskip}{6pt}
\setlength{\belowdisplayshortskip}{6pt}
\begin{align}
&\left\vert x^{T}(t_{k+1})\Delta
_{P}^{[j]}(t_{k+1})x(t_{k+1})-x^{T}(t_{k})\Delta
_{P}^{[j]}(t_{k})x(t_{k})\right\vert
\notag \\
&\leq \Vert x(t_{k+1})\Vert _{2}^{2}\Vert \Delta _{P}^{[j]}(t_{k+1})\Vert
_{2}+\Vert x(t_{k})\Vert _{2}^{2}\Vert \Delta _{P}^{[j]}(t_{k})\Vert _{2}
\notag \\
&\leq 2X_{T}^{2}\delta _{P}^{[j]}.
\label{31}
\end{align}
\endgroup
Similarly, the integral terms containing $\Delta _{P}^{[j]}(t)$ and $\Delta
_{K}^{[j+1]}(t)$ are bounded, respectively, by $2\rho h_{\max
}X_{T}^{2}\delta _{P}^{[j]}$ and $2h_{\max }Z_{T}^{[j]}\Vert \mathbf{R}\Vert
_{2}X_{T}\delta _{K}^{[j+1]}$. The proof is complete.
\end{proof}

\begin{lemma}
Fix a horizon $T>0$ and an iteration index $j\geq 0$. Let $\mathbf{P}_{\rho
,T}^{[j]}(t)$ be the exact policy-evaluation solution associated with the
current data-driven policy $\widehat{\mathbf{K}}_{\rho ,T}^{[j]}(t)$, and
define $\mathbf{K}_{\rho ,T}^{[j+1]}(t)=\mathbf{R}^{-1}\mathbf{B}^{T}\mathbf{%
P}_{\rho ,T}^{[j]}(t)$. Let $\theta _{0}^{[j]}$ collect the coefficient
matrices $\mathbf{P}_{\ell }^{[j]}$ and $\mathbf{K}_{\ell }^{[j+1]}$ in the
decompositions \eqref{27}-\eqref{28}. Let $\widehat{\theta }^{[j]}=\left(
\boldsymbol{\Omega }_{\rho ,T}^{[j]}\right) ^{\dagger }\mathbf{b}_{\rho
,T}^{[j]}$, and let $\widehat{\mathbf{K}}_{\rho ,T}^{[j+1]}(t)$ be
reconstructed from the gain-coefficient blocks of $\widehat{\theta }^{[j]}$.
If $\func{rank}\left( \boldsymbol{\Omega }_{\rho ,T}^{[j]}\right) =N_{\theta
}$, then
\begingroup
\setlength{\abovedisplayskip}{6pt}
\setlength{\belowdisplayskip}{6pt}
\setlength{\abovedisplayshortskip}{6pt}
\setlength{\belowdisplayshortskip}{6pt}
\begin{equation}
\widehat{\mathbf{K}}_{\rho,T}^{[j+1]}(t)
=
\mathbf{K}_{\rho,T}^{[j+1]}(t)
+\mathcal{E}_{T}^{[j]}(t),
\label{32}
\end{equation}
\endgroup
where
\begingroup
\setlength{\abovedisplayskip}{6pt}
\setlength{\belowdisplayskip}{6pt}
\setlength{\abovedisplayshortskip}{6pt}
\setlength{\belowdisplayshortskip}{6pt}
\begin{equation}
\mathcal{E}_{T}^{[j]}(t)
=
-\sum_{\ell=1}^{N_{k}}
\psi_{\ell}(t)\mathbf{D}_{K,\ell}^{[j]}
-\Delta_{K}^{[j+1]}(t),
\label{33}
\end{equation}
\endgroup
where $\mathbf{D}_{K,\ell }^{[j]}\in \mathbb{R}^{m\times n}$ are the gain
coefficient blocks of $\left( \boldsymbol{\Omega }_{\rho ,T}^{[j]}\right)
^{\dagger }\mathbf{r}_{\rho ,T}^{[j]}.$ Moreover,
\begingroup
\setlength{\abovedisplayskip}{6pt}
\setlength{\belowdisplayskip}{6pt}
\setlength{\abovedisplayshortskip}{6pt}
\setlength{\belowdisplayshortskip}{6pt}
\begin{equation}
\left\|\mathcal{E}_{T}^{[j]}\right\|_{\infty,T}
\leq
\frac{c_{\psi,T}\sqrt{N_{d}}\,\eta_{T}^{[j]}}
{\sigma_{\min}\!\left(\boldsymbol{\Omega}_{\rho,T}^{[j]}\right)}
+\delta_{K}^{[j+1]},
\label{34}
\end{equation}
\endgroup
where $c_{\psi ,T}=\sup_{t\in \lbrack 0,T]}\left( \sum_{\ell =1}^{N_{k}}\psi
_{\ell }^{2}(t)\right) ^{1/2}$.
\end{lemma}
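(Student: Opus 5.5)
The plan is to read the identity \eqref{29} with the true coefficient vector and then exploit the fact that a full-column-rank least-squares operator is linear. Because $\mathbf{P}_{\rho,T}^{[j]}$ is the exact evaluation of the current policy $\widehat{\mathbf{K}}_{\rho,T}^{[j]}$ and $\mathbf{K}_{\rho,T}^{[j+1]}=\mathbf{R}^{-1}\mathbf{B}^{T}\mathbf{P}_{\rho,T}^{[j]}$, the off-policy identity \eqref{26} holds exactly along the behavior trajectory. Here the policy entering $\mathbf{z}^{[j]}$ and $\mathbf{Q}_{\rho,T}^{[j]}$ is $\widehat{\mathbf{K}}_{\rho,T}^{[j]}$.

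Substituting the decompositions \eqref{27}--\eqref{28}, with their true coefficients $\theta_0^{[j]}$ and residuals $\Delta_P^{[j]}$, $\Delta_K^{[j+1]}$, gives
\[
\boldsymbol{\Omega}_{\rho,T}^{[j]}\theta_0^{[j]}=\mathbf{b}_{\rho,T}^{[j]}+\mathbf{r}_{\rho,T}^{[j]}.
\]
Since $\operatorname{rank}(\boldsymbol{\Omega}_{\rho,T}^{[j]})=N_\theta$, we have $(\boldsymbol{\Omega}_{\rho,T}^{[j]})^{\dagger}\boldsymbol{\Omega}_{\rho,T}^{[j]}=I$. Applying the pseudoinverse to both sides therefore yields
\[
\widehat\theta^{[j]}=\theta_0^{[j]}-\bigl(\boldsymbol{\Omega}_{\rho,T}^{[j]}\bigr)^{\dagger}\mathbf{r}_{\rho,T}^{[j]}.
\]
We then extract the gain-coefficient blocks. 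These satisfy $\widehat{\mathbf{K}}_\ell^{[j+1]}=\mathbf{K}_\ell^{[j+1]}-\mathbf{D}_{K,\ell}^{[j]}$, where the $\mathbf{D}_{K,\ell}^{[j]}$ are the corresponding blocks of $(\boldsymbol{\Omega}_{\rho,T}^{[j]})^{\dagger}\mathbf{r}_{\rho,T}^{[j]}$. Reconstructing $\widehat{\mathbf{K}}_{\rho,T}^{[j+1]}(t)=\sum_\ell\psi_\ell(t)\widehat{\mathbf{K}}_\ell^{[j+1]}$ and subtracting \eqref{28} gives exactly \eqref{32}--\eqref{33}.

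For the bound \eqref{34}, we first apply the triangle inequality to \eqref{33} and bound the residual term by $\delta_K^{[j+1]}$. For the first term, Cauchy--Schwarz at each fixed $t$ gives
\[
\Bigl\|\sum_\ell\psi_\ell(t)\mathbf{D}_{K,\ell}^{[j]}\Bigr\|_F\le\Bigl(\sum_\ell\psi_\ell^2(t)\Bigr)^{1/2}\Bigl(\sum_\ell\|\mathbf{D}_{K,\ell}^{[j]}\|_F^2\Bigr)^{1/2}.
\]
The first factor is at most $c_{\psi,T}$. The second factor equals the Euclidean norm of the gain sub-vector, since $\operatorname{vec}$ preserves the Frobenius norm. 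That sub-vector's norm is at most $\|(\boldsymbol{\Omega}_{\rho,T}^{[j]})^{\dagger}\mathbf{r}_{\rho,T}^{[j]}\|_2$. Under full column rank, $\|(\boldsymbol{\Omega}_{\rho,T}^{[j]})^{\dagger}\|_2=1/\sigma_{\min}(\boldsymbol{\Omega}_{\rho,T}^{[j]})$. Combining this with Lemma 2, $\|\mathbf{r}_{\rho,T}^{[j]}\|_2\le\sqrt{N_d}\,\eta_T^{[j]}$, and taking the supremum over $t\in[0,T]$ gives \eqref{34}.

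The main point requiring care is the first step: checking that \eqref{26}, and hence \eqref{29}, holds with the data-driven policy $\widehat{\mathbf{K}}_{\rho,T}^{[j]}$ in place of the exact iterate. This needs $\mathbf{P}_{\rho,T}^{[j]}$ to solve \eqref{12} with that policy, which is precisely the lemma's hypothesis. The derivation of \eqref{25} then goes through unchanged, because it only uses $\dot x=\mathbf{A}x+\mathbf{B}u_b$, $\mathbf{A}=\mathbf{A}_\rho-\rho I$, and $\mathbf{B}^T\mathbf{P}_{\rho,T}^{[j]}=\mathbf{R}\mathbf{K}_{\rho,T}^{[j+1]}$. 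The remaining steps are routine linear algebra.
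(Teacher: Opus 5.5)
Your proposal is correct and follows the paper's proof essentially step for step. You use the exact-coefficient identity $\boldsymbol{\Omega}_{\rho,T}^{[j]}\theta_0^{[j]}=\mathbf{b}_{\rho,T}^{[j]}+\mathbf{r}_{\rho,T}^{[j]}$, full column rank to get $\widehat{\theta}^{[j]}-\theta_0^{[j]}=-(\boldsymbol{\Omega}_{\rho,T}^{[j]})^{\dagger}\mathbf{r}_{\rho,T}^{[j]}$, block extraction for \eqref{32}--\eqref{33}, and Cauchy--Schwarz with $\|(\boldsymbol{\Omega}_{\rho,T}^{[j]})^{\dagger}\|_2=1/\sigma_{\min}(\boldsymbol{\Omega}_{\rho,T}^{[j]})$ and the residual bound for \eqref{34}; your added check that \eqref{25}--\eqref{26} hold with the data-driven policy $\widehat{\mathbf{K}}_{\rho,T}^{[j]}$ is a point the paper leaves implicit, and the residual bound you cite as ``Lemma 2'' is the paper's Lemma 3.
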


\begin{proof}
By the definition of $\theta _{0}^{[j]}$, the exact coefficients satisfy $%
\boldsymbol{\Omega }_{\rho ,T}^{[j]}\theta _{0}^{[j]}=\mathbf{b}_{\rho
,T}^{[j]}+\mathbf{r}_{\rho ,T}^{[j]}.$ Since $\boldsymbol{\Omega }_{\rho
,T}^{[j]}$ has full column rank, we have $\left( \boldsymbol{\Omega }_{\rho
,T}^{[j]}\right) ^{\dagger }\boldsymbol{\Omega }_{\rho
,T}^{[j]}=I_{N_{\theta }}$. Therefore,
\begingroup
\setlength{\abovedisplayskip}{6pt}
\setlength{\belowdisplayskip}{6pt}
\setlength{\abovedisplayshortskip}{6pt}
\setlength{\belowdisplayshortskip}{6pt}
\begin{align}
\widehat{\theta}^{[j]}-\theta_{0}^{[j]}
&=
\left(\boldsymbol{\Omega}_{\rho,T}^{[j]}\right)^{\dagger}
\left(
\boldsymbol{\Omega}_{\rho,T}^{[j]}\theta_{0}^{[j]}
-\mathbf{r}_{\rho,T}^{[j]}
\right)
-\theta_{0}^{[j]}
\notag\\
&=
-\left(\boldsymbol{\Omega}_{\rho,T}^{[j]}\right)^{\dagger}
\mathbf{r}_{\rho,T}^{[j]}.
\label{35}
\end{align}
\endgroup
Let $\widehat{\mathbf{K}}_{\ell }^{[j+1]}$ and $\mathbf{K}_{\ell }^{[j+1]}$
denote the gain-coefficient blocks of $\widehat{\theta }^{[j]}$ and $\theta
_{0}^{[j]}$, respectively. Then
\begingroup
\setlength{\abovedisplayskip}{6pt}
\setlength{\belowdisplayskip}{6pt}
\setlength{\abovedisplayshortskip}{6pt}
\setlength{\belowdisplayshortskip}{6pt}
\begin{equation}
\widehat{\mathbf{K}}_{\ell}^{[j+1]}
-\mathbf{K}_{\ell}^{[j+1]}
=
-\mathbf{D}_{K,\ell}^{[j]},
\qquad
\ell=1,\ldots,N_{k}.
\label{36}
\end{equation}
\endgroup
Using $\widehat{\mathbf{K}}_{\rho ,T}^{[j+1]}(t)=\sum_{\ell =1}^{N_{k}}\psi
_{\ell }(t)\widehat{\mathbf{K}}_{\ell }^{[j+1]}$ and \eqref{28}, we obtain %
\eqref{32} and \eqref{33}. For every $t\in \lbrack 0,T]$, Cauchy's
inequality gives
\begingroup
\setlength{\abovedisplayskip}{6pt}
\setlength{\belowdisplayskip}{6pt}
\setlength{\abovedisplayshortskip}{6pt}
\setlength{\belowdisplayshortskip}{6pt}
\begin{align}
\left\|
\sum_{\ell=1}^{N_{k}}
\psi_{\ell}(t)\mathbf{D}_{K,\ell}^{[j]}
\right\|_{F}
&\leq
\left(
\sum_{\ell=1}^{N_{k}}\psi_{\ell}^{2}(t)
\right)^{1/2}
\left(
\sum_{\ell=1}^{N_{k}}
\left\|\mathbf{D}_{K,\ell}^{[j]}\right\|_{F}^{2}
\right)^{1/2}
\notag\\
&\leq
c_{\psi,T}
\left\|
\left(
\boldsymbol{\Omega}_{\rho,T}^{[j]}
\right)^{\dagger}
\mathbf{r}_{\rho,T}^{[j]}
\right\|_{2}.
\label{37}
\end{align}
\endgroup
Thus,
\begingroup
\setlength{\abovedisplayskip}{6pt}
\setlength{\belowdisplayskip}{6pt}
\setlength{\abovedisplayshortskip}{6pt}
\setlength{\belowdisplayshortskip}{6pt}
\begin{equation}
\left\|\mathcal{E}_{T}^{[j]}\right\|_{\infty,T}
\leq
c_{\psi,T}
\left\|
\left(
\boldsymbol{\Omega}_{\rho,T}^{[j]}
\right)^{\dagger}
\mathbf{r}_{\rho,T}^{[j]}
\right\|_{2}
+\delta_{K}^{[j+1]}.
\label{38}
\end{equation}
\endgroup
Since $\Vert (\boldsymbol{\Omega }_{\rho ,T}^{[j]})^{\dagger }\Vert
_{2}=1/\sigma _{\min }(\boldsymbol{\Omega }_{\rho ,T}^{[j]})$, \eqref{34}
follows directly from Lemma~3.
\end{proof}

Lemma 4 allows the data-driven update to be viewed as an inexact
finite-horizon PI step. For a fixed horizon $T>0$, define the finite-horizon
policy-improvement mapping as follows. For any bounded matrix-valued
function \(\mathbf K:[0,T]\to\mathbb R^{m\times n}\), let $\mathbf{P}_{\rho ,T}(t;\mathbf{K})$
be the unique solution on \([0,T]\) of
\begingroup
\setlength{\abovedisplayskip}{6pt}
\setlength{\belowdisplayskip}{6pt}
\setlength{\abovedisplayshortskip}{6pt}
\setlength{\belowdisplayshortskip}{6pt}
\begin{align}
-\dot{\mathbf{P}}_{\rho,T}(t;\mathbf{K})
={}&
\bigl(\mathbf{A}_{\rho}-\mathbf{B}\mathbf{K}(t)\bigr)^{T}
\mathbf{P}_{\rho,T}(t;\mathbf{K})
+\mathbf{P}_{\rho,T}(t;\mathbf{K})
\notag\\
&\!\!\!\!\!\!\!\!\!\!\!\!\!\!\times
\bigl(\mathbf{A}_{\rho}-\mathbf{B}\mathbf{K}(t)\bigr)
+\mathbf{Q}
+\mathbf{K}^{T}(t)\mathbf{R}\mathbf{K}(t)
\label{39}
\end{align}
\endgroup
with terminal condition $\mathbf{P}_{\rho ,T}(T;\mathbf{K})=\mathbf{M}$.
Define
\begingroup
\setlength{\abovedisplayskip}{6pt}
\setlength{\belowdisplayskip}{6pt}
\setlength{\abovedisplayshortskip}{6pt}
\setlength{\belowdisplayshortskip}{6pt}
\begin{equation}
\mathcal{F}_{\rho,T}(\mathbf{K})(t)
=
\mathbf{R}^{-1}\mathbf{B}^{T}
\mathbf{P}_{\rho,T}(t;\mathbf{K}),
\qquad
t\in[0,T].
\label{40}
\end{equation}
\endgroup
Then the exact finite-horizon PI recursion can be written as $\mathbf{K}%
_{\rho ,T}^{[j+1]}=\mathcal{F}_{\rho ,T}(\mathbf{K}_{\rho ,T}^{[j]})$.
Moreover, by Lemma~4, the data-driven finite-horizon recursion has the
inexact form
\begingroup
\setlength{\abovedisplayskip}{6pt}
\setlength{\belowdisplayskip}{6pt}
\setlength{\abovedisplayshortskip}{6pt}
\setlength{\belowdisplayshortskip}{6pt}
\begin{equation}
\widehat{\mathbf{K}}_{\rho,T}^{[j+1]}
=
\mathcal{F}_{\rho,T}
\bigl(\widehat{\mathbf{K}}_{\rho,T}^{[j]}\bigr)
+\mathcal{E}_{T}^{[j]}.
\label{41}
\end{equation}
\endgroup

\begin{theorem}
Fix $T>0$, and let $\mathbf{K}_{\rho ,T}^{\ast }(t)$ be defined by \eqref{11}%
. Suppose that the data-driven finite-horizon recursion satisfies \eqref{41}. Then the following statements hold.

\begin{enumerate}
\item[(i)] There exist $a_{T}>0$ and $\bar{\delta}_{T}>0$ such that, if $%
\Vert \mathbf{K}-\mathbf{K}_{\rho ,T}^{\ast }\Vert _{\infty ,T}\leq \bar{%
\delta}_{T}$, then
\begingroup
\setlength{\abovedisplayskip}{6pt}
\setlength{\belowdisplayskip}{6pt}
\setlength{\abovedisplayshortskip}{6pt}
\setlength{\belowdisplayshortskip}{6pt}
\begin{equation}
\left\|
\mathcal{F}_{\rho,T}(\mathbf{K})
-\mathbf{K}_{\rho,T}^{\ast}
\right\|_{\infty,T}
\leq
a_{T}
\left\|
\mathbf{K}
-\mathbf{K}_{\rho,T}^{\ast}
\right\|_{\infty,T}^{2}.
\label{48}
\end{equation}
\endgroup
\item[(ii)] Choose $\delta _{T}\in (0,\bar{\delta}_{T}]$ with $\chi
_{T}=a_{T}\delta _{T}<1$. Assume that, for some $j_{0}\geq 0$, $\left\Vert
\widehat{\mathbf{K}}_{\rho ,T}^{[j_{0}]}-\mathbf{K}_{\rho ,T}^{\ast
}\right\Vert _{\infty ,T}\leq \delta _{T}$ and $\left\Vert \mathcal{E}%
_{T}^{[j]}\right\Vert _{\infty ,T}\leq \bar{\eta}_{T}$, $\bar{\eta}_{T}\leq
(1-\chi _{T})\delta _{T}$, $j\geq j_{0}$. Then, for all $q\geq 0$,
\begingroup
\setlength{\abovedisplayskip}{6pt}
\setlength{\belowdisplayskip}{6pt}
\setlength{\abovedisplayshortskip}{6pt}
\setlength{\belowdisplayshortskip}{6pt}
\setlength{\jot}{2pt}
\begin{align}
&\left\|
\widehat{\mathbf{K}}_{\rho,T}^{[j_{0}+q]}
-\mathbf{K}_{\rho,T}^{\ast}
\right\|_{\infty,T}
\notag\\
&\leq
\chi_{T}^{q}
\left\|
\widehat{\mathbf{K}}_{\rho,T}^{[j_{0}]}
-\mathbf{K}_{\rho,T}^{\ast}
\right\|_{\infty,T}
+
\frac{1-\chi_{T}^{q}}{1-\chi_{T}}
\bar{\eta}_{T}.
\label{49}
\end{align}
\endgroup
Consequently,
\begingroup
\setlength{\abovedisplayskip}{6pt}
\setlength{\belowdisplayskip}{6pt}
\setlength{\abovedisplayshortskip}{6pt}
\setlength{\belowdisplayshortskip}{6pt}
\begin{equation}
\limsup_{j\to\infty}
\left\|
\widehat{\mathbf{K}}_{\rho,T}^{[j]}
-\mathbf{K}_{\rho,T}^{\ast}
\right\|_{\infty,T}
\leq
\frac{\bar{\eta}_{T}}{1-\chi_{T}}.
\label{50}
\end{equation}
\endgroup
\item[(iii)] Under the hypotheses of (ii), if additionally $\Vert \mathcal{E}%
_{T}^{[j]}\Vert _{\infty ,T}\rightarrow 0$, then $\widehat{\mathbf{K}}_{\rho
,T}^{[j]}$ converges uniformly to $\mathbf{K}_{\rho ,T}^{\ast }$ on $[0,T]$.
\end{enumerate}
\end{theorem}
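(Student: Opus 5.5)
The plan is to treat the finite-horizon policy-improvement map $\mathcal{F}_{\rho,T}$ as a Newton-type map and derive a quadratic error identity directly from the Lyapunov/Riccati differential equations. Statements (ii) and (iii) then follow by a scalar induction on the error sequence.

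\emph{Step 1: an exact error identity for (i).} Fix a bounded $\mathbf{K}$ on $[0,T]$ and write $\Delta\mathbf{K}(t)=\mathbf{K}(t)-\mathbf{K}_{\rho,T}^{\ast}(t)$. Completing the square in the Riccati equation \eqref{10}, using $\mathbf{B}^{T}\mathbf{P}_{\rho,T}^{\ast}=\mathbf{R}\mathbf{K}_{\rho,T}^{\ast}$, rewrites it as
\begin{align*}
-\dot{\mathbf{P}}_{\rho,T}^{\ast}
={}&(\mathbf{A}_{\rho}-\mathbf{B}\mathbf{K})^{T}\mathbf{P}_{\rho,T}^{\ast}
+\mathbf{P}_{\rho,T}^{\ast}(\mathbf{A}_{\rho}-\mathbf{B}\mathbf{K})
+\mathbf{Q}+\mathbf{K}^{T}\mathbf{R}\mathbf{K}
-\Delta\mathbf{K}^{T}\mathbf{R}\Delta\mathbf{K}.
\end{align*}
Subtracting this from \eqref{39} shows that $\mathbf{E}=\mathbf{P}_{\rho,T}(\cdot;\mathbf{K})-\mathbf{P}_{\rho,T}^{\ast}$ satisfies
\begin{align*}
-\dot{\mathbf{E}}=(\mathbf{A}_{\rho}-\mathbf{B}\mathbf{K})^{T}\mathbf{E}+\mathbf{E}(\mathbf{A}_{\rho}-\mathbf{B}\mathbf{K})+\Delta\mathbf{K}^{T}\mathbf{R}\Delta\mathbf{K},\qquad \mathbf{E}(T)=\mathbf{0}.
\end{align*}
This is the same structure as \eqref{21} in the proof of Theorem 2. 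By the variation-of-constants representation \eqref{19},
\begin{align*}
\mathbf{E}(t)=\int_{t}^{T}\boldsymbol{\Phi}_{K}(\tau,t)^{T}\Delta\mathbf{K}(\tau)^{T}\mathbf{R}\Delta\mathbf{K}(\tau)\boldsymbol{\Phi}_{K}(\tau,t)\,\mathrm{d}\tau ,
\end{align*}
where $\boldsymbol{\Phi}_{K}$ is the transition matrix of $\mathbf{A}_{\rho}-\mathbf{B}\mathbf{K}(t)$. Since $\mathcal{F}_{\rho,T}(\mathbf{K})-\mathbf{K}_{\rho,T}^{\ast}=\mathbf{R}^{-1}\mathbf{B}^{T}\mathbf{E}$, we get
\begin{align*}
\|\mathcal{F}_{\rho,T}(\mathbf{K})-\mathbf{K}_{\rho,T}^{\ast}\|_{\infty,T}\leq \|\mathbf{R}^{-1}\mathbf{B}^{T}\|_{2}\,\|\mathbf{R}\|_{2}\,T\,\sup_{0\le t\le\tau\le T}\|\boldsymbol{\Phi}_{K}(\tau,t)\|_{2}^{2}\;\|\Delta\mathbf{K}\|_{\infty,T}^{2},
\end{align*}
up to a dimensional constant from passing between Frobenius and spectral norms.

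\emph{Step 2: uniform bound on the transition matrix.} This is the step I expect to be the main technical point. The constant $a_T$ must not depend on $\mathbf{K}$, so I need a bound on $\boldsymbol{\Phi}_{K}$ uniform over the ball $\|\mathbf{K}-\mathbf{K}_{\rho,T}^{\ast}\|_{\infty,T}\le\bar{\delta}_T$. I will take $\bar{\delta}_T=1$; any fixed positive value works. The optimal gain $\mathbf{K}_{\rho,T}^{\ast}$ is continuous, hence bounded by some $\kappa_T$. Gronwall's inequality then gives
\begin{align*}
\|\boldsymbol{\Phi}_{K}(\tau,t)\|_{2}\le \exp\bigl((\|\mathbf{A}_{\rho}\|_{2}+\|\mathbf{B}\|_{2}(\kappa_T+\bar{\delta}_T))T\bigr).
\end{align*}
Inserting this into Step 1 defines an explicit $a_T$ and proves \eqref{48}. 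The only care needed is consistent norm bookkeeping.

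\emph{Step 3: inexact recursion for (ii) and (iii).} Let $e_j=\|\widehat{\mathbf{K}}_{\rho,T}^{[j]}-\mathbf{K}_{\rho,T}^{\ast}\|_{\infty,T}$. I will show by induction on $j\ge j_0$ that $e_j\le\delta_T$.
\begin{itemize}
\item If $e_j\le\delta_T\le\bar{\delta}_T$, then \eqref{41} and \eqref{48} give $e_{j+1}\le a_Te_j^2+\bar{\eta}_T\le\chi_Te_j+\bar{\eta}_T$.
\item This yields $e_{j+1}\le\chi_T\delta_T+(1-\chi_T)\delta_T=\delta_T$, so the iterate stays in the ball.
\end{itemize}
Unrolling the linear inequality $e_{j+1}\le\chi_Te_j+\bar{\eta}_T$ over $q$ steps with a geometric sum gives \eqref{49}, and letting $q\to\infty$ gives \eqref{50}.

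For (iii), fix $\epsilon>0$ and choose $j_1\ge j_0$ such that $\|\mathcal{E}_T^{[j]}\|_{\infty,T}\le\epsilon$ for all $j\ge j_1$. Step 3 already gives $e_{j_1}\le\delta_T$, so the same argument applies from $j_1$ with $\bar{\eta}_T$ replaced by $\min\{\epsilon,\bar{\eta}_T\}$. This gives $\limsup_j e_j\le\epsilon/(1-\chi_T)$, and since $\epsilon$ is arbitrary, $e_j\to0$, i.e., uniform convergence on $[0,T]$.
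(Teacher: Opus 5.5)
Your proposal is correct and follows the paper's proof essentially step for step. The shared steps are the completed-square rewriting of the Riccati equation, the error Lyapunov equation with zero terminal value, and a uniform transition-matrix bound on a bounded ball, which gives $a_T=\|\mathbf{R}^{-1}\mathbf{B}^{T}\|_2\,T\,m_T^2\,\|\mathbf{R}\|_2$, followed by the ball-invariance induction and geometric unrolling. Your explicit Gronwall constant and the $\epsilon$-argument for (iii) fill in details the paper only asserts. Also, no dimensional constant is needed in the norm bookkeeping, since $\|\Delta\mathbf{K}^{T}\mathbf{R}\Delta\mathbf{K}\|_F\le\|\mathbf{R}\|_2\|\Delta\mathbf{K}\|_F^2$.
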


\begin{proof}
Let $\mathbf{K}$ lie in a sufficiently small bounded neighborhood of $%
\mathbf{K}_{\rho ,T}^{\ast }$. Let $\mathbf{S}(t)=\mathbf{P}_{\rho ,T}(t;%
\mathbf{K})-\mathbf{P}_{\rho ,T}^{\ast }(t)$ and $\widetilde{\mathbf{K}}(t)=%
\mathbf{K}(t)-\mathbf{K}_{\rho ,T}^{\ast }(t)$. Since $\mathbf{K}_{\rho
,T}^{\ast }(t)=\mathbf{R}^{-1}\mathbf{B}^{T}\mathbf{P}_{\rho ,T}^{\ast }(t)$%
, the optimal finite-horizon Riccati equation can be rewritten as
\begingroup
\setlength{\abovedisplayskip}{6pt}
\setlength{\belowdisplayskip}{6pt}
\setlength{\abovedisplayshortskip}{6pt}
\setlength{\belowdisplayshortskip}{6pt}
\setlength{\jot}{2pt}
\begin{align}
-\dot{\mathbf{P}}_{\rho,T}^{\ast}(t)
={}&
\bigl(\mathbf{A}_{\rho}-\mathbf{B}\mathbf{K}(t)\bigr)^{T}
\mathbf{P}_{\rho,T}^{\ast}(t)
+\mathbf{P}_{\rho,T}^{\ast}(t)
\notag\\
&\times
\bigl(\mathbf{A}_{\rho}-\mathbf{B}\mathbf{K}(t)\bigr)
+\mathbf{Q}
\notag\\
&+\mathbf{K}^{T}(t)\mathbf{R}\mathbf{K}(t)
-\widetilde{\mathbf{K}}^{T}(t)
\mathbf{R}\widetilde{\mathbf{K}}(t).
\label{51}
\end{align}
\endgroup
Subtracting this equation from the policy-evaluation equation for $\mathbf{P}%
_{\rho ,T}(t;\mathbf{K})$ gives
\begingroup
\setlength{\abovedisplayskip}{6pt}
\setlength{\belowdisplayskip}{6pt}
\setlength{\abovedisplayshortskip}{6pt}
\setlength{\belowdisplayshortskip}{6pt}
\begin{equation}
-\dot{\mathbf{S}}(t)
=
\mathbf{A}_{K}^{T}(t)\mathbf{S}(t)
+\mathbf{S}(t)\mathbf{A}_{K}(t)
+\widetilde{\mathbf{K}}^{T}(t)\mathbf{R}\widetilde{\mathbf{K}}(t),
\label{52}
\end{equation}
\endgroup
with $\mathbf{S}(T)=\mathbf{0}$, where $\mathbf{A}_{K}(t)=\mathbf{A}_{\rho }-%
\mathbf{B}\mathbf{K}(t)$. Let $\boldsymbol{\Phi }_{K}(\tau ,t)$ be the
transition matrix of $\mathbf{A}_{K}(t)$. On the compact interval $[0,T]$
and for $\mathbf{K}$ in a bounded neighborhood of $\mathbf{K}_{\rho
,T}^{\ast }$, there exists $m_{T}>0$ such that
\begingroup
\setlength{\abovedisplayskip}{6pt}
\setlength{\belowdisplayskip}{6pt}
\setlength{\abovedisplayshortskip}{6pt}
\setlength{\belowdisplayshortskip}{6pt}
\begin{equation}
\left\|
\boldsymbol{\Phi}_{K}(\tau,t)
\right\|_{2}
\leq
m_{T},
\qquad
0\leq t\leq \tau\leq T.
\label{53}
\end{equation}
\endgroup
Therefore,
\begingroup
\setlength{\abovedisplayskip}{6pt}
\setlength{\belowdisplayskip}{6pt}
\setlength{\abovedisplayshortskip}{6pt}
\setlength{\belowdisplayshortskip}{6pt}
\setlength{\jot}{2pt}
\begin{align}
\left\|\mathbf{S}(t)\right\|_{F}
&\leq
\int_{t}^{T}
\left\|\boldsymbol{\Phi}_{K}(\tau,t)\right\|_{2}^{2}
\left\|
\widetilde{\mathbf{K}}^{T}(\tau)
\mathbf{R}
\widetilde{\mathbf{K}}(\tau)
\right\|
\,\mathrm{d}\tau
\notag\\
&\leq
T m_{T}^{2}
\left\|\mathbf{R}\right\|_{2}
\left\|
\mathbf{K}-\mathbf{K}_{\rho,T}^{\ast}
\right\|_{\infty,T}^{2}.
\label{54}
\end{align}
\endgroup
It follows that
\begingroup
\setlength{\abovedisplayskip}{6pt}
\setlength{\belowdisplayskip}{6pt}
\setlength{\abovedisplayshortskip}{6pt}
\setlength{\belowdisplayshortskip}{6pt}
\setlength{\jot}{2pt}
\begin{align}
&\left\|
\mathcal{F}_{\rho,T}(\mathbf{K})
-\mathbf{K}_{\rho,T}^{\ast}
\right\|_{\infty,T}
\notag\\
&=
\left\|
\mathbf{R}^{-1}\mathbf{B}^{T}
\left(
\mathbf{P}_{\rho,T}(\cdot;\mathbf{K})
-\mathbf{P}_{\rho,T}^{\ast}(\cdot)
\right)
\right\|_{\infty,T}
\notag\\
&\leq
\left\|
\mathbf{R}^{-1}\mathbf{B}^{T}
\right\|_{2}
T m_{T}^{2}
\left\|\mathbf{R}\right\|_{2}
\left\|
\mathbf{K}-\mathbf{K}_{\rho,T}^{\ast}
\right\|_{\infty,T}^{2}.
\label{55}
\end{align}
\endgroup
Thus \eqref{48} holds with $a_{T}=\left\Vert \mathbf{R}^{-1}\mathbf{B}%
^{T}\right\Vert _{2}Tm_{T}^{2}\Vert \mathbf{R}\Vert _{2}.$ Whenever $\Vert
\widehat{\mathbf{K}}_{\rho ,T}^{[\bar{q}]}-\mathbf{K}_{\rho ,T}^{\ast }\Vert
_{\infty ,T}\leq \delta _{T}$ with $\bar{q}=j_{0}+q$, \eqref{48} implies
\begingroup
\setlength{\abovedisplayskip}{6pt}
\setlength{\belowdisplayskip}{6pt}
\setlength{\abovedisplayshortskip}{6pt}
\setlength{\belowdisplayshortskip}{6pt}
\begin{align}
&\left\|
\widehat{\mathbf{K}}_{\rho,T}^{[\bar{q}+1]}
-\mathbf{K}_{\rho,T}^{\ast}
\right\|_{\infty,T}
\notag\\
&\leq
a_{T}
\left\|
\widehat{\mathbf{K}}_{\rho,T}^{[\bar{q}]}
-\mathbf{K}_{\rho,T}^{\ast}
\right\|_{\infty,T}^{2}
+
\left\|
\mathcal{E}_{T}^{[\bar{q}]}
\right\|_{\infty,T}
\notag\\
&\leq
\chi_{T}
\left\|
\widehat{\mathbf{K}}_{\rho,T}^{[\bar{q}]}
-\mathbf{K}_{\rho,T}^{\ast}
\right\|_{\infty,T}
+\bar{\eta}_{T}.
\label{56}
\end{align}
\endgroup
The inequality above, together with the assumptions
$\Vert \widehat{\mathbf{K}}_{\rho,T}^{[j_0]}
-\mathbf{K}_{\rho,T}^{\ast}\Vert_{\infty,T}
\leq \delta_T$
and
$\bar{\eta}_T\leq(1-\chi_T)\delta_T$,
implies by induction that
$
\left\|
\widehat{\mathbf{K}}_{\rho,T}^{[j_0+q]}
-\mathbf{K}_{\rho,T}^{\ast}
\right\|_{\infty,T}
\leq \delta_T$ for all
$ q\geq 0.
$
Thus, we have
\begingroup
\setlength{\abovedisplayskip}{6pt}
\setlength{\belowdisplayskip}{6pt}
\setlength{\abovedisplayshortskip}{6pt}
\setlength{\belowdisplayshortskip}{6pt}
\setlength{\jot}{2pt}
\begin{align*}
\left\|
\widehat{\mathbf{K}}_{\rho,T}^{[j_0+1]}
-\mathbf{K}_{\rho,T}^{\ast}
\right\|_{\infty,T}
&\leq
\chi_T
\left\|
\widehat{\mathbf{K}}_{\rho,T}^{[j_0]}
-\mathbf{K}_{\rho,T}^{\ast}
\right\|_{\infty,T}
+\bar{\eta}_T,
\\
\left\|
\widehat{\mathbf{K}}_{\rho,T}^{[j_0+2]}
-\mathbf{K}_{\rho,T}^{\ast}
\right\|_{\infty,T}
&\leq
\chi_T^2
\left\|
\widehat{\mathbf{K}}_{\rho,T}^{[j_0]}
-\mathbf{K}_{\rho,T}^{\ast}
\right\|_{\infty,T}
\\
&\quad +(1+\chi_T)\bar{\eta}_T,
\\[-1mm]
&\vdots
\\[-1mm]
\hspace{-1.5em}
\left\|
\widehat{\mathbf{K}}_{\rho,T}^{[\bar q]}
-\mathbf{K}_{\rho,T}^{\ast}
\right\|_{\infty,T}
&\hspace{-1.5em}\leq
\chi_T^q
\left\|
\widehat{\mathbf{K}}_{\rho,T}^{[j_0]}
-\mathbf{K}_{\rho,T}^{\ast}
\right\|_{\infty,T}
\hspace{-0.5em}+\sum_{i=0}^{q-1}\chi_T^i\bar{\eta}_T
\\
&\hspace{-1.5em}=
\chi_T^q
\left\|
\widehat{\mathbf{K}}_{\rho,T}^{[j_0]}
-\mathbf{K}_{\rho,T}^{\ast}
\right\|_{\infty,T}
\hspace{-0.5em}+\frac{1-\chi_T^q}{1-\chi_T}\bar{\eta}_T,
\end{align*}
\endgroup
where $\bar q=j_0+q$. This implies \eqref{49}. Since
$0<\chi_T<1$, taking $q\rightarrow\infty$ yields
\begingroup
\setlength{\abovedisplayskip}{6pt}
\setlength{\belowdisplayskip}{6pt}
\setlength{\abovedisplayshortskip}{6pt}
\setlength{\belowdisplayshortskip}{6pt}
\begin{equation}
\limsup_{q\to\infty}
\left\|
\widehat{\mathbf{K}}_{\rho,T}^{[j_0+q]}
-\mathbf{K}_{\rho,T}^{\ast}
\right\|_{\infty,T}
\leq
\frac{\bar{\eta}_T}{1-\chi_T}.
\label{57}
\end{equation}
\endgroup
If $\Vert \mathcal{E}_{T}^{[j]}\Vert _{\infty ,T}\rightarrow 0$, the same
scalar recursion with $0<\chi _{T}<1$ gives $\Vert \widehat{\mathbf{K}}%
_{\rho ,T}^{[j]}-\mathbf{K}_{\rho ,T}^{\ast }\Vert _{\infty ,T}\rightarrow 0$%
, which is uniform convergence on $[0,T]$.
\end{proof}
\begin{remark}
Equation \eqref{41} identifies the data-driven recursion as a perturbed
finite-horizon policy-improvement iteration. Theorem~5 shows that the local
quadratic behavior of the nominal map $\mathcal F_{\rho,T}$ is retained
modulo the perturbation $\mathcal{E}_{T}^{[j]}$. Thus, the effect of basis truncation and
data conditioning is quantified through $\mathcal{E}_{T}^{[j]}$: uniformly small
perturbations yield practical convergence, whereas vanishing perturbations
recover convergence to $\mathbf{K}_{\rho,T}^{\ast}$.
\end{remark}
\begin{corollary}
Suppose that Assumption~1 holds. Let $\rho >\varepsilon \geq 0$, and choose $%
T\geq T_{\varepsilon }$, where $T_{\varepsilon }$ is given in Theorem~2.
Suppose further that the hypotheses of Theorem~5(ii) hold. Then there exists
$r_{T}>0$ such that
\begingroup
\setlength{\abovedisplayskip}{6pt}
\setlength{\belowdisplayskip}{6pt}
\setlength{\abovedisplayshortskip}{6pt}
\setlength{\belowdisplayshortskip}{6pt}
\begin{equation}
\sup_{\left\|
\mathbf{K}-\mathbf{K}_{\rho,T}^{\ast}(0)
\right\|_{F}\leq r_{T}}
\alpha\!\left(
\mathbf{A}-\mathbf{B}\mathbf{K}
\right)
<-\varepsilon.
\label{58}
\end{equation}
\endgroup
If $\frac{\bar{\eta}_{T}}{1-\chi _{T}}<r_{T}$, then $\widehat{\mathbf{K}}%
_{\rho ,T}^{[j]}(0)$ is $\varepsilon $-admissible for the original system %
\eqref{1} for all sufficiently large $j$.
\end{corollary}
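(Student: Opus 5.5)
The corollary combines Theorem~2(3) with Theorem~5(ii) through continuity of the spectral abscissa. I will first establish the existence of $r_{T}$, and then show that the data-driven gain at $t=0$ eventually lies in the ball of radius $r_{T}$ about $\mathbf{K}_{\rho,T}^{\ast}(0)$.

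\emph{Existence of $r_T$.} Since $T\geq T_{\varepsilon}$, Theorem~2(3) gives $\alpha(\mathbf{A}-\mathbf{B}\mathbf{K}_{\rho,T}^{\ast}(0))<-\varepsilon$. The map $\mathbf{K}\mapsto\alpha(\mathbf{A}-\mathbf{B}\mathbf{K})$ is continuous, because eigenvalues depend continuously on matrix entries. Hence there is $r_{T}'>0$ such that $\alpha(\mathbf{A}-\mathbf{B}\mathbf{K})<-\varepsilon$ on the open Frobenius ball of radius $r_T'$ around $\mathbf{K}_{\rho,T}^{\ast}(0)$.

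To get a strict inequality for the supremum in \eqref{58}, I take $r_{T}=r_{T}'/2$. The closed ball of radius $r_T$ is compact, and a continuous function attains its maximum on it. That maximum is strictly less than $-\varepsilon$, so \eqref{58} holds. This compactness step is the only point needing care, since the open-neighborhood argument of Theorem~2(3)--(4) gives only pointwise strict inequality.

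\emph{Admissibility of the data-driven gain.} Assume $\bar{\eta}_{T}/(1-\chi_{T})<r_{T}$. By \eqref{50} in Theorem~5(ii), $\limsup_{j\to\infty}\|\widehat{\mathbf{K}}_{\rho,T}^{[j]}-\mathbf{K}_{\rho,T}^{\ast}\|_{\infty,T}\leq \bar{\eta}_{T}/(1-\chi_{T})<r_{T}$. Fix any $\beta$ with $\bar{\eta}_{T}/(1-\chi_{T})<\beta<r_{T}$. By the definition of $\limsup$, there is $\bar{j}$ such that $\|\widehat{\mathbf{K}}_{\rho,T}^{[j]}-\mathbf{K}_{\rho,T}^{\ast}\|_{\infty,T}\leq\beta$ for all $j\geq\bar{j}$.

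Because $\|\cdot\|_{\infty,T}$ is the supremum over $t\in[0,T]$ of the Frobenius norm, evaluating at $t=0$ gives $\|\widehat{\mathbf{K}}_{\rho,T}^{[j]}(0)-\mathbf{K}_{\rho,T}^{\ast}(0)\|_{F}\leq\beta<r_{T}$ for all $j\geq\bar j$. Then \eqref{58} yields $\alpha(\mathbf{A}-\mathbf{B}\widehat{\mathbf{K}}_{\rho,T}^{[j]}(0))<-\varepsilon$ for every such $j$. This is exactly $\varepsilon$-admissibility for the original system \eqref{1} in the sense of \eqref{3}.

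Two remarks on the argument. First, the margin between $\beta$ and $r_T$ handles the fact that \eqref{50} gives only a limsup bound, not a bound at any particular iterate. Second, the reduction from the uniform norm to the value at $t=0$ is immediate, because the sup norm dominates the pointwise norm.
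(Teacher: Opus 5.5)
Your proof is correct and follows the paper's route: continuity of the spectral abscissa at $\mathbf{K}_{\rho,T}^{\ast}(0)$, via Theorem~2(3), gives $r_T$. The limsup bound \eqref{50}, evaluated at $t=0$, then places $\widehat{\mathbf{K}}_{\rho,T}^{[j]}(0)$ in that ball for all large $j$. Your compactness step (halving the radius so the supremum over the closed ball is attained and strictly below $-\varepsilon$) and the explicit margin $\beta$ make rigorous two points the paper passes over quickly.
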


\begin{proof}
Since $T\geq T_{\varepsilon }$, we have $\alpha \!\left( \mathbf{A}-%
\mathbf{B}\mathbf{K}_{\rho ,T}^{\ast }(0)\right) <-\varepsilon .$ By
continuity of the spectral abscissa with respect to the feedback gain, there
exists $r_{T}>0$ such that \eqref{58} holds. On the other hand,
Theorem~5(ii) yields
\begingroup
\setlength{\abovedisplayskip}{6pt}
\setlength{\belowdisplayskip}{6pt}
\setlength{\abovedisplayshortskip}{6pt}
\setlength{\belowdisplayshortskip}{6pt}
\begin{equation}
\limsup_{j\to\infty}
\left\|
\widehat{\mathbf{K}}_{\rho,T}^{[j]}
-\mathbf{K}_{\rho,T}^{\ast}
\right\|_{\infty,T}
\leq
\frac{\bar{\eta}_{T}}{1-\chi_{T}},
\label{59}
\end{equation}
\endgroup
which implies that, for all sufficiently large $j$,
\begingroup
\setlength{\abovedisplayskip}{6pt}
\setlength{\belowdisplayskip}{6pt}
\setlength{\abovedisplayshortskip}{6pt}
\setlength{\belowdisplayshortskip}{6pt}
\begin{equation}
\left\|
\widehat{\mathbf{K}}_{\rho,T}^{[j]}(0)
-\mathbf{K}_{\rho,T}^{\ast}(0)
\right\|_{F}
<
r_{T}.
\label{60}
\end{equation}
\endgroup
Therefore, by \eqref{58}, $\alpha \!\left( \mathbf{A}-\mathbf{B}\widehat{%
\mathbf{K}}_{\rho ,T}^{[j]}(0)\right) <-\varepsilon .$ Hence, $\widehat{%
\mathbf{K}}_{\rho ,T}^{[j]}(0)$ is $\varepsilon $-admissible for all
sufficiently large $j$.
\end{proof}

Corollary 6 gives a theoretical admissibility guarantee, but the
neighborhood radius involved there is model dependent and cannot be checked
directly from data. Therefore, we introduce a data-driven Lyapunov
certificate for the candidate constant gain generated by the finite-horizon
stage. Let $\mathbf{K}_{c}=\widehat{\mathbf{K}}_{\rho ,T}^{[j+1]}(0)$ be a
candidate constant gain generated by the data-driven finite-horizon
recursion. Choose a matrix $\mathbf{W}_{c}=\mathbf{W}_{c}^{T}>\mathbf{0}$.
The purpose of the following test is to determine, using the stored behavior
data only, whether there exists a matrix $\mathbf{P}_{c}=\mathbf{P}_{c}^{T}>%
\mathbf{0}$ such that $(\mathbf{A}-\mathbf{B}\mathbf{K}_{c}+\varepsilon
\mathbf{I})^{T}\mathbf{P}_{c}+\mathbf{P}_{c}(\mathbf{A}-\mathbf{B}\mathbf{K}%
_{c}+\varepsilon \mathbf{I})=-\mathbf{W}_{c}$. If such a matrix $\mathbf{P}%
_{c}$ exists, then $\mathbf{K}_{c}$ is $\varepsilon $-admissible for the
original system.

Define $\mathbf{L}_{c}=\mathbf{B}^{T}\mathbf{P}_{c}\ $and $z_{c}(t)=u_{b}(t)+%
\mathbf{K}_{c}x(t)$. The desired Lyapunov equation is equivalent to
\begingroup
\setlength{\abovedisplayskip}{6pt}
\setlength{\belowdisplayskip}{6pt}
\setlength{\abovedisplayshortskip}{6pt}
\setlength{\belowdisplayshortskip}{6pt}
\begin{equation}
\mathbf{A}^{T}\mathbf{P}_{c}
+\mathbf{P}_{c}\mathbf{A}
=
\mathbf{K}_{c}^{T}\mathbf{L}_{c}
+\mathbf{L}_{c}^{T}\mathbf{K}_{c}
-2\varepsilon\mathbf{P}_{c}
-\mathbf{W}_{c}.
\label{61}
\end{equation}
\endgroup
Along the trajectory $\dot{x}(t)=\mathbf{A}x(t)+\mathbf{B}u_{b}(t),$
one has
\begingroup
\setlength{\abovedisplayskip}{6pt}
\setlength{\belowdisplayskip}{6pt}
\setlength{\abovedisplayshortskip}{6pt}
\setlength{\belowdisplayshortskip}{6pt}
\setlength{\jot}{2pt}
\begin{align}
\frac{\mathrm{d}}{\mathrm{d}t}
\bigl(x^{T}(t)\mathbf{P}_{c}x(t)\bigr)
={}&
-2\varepsilon x^{T}(t)\mathbf{P}_{c}x(t)
\notag\\
&\!\!\!\!\!\!\!\!\!\!\!\!-x^{T}(t)\mathbf{W}_{c}x(t)
+2z_{c}^{T}(t)\mathbf{L}_{c}x(t).
\label{62}
\end{align}
\endgroup
Integrating \eqref{62} over $I_{k}=[t_{k},t_{k+1}]$ gives
\begingroup
\setlength{\abovedisplayskip}{6pt}
\setlength{\belowdisplayskip}{6pt}
\setlength{\abovedisplayshortskip}{6pt}
\setlength{\belowdisplayshortskip}{6pt}
\setlength{\jot}{2pt}
\begin{align}
&x^{T}(t_{k+1})\mathbf{P}_{c}x(t_{k+1})
-x^{T}(t_{k})\mathbf{P}_{c}x(t_{k})
\notag\\
={}&
-2\varepsilon\int_{t_{k}}^{t_{k+1}}
x^{T}(t)\mathbf{P}_{c}x(t)\,\mathrm{d}t
-\int_{t_{k}}^{t_{k+1}}
x^{T}(t)\mathbf{W}_{c}x(t)\,\mathrm{d}t
\notag\\
&+2\int_{t_{k}}^{t_{k+1}}
z_{c}^{T}(t)\mathbf{L}_{c}x(t)\,\mathrm{d}t.
\label{63}
\end{align}
\endgroup
For each interval $I_{k}$, define
\begingroup
\setlength{\abovedisplayskip}{6pt}
\setlength{\belowdisplayskip}{6pt}
\setlength{\abovedisplayshortskip}{6pt}
\setlength{\belowdisplayshortskip}{6pt}
\setlength{\jot}{2pt}
\begin{align*}
\mathbf{d}_{xx,k}
&=
\operatorname{vecv}\bigl(x(t_{k+1})\bigr)
-\operatorname{vecv}\bigl(x(t_{k})\bigr),\\
\mathbf{I}_{xx,k}
&=
\int_{t_{k}}^{t_{k+1}}
\operatorname{vecv}\bigl(x(t)\bigr)\,\mathrm{d}t,\\
\mathbf{I}_{xz,k}(\mathbf{K}_{c})
&=
\int_{t_{k}}^{t_{k+1}}
\bigl(x(t)\otimes z_{c}(t)\bigr)\,\mathrm{d}t.
\end{align*}
\endgroup
Then \eqref{63} can be written as
\begingroup
\setlength{\abovedisplayskip}{6pt}
\setlength{\belowdisplayskip}{6pt}
\setlength{\abovedisplayshortskip}{6pt}
\setlength{\belowdisplayshortskip}{6pt}
\setlength{\jot}{2pt}
\begin{align}
&\bigl(\mathbf{d}_{xx,k}
+2\varepsilon\mathbf{I}_{xx,k}\bigr)^{T}
\operatorname{vecs}(\mathbf{P}_{c})
\notag\\
&\quad
-2\mathbf{I}_{xz,k}^{T}(\mathbf{K}_{c})
\operatorname{vec}(\mathbf{L}_{c})
=
-\mathbf{I}_{xx,k}^{T}
\operatorname{vecs}(\mathbf{W}_{c}).
\label{64}
\end{align}
\endgroup
Stacking all stored intervals yields
\begingroup
\setlength{\abovedisplayskip}{6pt}
\setlength{\belowdisplayskip}{6pt}
\setlength{\abovedisplayshortskip}{6pt}
\setlength{\belowdisplayshortskip}{6pt}
\begin{equation}
\boldsymbol{\Psi}_{c}(\mathbf{K}_{c})\theta_{c}
=
-\mathbf{b}_{c}(\mathbf{W}_{c}),
\label{65}
\end{equation}
\endgroup
where
\begingroup
\setlength{\abovedisplayskip}{6pt}
\setlength{\belowdisplayskip}{6pt}
\setlength{\abovedisplayshortskip}{6pt}
\setlength{\belowdisplayshortskip}{6pt}
\setlength{\jot}{2pt}
\begin{align*}
\theta_{c}
&=
\operatorname{col}\bigl(
\operatorname{vecs}(\mathbf{P}_{c}),
\operatorname{vec}(\mathbf{L}_{c})
\bigr),\\
\boldsymbol{\Psi}_{c}(\mathbf{K}_{c})
&=
\bigl[
\mathbf{D}_{xx}+2\varepsilon\mathbf{I}_{xx},
\;-2\mathbf{I}_{xz}(\mathbf{K}_{c})
\bigr],\\
\mathbf{b}_{c}(\mathbf{W}_{c})
&=
\mathbf{I}_{xx}
\operatorname{vecs}(\mathbf{W}_{c}),
\end{align*}
\endgroup
with \begingroup
\setlength{\abovedisplayskip}{6pt}
\setlength{\belowdisplayskip}{6pt}
\setlength{\abovedisplayshortskip}{6pt}
\setlength{\belowdisplayshortskip}{6pt}
\setlength{\jot}{2pt}
\begin{align*}
\mathbf{D}_{xx}
&=
\func{col}\bigl(
\mathbf{d}_{xx,1}^{T},\ldots,
\mathbf{d}_{xx,N_{d}}^{T}
\bigr),\\
\mathbf{I}_{xx}
&=
\func{col}\bigl(
\mathbf{I}_{xx,1}^{T},\ldots,
\mathbf{I}_{xx,N_{d}}^{T}
\bigr),\\
\mathbf{I}_{xz}(\mathbf{K}_{c})
&=
\func{col}\bigl(
\mathbf{I}_{xz,1}^{T}(\mathbf{K}_{c}),\ldots,
\mathbf{I}_{xz,N_{d}}^{T}(\mathbf{K}_{c})
\bigr).
\end{align*}
\endgroup

The following result gives a data-driven admissibility certificate for the
candidate gain.

\begin{lemma}
Let $\mathbf{K}_{c}$ be a candidate constant gain and let $\mathbf{W}_{c}=%
\mathbf{W}_{c}^{T}>\mathbf{0}$ be given. Suppose that the stored data
satisfy the data-richness condition
\begingroup
\setlength{\abovedisplayskip}{6pt}
\setlength{\belowdisplayskip}{6pt}
\setlength{\abovedisplayshortskip}{6pt}
\setlength{\belowdisplayshortskip}{6pt}
\begin{equation}
\operatorname{rank}
\left[
\mathbf{I}_{xx},\mathbf{I}_{xu}
\right]
=
\frac{n(n+1)}{2}+mn,
\label{66}
\end{equation}
\endgroup
where
\begingroup
\setlength{\abovedisplayskip}{6pt}
\setlength{\belowdisplayskip}{6pt}
\setlength{\abovedisplayshortskip}{6pt}
\setlength{\belowdisplayshortskip}{6pt}
\setlength{\jot}{2pt}
\begin{align*}
\mathbf{I}_{xu}
&=
\func{col}\bigl(
\mathbf{I}_{xu,1}^{T},\ldots,
\mathbf{I}_{xu,N_{d}}^{T}
\bigr),\\
\mathbf{I}_{xu,k}
&=
\int_{t_{k}}^{t_{k+1}}
\bigl(x(t)\otimes u_{b}(t)\bigr)
\,\mathrm{d}t.
\end{align*}
\endgroup If \eqref{65} admits a solution $\theta
_{c} $ whose component satisfies
$\mathbf{P}_{c}
=
\mathbf{P}_{c}^{T}
>
\mathbf{0},$
then $\mathbf{K}_{c}$ is $\varepsilon $-admissible.
\end{lemma}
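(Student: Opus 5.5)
The plan is to show that any solution $\theta_c=\operatorname{col}(\operatorname{vecs}(\mathbf{P}_c),\operatorname{vec}(\mathbf{L}_c))$ of \eqref{65} must satisfy the model-based relations $\mathbf{L}_c=\mathbf{B}^T\mathbf{P}_c$ and \eqref{61}. A standard Lyapunov argument then finishes.

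First, fix any data-consistent solution $(\mathbf{P}_c,\mathbf{L}_c)$ of \eqref{65}. Along the true trajectory $\dot x=\mathbf{A}x+\mathbf{B}u_b$, the derivative of $x^T\mathbf{P}_cx$ can be written identically as
$x^T(\mathbf{A}^T\mathbf{P}_c+\mathbf{P}_c\mathbf{A})x+2u_b^T\mathbf{B}^T\mathbf{P}_cx$.
Integrating over $I_k$ expresses $\mathbf{d}_{xx,k}^T\operatorname{vecs}(\mathbf{P}_c)$ exactly in terms of $\mathbf{I}_{xx,k}$ and $\mathbf{I}_{xu,k}$. Next, expand $z_c=u_b+\mathbf{K}_cx$ in \eqref{64}. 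This gives $\mathbf{I}_{xz,k}(\mathbf{K}_c)=\mathbf{I}_{xu,k}+(\text{linear map of }\mathbf{I}_{xx,k})$, where the linear map comes from $x\otimes\mathbf{K}_cx$ rewritten in $\operatorname{vecv}$ coordinates.

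Subtracting the identity from each row of \eqref{65} then yields, for every $k$,
\begin{equation*}
\mathbf{I}_{xx,k}^T\operatorname{vecs}(\mathbf{Y})+\mathbf{I}_{xu,k}^T\operatorname{vec}\bigl(2(\mathbf{B}^T\mathbf{P}_c-\mathbf{L}_c)\bigr)=0,
\end{equation*}
with
\begin{equation*}
\mathbf{Y}=\mathbf{A}^T\mathbf{P}_c+\mathbf{P}_c\mathbf{A}-\mathbf{K}_c^T\mathbf{L}_c-\mathbf{L}_c^T\mathbf{K}_c+2\varepsilon\mathbf{P}_c+\mathbf{W}_c .
\end{equation*}
This step is careful bookkeeping. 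The symmetric part must be expressed through $\operatorname{vecs}$ paired with $\operatorname{vecv}$, using the convention $x^T\mathbf{Y}x=\operatorname{vecv}(x)^T\operatorname{vecs}(\mathbf{Y})$. The cross term must be expressed through $\operatorname{vec}$ paired with the Kronecker integrals, using $z^T\mathbf{L}x=(x\otimes z)^T\operatorname{vec}(\mathbf{L})$.

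Stacking over $k$ gives $[\mathbf{I}_{xx},\mathbf{I}_{xu}]\,\operatorname{col}(\operatorname{vecs}(\mathbf{Y}),\operatorname{vec}(2(\mathbf{B}^T\mathbf{P}_c-\mathbf{L}_c)))=\mathbf{0}$. The rank condition \eqref{66} says this matrix has full column rank, so both blocks vanish. Hence $\mathbf{L}_c=\mathbf{B}^T\mathbf{P}_c$ and $\mathbf{Y}=\mathbf{0}$. Substituting back turns $\mathbf{Y}=\mathbf{0}$ into $(\mathbf{A}-\mathbf{B}\mathbf{K}_c+\varepsilon\mathbf{I})^T\mathbf{P}_c+\mathbf{P}_c(\mathbf{A}-\mathbf{B}\mathbf{K}_c+\varepsilon\mathbf{I})=-\mathbf{W}_c$.

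Finally, since $\mathbf{P}_c>\mathbf{0}$ and $\mathbf{W}_c>\mathbf{0}$, Lyapunov's theorem shows that $\mathbf{A}-\mathbf{B}\mathbf{K}_c+\varepsilon\mathbf{I}$ is Hurwitz. The direct way to see this is to take an eigenpair $(\lambda,v)$ and compute $2\operatorname{Re}(\lambda+\varepsilon)\,v^*\mathbf{P}_cv=-v^*\mathbf{W}_cv<0$. It follows that $\alpha(\mathbf{A}-\mathbf{B}\mathbf{K}_c)<-\varepsilon$, i.e., \eqref{3} holds.

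I expect the main obstacle to be the uniqueness step. One must verify that the regressor of the difference equation is exactly $[\mathbf{I}_{xx},\mathbf{I}_{xu}]$, up to an invertible column transformation. In particular, the $\mathbf{K}_c$-dependent part of $\mathbf{I}_{xz}(\mathbf{K}_c)$ must be absorbed into the $\mathbf{I}_{xx}$ block rather than spoiling the rank condition. This is why the argument is organized around the unknown pair $(\mathbf{Y},\mathbf{B}^T\mathbf{P}_c-\mathbf{L}_c)$ instead of $\theta_c$ directly.
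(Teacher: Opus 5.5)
Your proposal is correct and follows the paper's proof essentially step for step. You subtract the trajectory identity \eqref{68} from the data equation \eqref{63} to obtain \eqref{69}, whose regressor is $[\mathbf{I}_{xx},\mathbf{I}_{xu}]$ acting on $(\operatorname{vecs}(\mathbf{Y}),\operatorname{vec}(2(\mathbf{B}^{T}\mathbf{P}_{c}-\mathbf{L}_{c})))$; the rank condition \eqref{66} then forces $\mathbf{L}_{c}=\mathbf{B}^{T}\mathbf{P}_{c}$ and \eqref{70}, and Lyapunov's theorem gives $\varepsilon$-admissibility. Your $\operatorname{vecs}$/$\operatorname{vecv}$ and Kronecker bookkeeping and the eigenvector computation only make explicit what the paper leaves implicit, and no column transformation is needed, since the $\mathbf{K}_c$-dependent terms fall directly inside $\mathbf{Y}$.
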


\begin{proof}
Let $\theta _{c}$ be a solution of \eqref{65}, and let the associated
matrices be $\mathbf{P}_{c}=\mathbf{P}_{c}^{T}$. By the system dynamics, for
any symmetric matrix $\mathbf{P}_{c}$,
\begingroup
\setlength{\abovedisplayskip}{6pt}
\setlength{\belowdisplayskip}{6pt}
\setlength{\abovedisplayshortskip}{6pt}
\setlength{\belowdisplayshortskip}{6pt}
\setlength{\jot}{2pt}
\begin{align}
&x^{T}(t_{k+1})\mathbf{P}_{c}x(t_{k+1})
-x^{T}(t_{k})\mathbf{P}_{c}x(t_{k})
\notag\\
={}&
\int_{t_{k}}^{t_{k+1}}
x^{T}(t)
\bigl(
\mathbf{A}^{T}\mathbf{P}_{c}
+\mathbf{P}_{c}\mathbf{A}
\bigr)
x(t)\,\mathrm{d}t
\notag\\
&+
2\int_{t_{k}}^{t_{k+1}}
u_{b}^{T}(t)\mathbf{B}^{T}\mathbf{P}_{c}x(t)\,\mathrm{d}t.
\label{68}
\end{align}
\endgroup
Combining this identity with \eqref{63} gives, for all stored intervals,
\begingroup
\setlength{\abovedisplayskip}{6pt}
\setlength{\belowdisplayskip}{6pt}
\setlength{\abovedisplayshortskip}{6pt}
\setlength{\belowdisplayshortskip}{6pt}
\setlength{\jot}{2pt}
\begin{align}
&\int_{t_{k}}^{t_{k+1}}
x^{T}(t)\Bigl(
\mathbf{A}^{T}\mathbf{P}_{c}
+\mathbf{P}_{c}\mathbf{A}
+2\varepsilon\mathbf{P}_{c}
+\mathbf{W}_{c}
-\mathbf{K}_{c}^{T}\mathbf{L}_{c}
\notag\\
&\quad
-\mathbf{L}_{c}^{T}\mathbf{K}_{c}
\Bigr)x(t)\,\mathrm{d}t
+2\int_{t_{k}}^{t_{k+1}}
u_{b}^{T}(t)
\bigl(
\mathbf{B}^{T}\mathbf{P}_{c}
-\mathbf{L}_{c}
\bigr)x(t)\,\mathrm{d}t
\notag\\
&=0.
\label{69}
\end{align}
\endgroup
By the rank condition \eqref{66}, it follows that
\begingroup
\setlength{\abovedisplayskip}{6pt}
\setlength{\belowdisplayskip}{6pt}
\setlength{\abovedisplayshortskip}{6pt}
\setlength{\belowdisplayshortskip}{6pt}
\begin{equation}
\bigl(
\mathbf{A}-\mathbf{B}\mathbf{K}_{c}
+\varepsilon\mathbf{I}
\bigr)^{T}\mathbf{P}_{c}
+\mathbf{P}_{c}
\bigl(
\mathbf{A}-\mathbf{B}\mathbf{K}_{c}
+\varepsilon\mathbf{I}
\bigr)
=
-\mathbf{W}_{c}.
\label{70}
\end{equation}
\endgroup
Since $\mathbf{P}_{c}>\mathbf{0}$ and $\mathbf{W}_{c}>\mathbf{0}$, the
Lyapunov theorem implies that $\mathbf{A}-\mathbf{B}\mathbf{K}%
_{c}+\varepsilon \mathbf{I}$ is Hurwitz, which means that $\mathbf{K}_{c}$
is $\varepsilon $-admissible.
\end{proof}

The data-driven bootstrap scheme is summarized in Algorithm 2 and illustrated in Fig. 2. It consists of two nested loops: an inner finite-horizon PI loop that generates a candidate gain at a fixed horizon, and an outer horizon-expansion loop that is activated whenever the candidate fails the data-driven admissibility test. Once the certificate succeeds, the candidate gain is returned as an  initializer for the subsequent infinite-horizon PI.

\begin{remark}
Since the quantities reconstructed from \eqref{27}--\eqref{28} satisfy the
off-policy identity only up to $\mathbf r_{\rho,T}^{[j]}$, a stability test
built on them would inherit basis-truncation errors and would not certify
the constant gain applied to the original system. Therefore, after
$\mathbf K_c=\widehat{\mathbf K}_{\rho,T}^{[j+1]}(0)$ is generated, an
independent Lyapunov variable $\mathbf P_c$ is introduced to verify
admissibility for $\mathbf A-\mathbf B\mathbf K_c+\varepsilon
\mathbf I$.
\end{remark}

\begin{remark}
The proposed data-driven Lyapunov test in Algorithm 2 is both sufficient and necessary in the exact-data setting. This is enabled by introducing an independent Lyapunov variable $\mathbf{P}_{c}$, rather than testing stability using a value matrix inherited from the learning iteration. In contrast, the stopping criteria used in the hybrid and homotopy-based initialization schemes considered in \cite{chen2022homotopic,gao2022resilient} are sufficient stability tests and may therefore certify a gain only after it has already entered the admissible region.
\end{remark}

\begin{algorithm}[t]
\caption{Data-Driven Finite-Horizon Bootstrap Algorithm}
\renewcommand{\algorithmicrequire}{\textbf{Input:}}
\renewcommand{\algorithmicensure}{\textbf{Output:}}

\begin{algorithmic}[1]

\Require
$\varepsilon \geq 0$, $\rho > \varepsilon$,
$\epsilon_{\mathrm{PI}} > 0$,
$T_0 > 0$, $\gamma > 1$,
$\mathbf{M}=\mathbf{M}^{T}\geq\mathbf{0}$,
and $\mathbf{W}_c=\mathbf{W}_c^{T}>\mathbf{0}$.

\Ensure
A certified $\varepsilon$-admissible gain
$\mathbf{K}^{[0]}$.

\State Set $s=0$ and $T_s=T_0$.

\Loop

    \State Collect data on $[0,T_s]$.

    \State Choose a bounded
    $\widehat{\mathbf{K}}_{\rho,T_s}^{[0]}(t)$
    on $[0,T_s]$.

    \Repeat

        \State Construct
        $\boldsymbol{\Omega}_{\rho,T_s}^{[j]}$
        and
        $\mathbf{b}_{\rho,T_s}^{[j]}$
        from \eqref{29}.

        \State Solve
        $\widehat{\boldsymbol{\theta}}^{[j]}
        =
        \left(
        \boldsymbol{\Omega}_{\rho,T_s}^{[j]}
        \right)^{\dagger}
        \mathbf{b}_{\rho,T_s}^{[j]}$.

        \State Set $j \gets j+1$.

    \Until{
        $\left\|
        \widehat{\mathbf{K}}_{\rho,T_s}^{[j]}
        -
        \widehat{\mathbf{K}}_{\rho,T_s}^{[j-1]}
        \right\|_{\infty,T_s}
        \leq \epsilon_{\mathrm{PI}}$
    }

    \State Set $\bar{j}_s=j$ and
    $\mathbf{K}_c
    =
    \widehat{\mathbf{K}}_{\rho,T_s}^{[\bar{j}_s]}(0)$.

    \If{
        \eqref{65} admits a solution with
        $\mathbf{P}_c>\mathbf{0}$
    }

        \State Set
        $\mathbf{K}^{[0]}\gets\mathbf{K}_c$.

        \State \Return $\mathbf{K}^{[0]}$.

    \EndIf

    \State Set
    $T_{s+1}=\gamma T_s$
    and $s\gets s+1$.

\EndLoop

\end{algorithmic}
\end{algorithm}

\begin{figure}[t]
\centering
\includegraphics[width=0.9\linewidth]{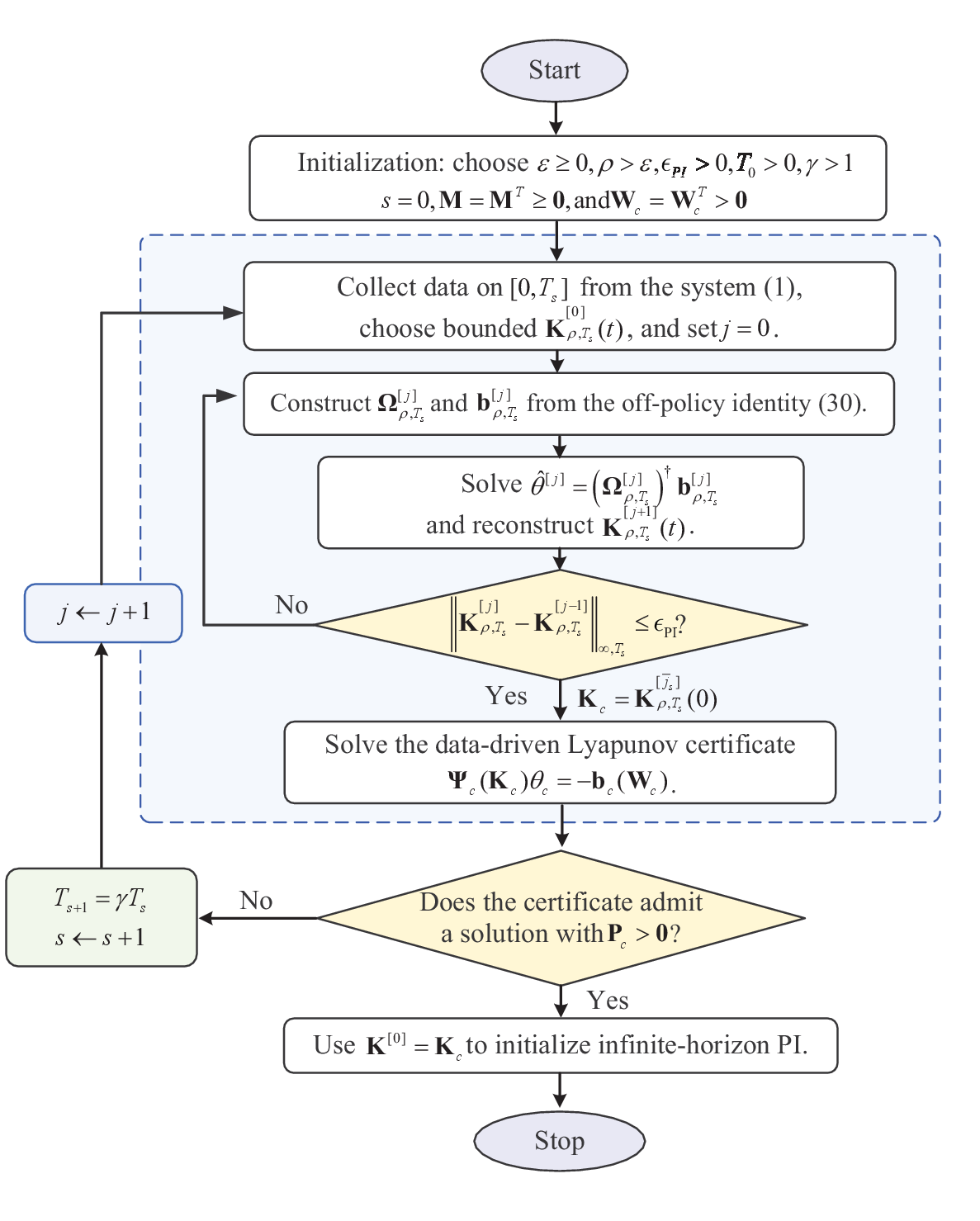}
\caption{Flowchart of the data-driven finite-horizon bootstrap Algorithm 2.}
\end{figure}

\section{Illustrative examples}

\subsection{Case study of the batch reactor system}

In this section, a batch reactor system \cite{walsh2002stability} is used to illustrate the proposed
finite-horizon bootstrap method. The system matrices are given by
\begingroup
\setlength{\abovedisplayskip}{6pt}
\setlength{\belowdisplayskip}{6pt}
\setlength{\abovedisplayshortskip}{6pt}
\setlength{\belowdisplayshortskip}{6pt}

\begin{equation*}
\mathbf{A}
=
\begin{bmatrix}
1.3800  & -0.2077 & 6.7150  & -5.6760 \\
-0.5814 & -4.2900 & 0       & 0.6750 \\
1.0670  & 4.2730  & -6.6540 & 5.8930 \\
0.0480  & 4.2730  & 1.3430  & -2.1040
\end{bmatrix},
\end{equation*}

\begin{equation*}
\mathbf{B}
=
\begin{bmatrix}
0      & 0 \\
5.6790 & 0 \\
1.1360 & -3.1460 \\
1.1360 & 0
\end{bmatrix}.
\end{equation*}

\endgroup
The open-loop poles are $1.9910$, $0.0635$, $-8.6659$, and $-5.0566$, which
shows that the uncontrolled system is unstable. The weighting matrices are
selected as $\mathbf{Q}=\func{diag}(1,1,0.1,0.01)$, $\mathbf{R}=I_{2}$, and
the terminal weight is chosen as $\mathbf{M}=\mathbf{0}$. We first validate
the model-based finite-horizon bootstrap procedure. The design parameters
are set as $\varepsilon =0.2$, $\rho =0.6$, $T_{0}=0.1$, $\epsilon_{\mathrm{PI}}=10^{-7}$ and $\gamma =1.25$.
The finite-horizon policy is initialized by $\mathbf{K}_{\rho
,T_{s}}^{[0]}(t)=\mathbf{0}$ for $t\in \lbrack 0,T_{s}]$, and the horizon is
updated according to $T_{s+1}=\gamma T_{s}$. 

The numerical results are summarized in Table~\ref{tab:batch_bootstrap} and
Fig. 3. It can be seen that, as the finite horizon $T_s$ increases, the
spectral abscissa of the closed-loop matrix $\mathbf{A}-\mathbf{B}\mathbf{K}%
_{c}$ decreases monotonically in this example. When $T_{s}=0.3052$, the
finite-horizon bootstrap produces a gain satisfying $\alpha (\mathbf{A}-%
\mathbf{B}\mathbf{K}_{c})<-0.2. $ Therefore, the algorithm terminates and
the obtained gain is accepted as an $\varepsilon $-admissible initializer.
\begin{table}[t]
\caption{Model-based finite-horizon bootstrap results for the batch reactor
system}
\label{tab:batch_bootstrap}\centering
\begin{tabular}{cccc}
\hline
$s$ & $T_s$ & Iterations & $\alpha(\mathbf{A}-\mathbf{B}\mathbf{K}_c)$ \\
\hline
0 & 0.1000 & 4 & 1.8108 \\
1 & 0.1250 & 4 & 1.6755 \\
2 & 0.1563 & 4 & 1.4501 \\
3 & 0.1953 & 5 & 1.0694 \\
4 & 0.2441 & 5 & 0.3866 \\
5 & 0.3052 & 5 & -0.5822 \\ \hline
\end{tabular}%
\end{table}
The resulting stabilizing initializer is
\begingroup
\setlength{\abovedisplayskip}{6pt}
\setlength{\belowdisplayskip}{6pt}
\setlength{\abovedisplayshortskip}{6pt}
\setlength{\belowdisplayshortskip}{6pt}
\begin{equation*}
\mathbf{K}_{c}
=
\begin{bmatrix}
0.1086  & 0.5935  & 0.1185  & 0.0401 \\
-1.0155 & -0.0248 & -0.6106 & 0.4062
\end{bmatrix}.
\end{equation*}
\endgroup
The corresponding closed-loop poles are $-0.5822\pm 0.5852i$, $-8.3664$, and
$-7.6089$, where $i$ represents the imaginary unit, which confirms that the
proposed finite-horizon bootstrap successfully generates an admissible
initial policy.

\begin{figure}[t]
\centering
\includegraphics[width=1\linewidth]{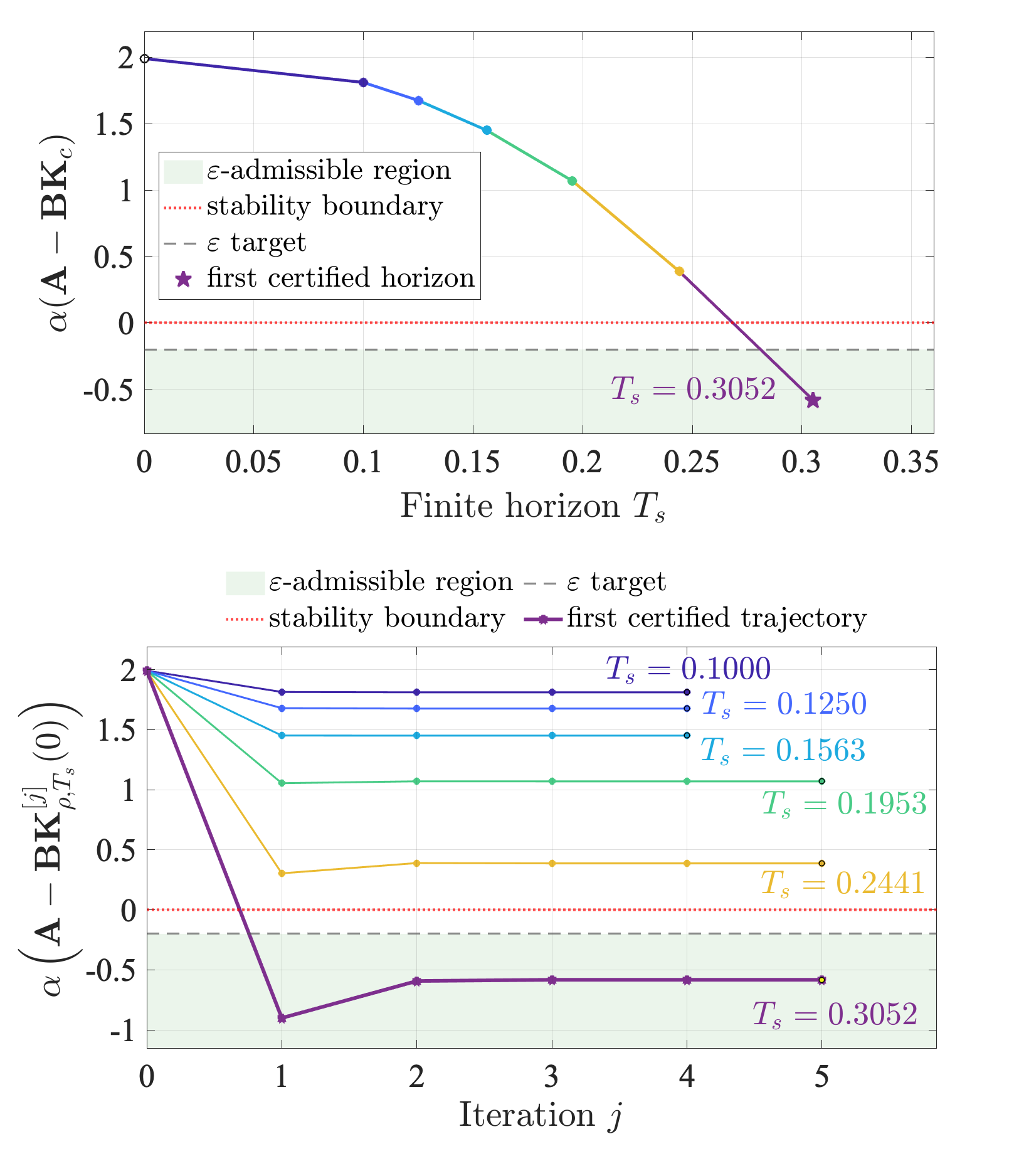}
\caption{ Model-based finite-horizon bootstrap process for the batch reactor
system. The upper panel shows the spectral abscissa $\protect\alpha_s=%
\protect\alpha(\mathbf{A}-\mathbf{B}\mathbf{K}_s)$ of the converged
candidate gain $\mathbf{K}_{\protect\rho,T_s}^{[\bar j_s]}(0)$ as the
horizon is enlarged. The lower panel shows the corresponding inner PI
evolution $\protect\alpha(\mathbf{A}-\mathbf{B}\mathbf{K}_{\protect\rho%
,T_s}^{[j]}(0))$. The first horizon for which the candidate enters the $%
\protect\varepsilon$-admissible region is $T_s=0.3052$. }
\end{figure}

We next evaluate the data-driven bootstrap procedure in Algorithm~2. The
certificate matrix is chosen as $\mathbf{W}_{c}=I_{4}$. The initial states
are randomly generated with norms in $[0.3,0.7]$. The basis functions
\eqref{27}--\eqref{28} are selected as
\begingroup
\setlength{\abovedisplayskip}{6pt}
\setlength{\belowdisplayskip}{6pt}
\setlength{\abovedisplayshortskip}{6pt}
\setlength{\belowdisplayshortskip}{6pt}
\setlength{\jot}{2pt}
\begin{align*}
\phi_{\ell}(t)
&=
\left(1-\frac{t}{T_{s}}\right)
\left(\frac{t}{T_{s}}\right)^{\ell-1},\\
\psi_{\ell}(t)
&=
\left(\frac{t}{T_{s}}\right)^{\ell-1},
\qquad
\ell=1,\ldots,8.
\end{align*}
\endgroup
Thus $N_{p}=N_{k}=8$. The data-driven PI loop is initialized by $\widehat{%
\mathbf{K}}_{\rho ,T_{s}}^{[0]}(t)=\mathbf{0}$.
The data-driven results are shown in Fig.~4. For all tested horizons, the
data-richness condition for the Lyapunov certificate is satisfied with $%
\func{rank}[\mathbf{I}_{xx},\mathbf{I}_{xu}]=18$, and the finite-horizon
regression matrix has full column rank, i.e., $\func{rank}(\boldsymbol{\Omega%
}_{\rho,T_s}^{[j]})=144$. For numerical implementation, the linear equation %
\eqref{65} is solved in the least-squares sense. The relative residual
\begingroup
\setlength{\abovedisplayskip}{6pt}
\setlength{\belowdisplayskip}{6pt}
\setlength{\abovedisplayshortskip}{6pt}
\setlength{\belowdisplayshortskip}{6pt}
\begin{equation}
\delta_{c}
=
\frac{
\left\|
\widetilde{\boldsymbol{\Psi}}_{c}\widehat{\theta}_{c}
+\widetilde{\mathbf{b}}_{c}
\right\|_{2}
}{
\max\left\{
1,
\left\|\widetilde{\mathbf{b}}_{c}\right\|_{2}
\right\}
}
\end{equation}
\endgroup
is reported only as a numerical accuracy indicator, where $\widetilde{%
\boldsymbol{\Psi}}_c$ and $\widetilde{\mathbf{b}}_c$ denote the row-scaled
data matrix and right-hand side. As $T_s$ increases, the spectral abscissa
of the candidate closed-loop matrix moves toward the $\varepsilon$%
-admissible region. The candidates generated for the first five horizons are
rejected by the data-driven Lyapunov certificate. At $T_s=0.3052$, the
data-driven finite-horizon PI converges in five iterations and produces
\begin{equation*}
\mathbf{K}_{c}=
\begin{bmatrix}
0.1085 & 0.5936 & 0.1184 & 0.0402 \\
-1.0155 & -0.0247 & -0.6106 & 0.4062%
\end{bmatrix}%
.
\end{equation*}
For this gain, the Lyapunov certificate \eqref{65} has relative residual $%
7.432\times 10^{-4}$ and $\lambda_{\min}(\mathbf{P}_c)=5.938\times 10^{-2}>0$%
. Therefore, the certificate succeeds and Algorithm~2 terminates. Thus the
gain certified from data is an $\varepsilon$-admissible initializer for the
subsequent infinite-horizon PI. Using $\mathbf{K}_c$ as the initial policy,
the subsequent off-policy RL converges to the optimal solution $\mathbf{K}%
^\ast$ within a few iterations. As shown in Fig. 5, both $\|\mathbf{P}^{[j]}-%
\mathbf{P}^\ast\|$ and $\|\mathbf{K}^{[j]}-\mathbf{K}^\ast\|$ decrease
rapidly, confirming that the bootstrap gain provides a valid initializer for
the infinite-horizon data-driven PI stage.

\begin{figure}[t]
\centering
\includegraphics[width=1\linewidth]{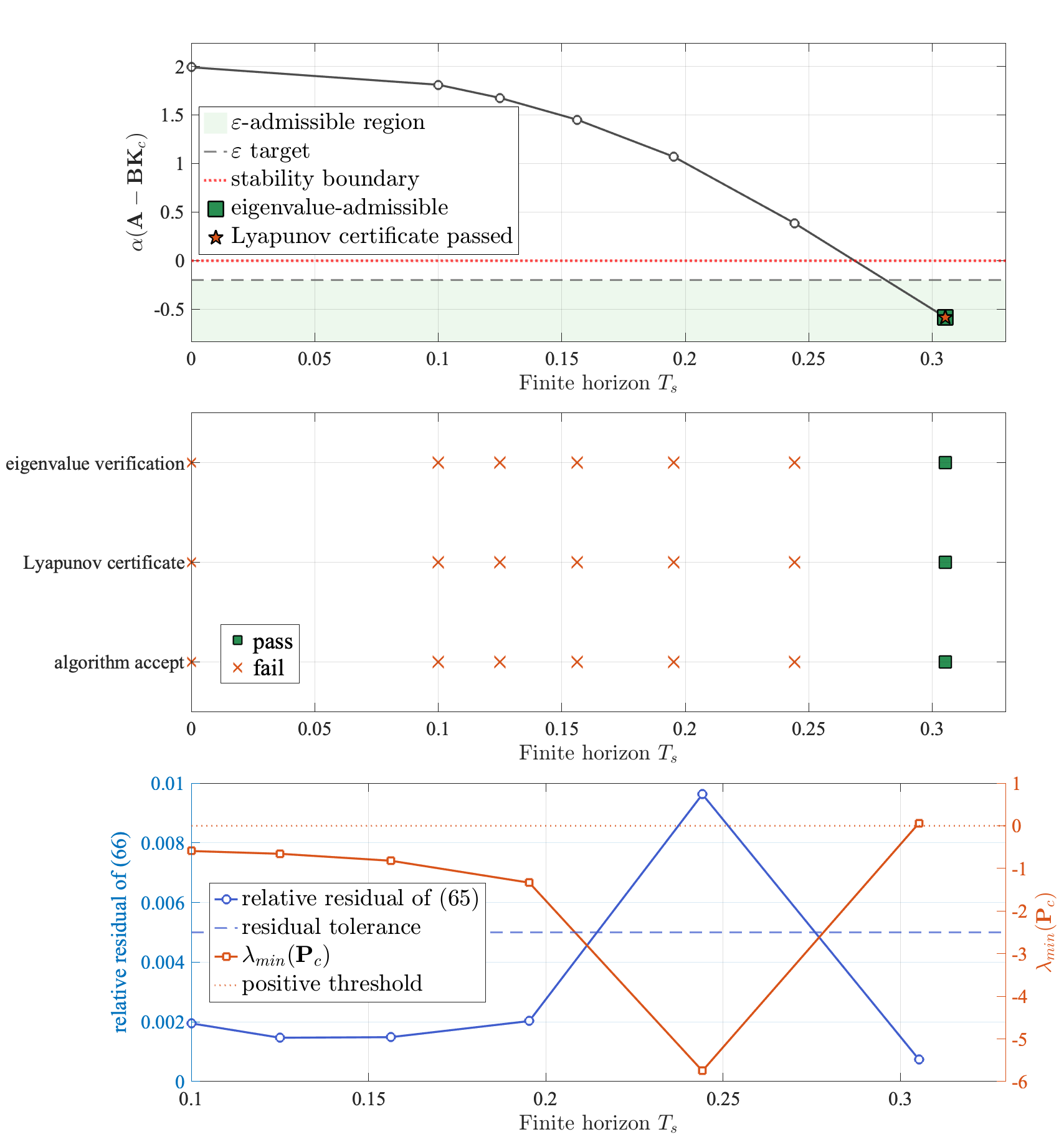}
\caption{Data-driven finite-horizon bootstrap process for the batch reactor
system. The upper panel shows the spectral abscissa of the candidate
closed-loop matrix as the horizon is enlarged. The middle panel compares the
eigenvalue-based verification, the data-driven Lyapunov certificate, and the
final algorithmic decision. The lower panel reports the residual and
positivity conditions of the Lyapunov certificate. }
\end{figure}
\begin{figure}[t]
\centering
\includegraphics[width=1\linewidth]{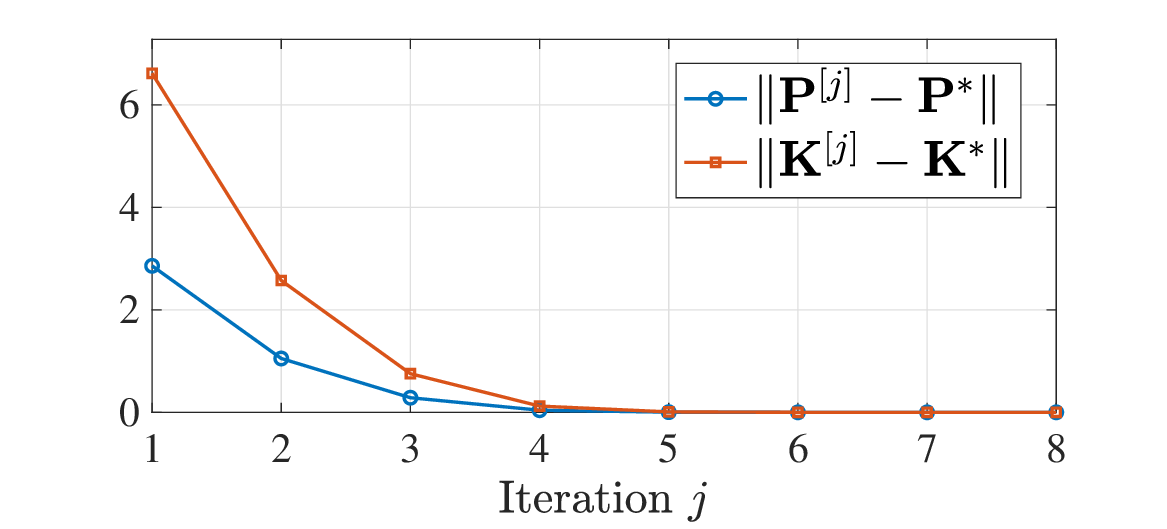}
\caption{Convergence of the off-policy RL initialized by the bootstrap gain.}
\end{figure}

\subsection{Comparative study of the two-mass-spring system}

We next consider the two-mass-spring system used in the homotopic PI
study \cite{chen2022homotopic} to compare Algorithm~2 with existing data-driven RL
methods under the same plant and cost.  The system matrices are
\begin{equation*}
\mathbf{A}=%
\begin{bmatrix}
0 & 1 & 0 & 0 \\
-\frac{k_{1}+k_{2}}{m_{1}} & 0 & \frac{k_{2}}{m_{1}} & 0 \\
0 & 0 & 0 & 1 \\
\frac{k_{2}}{m_{2}} & 0 & -\frac{k_{2}}{m_{2}} & 0%
\end{bmatrix}%
,\quad \mathbf{B}=%
\begin{bmatrix}
0 \\
\frac{1}{m_{1}} \\
0 \\
0%
\end{bmatrix}%
.
\end{equation*}%
The physical parameters are $k_{1}=1.5~\mathrm{N/m}$, $k_{2}=1~\mathrm{N/m}$%
, $m_{1}=1.1~\mathrm{kg}$, and $m_{2}=0.9~\mathrm{kg}$. The weights are
selected as $\mathbf{Q}=\func{diag}(1,1,2,1)$ and $\mathbf{R}=I_{1}$.
\begin{figure}[t]
\centering
\includegraphics[width=1\linewidth]{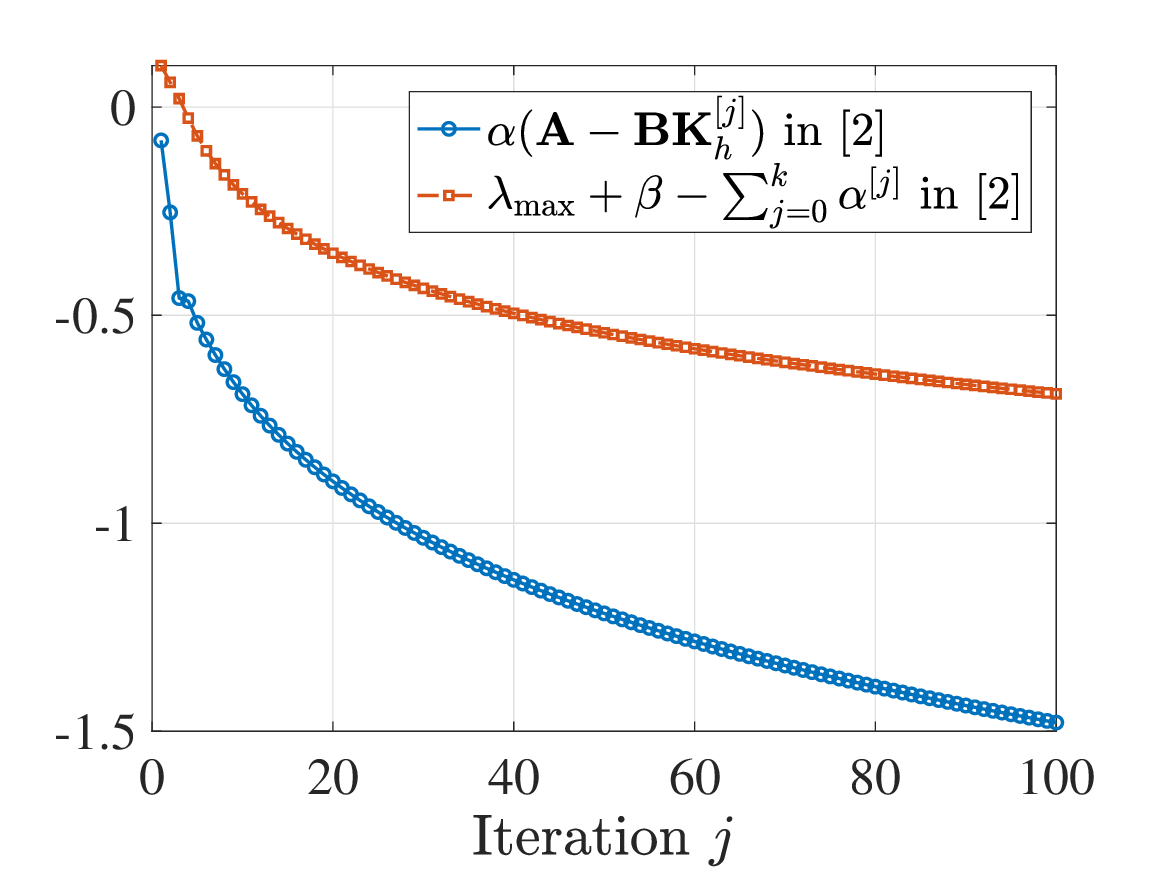}
\caption{Convergence of the stabilizing stage in the data-driven homotopic PI \cite{chen2022homotopic}. The curves show the closed-loop spectral abscissa and the auxiliary stability index used in the iteration.}
\end{figure}
\begin{figure}[t]
\centering
\includegraphics[width=0.9\linewidth]{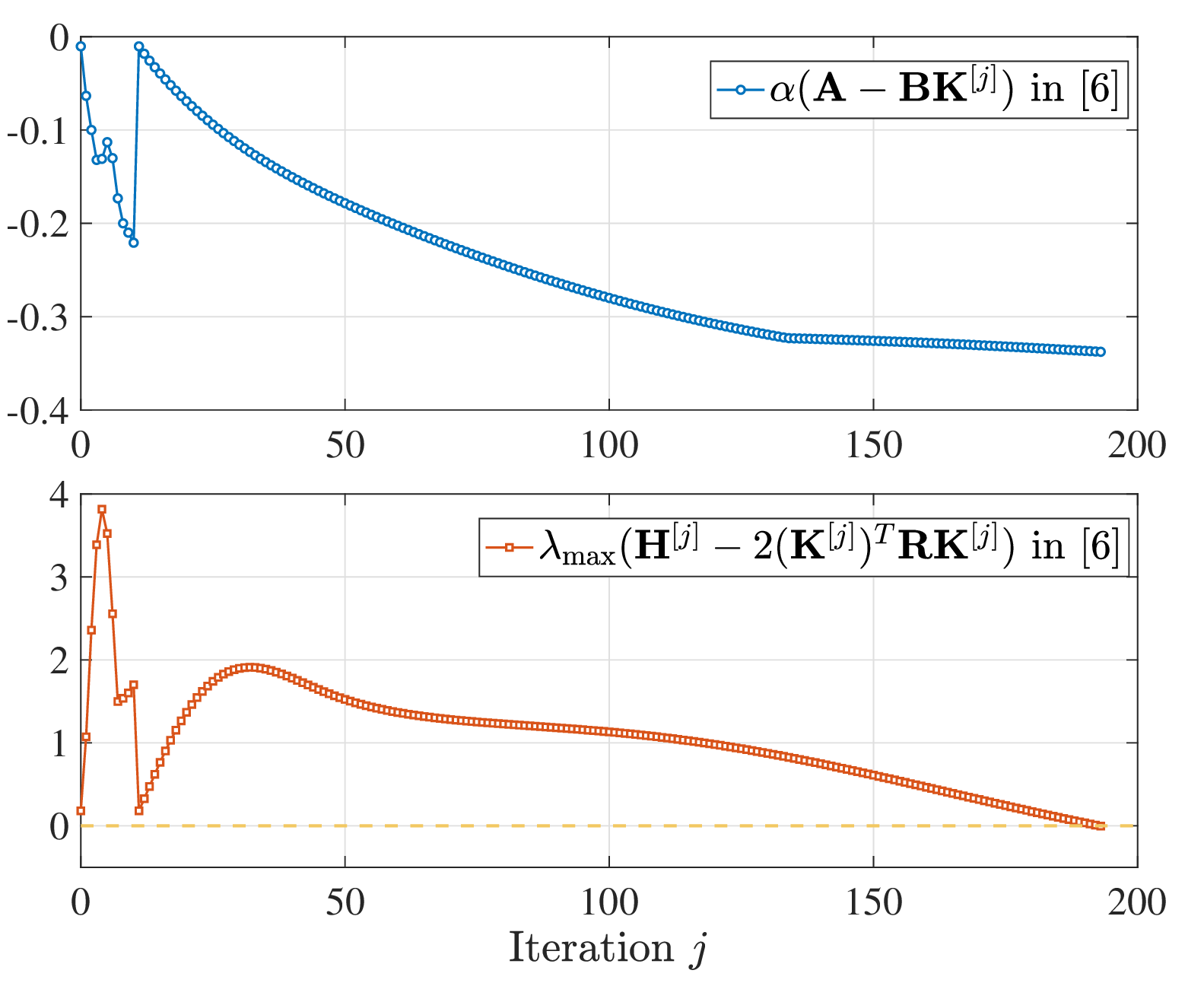}
\caption{Convergence of the VI stage in the data-driven hybrid PI \cite{gao2022resilient}. The upper panel shows the closed-loop spectral abscissa, and the lower panel shows the data-driven Lyapunov certificate quantity.}
\end{figure}
\begin{figure}[t]
\centering
\includegraphics[width=1\linewidth]{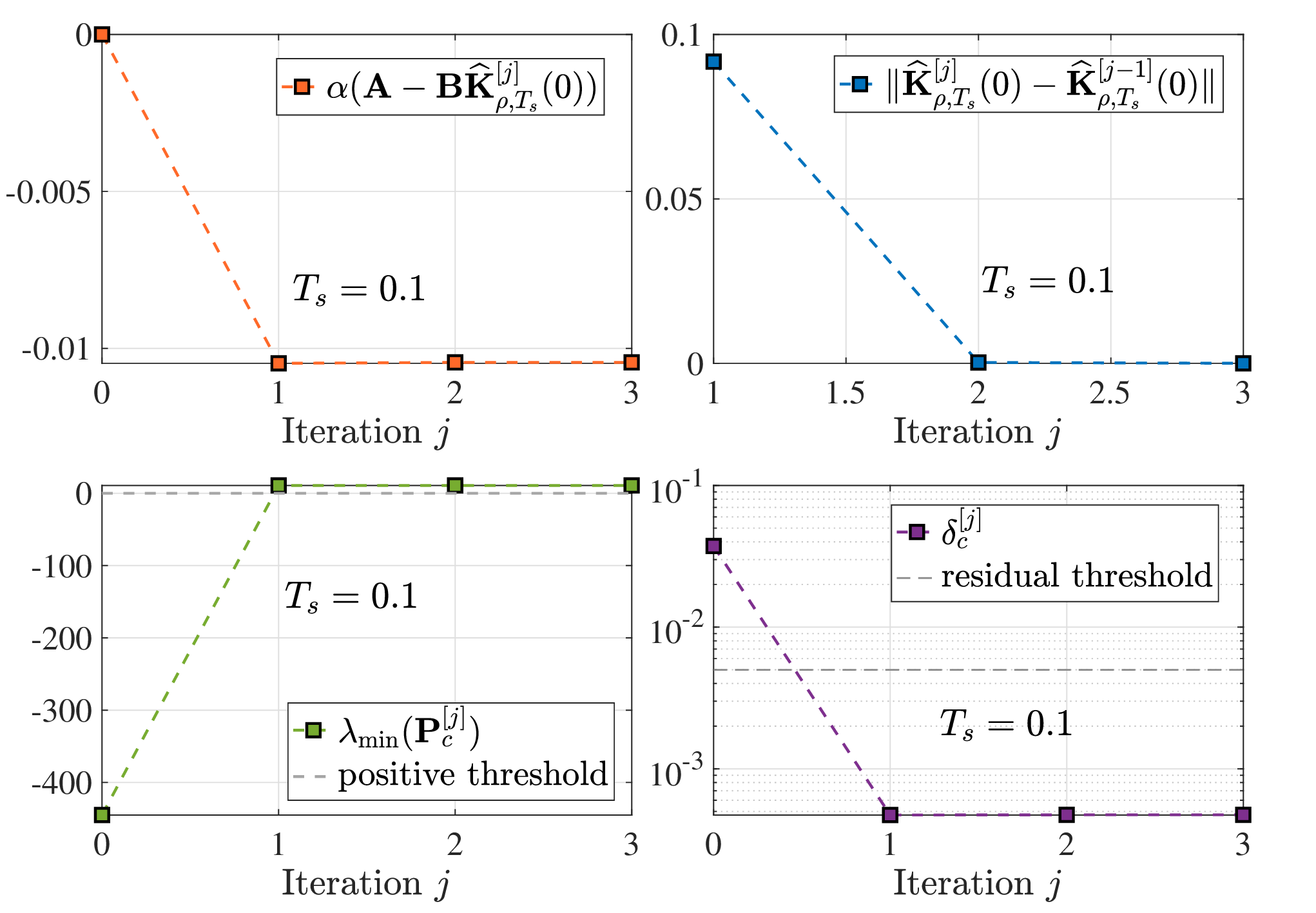}
\caption{Data-driven finite-horizon bootstrap process of Algorithm 2. }
\end{figure}
For the existing data-driven homotopic PI method \cite{chen2022homotopic}, the reported simulation
uses $\mathbf{K}^{[0]}=\mathbf{0}$ and $\beta -\alpha ^{[0]}=0.1$. The data
are collected over $[0,5]$ from the initial condition $%
x(0)=[0.2,0.2,0.2,0.2]^{T} $, with a probing input formed by sine and cosine
signals of different frequencies. As shown in Fig. 6, the learned
stabilizing gain is
\begingroup
\setlength{\abovedisplayskip}{6pt}
\setlength{\belowdisplayskip}{6pt}
\setlength{\abovedisplayshortskip}{6pt}
\setlength{\belowdisplayshortskip}{6pt}
\begin{equation*}
\mathbf{K}_{h}^{[101]}
=
\begin{bmatrix}
14.7218 & 6.5855 & -2.4442 & 16.4721
\end{bmatrix},
\end{equation*}
\endgroup
which gives the closed-loop poles $-1.51455\pm 1.66164i$ and $-1.47887\pm
0.752358i$.

For comparison, the hybrid PI method \cite{gao2022resilient} uses $\mathbf{P}^{[0]}=0.1I_{4}$, $\epsilon _{j}=1/(j+2)$, and $\mathcal{B}%
_{q}=\{\mathbf{P}>\mathbf{0}:\Vert \mathbf{P}\Vert \leq 5(q+1)\}$, $%
q=0,1,\ldots $. At each VI iteration, $\mathbf{H}^{\left[ j\right] }$ and $%
\mathbf{K}^{\left[ j\right] }_{v}$ are recovered from the data equation, and $%
\widetilde{\mathbf{P}}^{\left[ j+1\right] }$ is generated by the VI update.
If $\widetilde{\mathbf{P}}^{\left[ j+1\right] }\notin \mathcal{B}_{q}$, the
iteration is reset to $\mathbf{P}^{[0]}$ and $q$ is increased by one.
Otherwise, the VI stage is terminated once the data-driven Lyapunov
certificate $\mathbf{P}^{\left[ j\right] }>\mathbf{0}$ and $\mathbf{H}^{%
\left[ j\right] }-2(\mathbf{K}^{[j]}_{v})^{T}\mathbf{R}\mathbf{K}^{\left[ j%
\right] }_{v}<\mathbf{0}$ is satisfied, where $\mathbf{H}^{\left[ j\right] }=%
\mathbf{A}^{T}\mathbf{P}^{\left[ j\right] }+\mathbf{P}^{\left[ j\right] }%
\mathbf{A}$. Fig. 7 shows the VI-stage learning process. After one
reset caused by the bounded-set test, the spectral abscissa
$\alpha(\mathbf{A}-\mathbf{B}\mathbf{K}^{[j]}_{v})$ remains negative and decreases
gradually, as shown in Fig. 7. Meanwhile,
$\lambda_{\max}(\mathbf{H}^{[j]}-2(\mathbf{K}^{[j]}_{v})^{T}\mathbf{R}\mathbf{K}^{\left[ j%
\right] }_{v})$
crosses zero at the $193$rd VI iteration, which activates the data-driven
termination certificate. The certified
initializer is
\[
\mathbf{K}_{v}^{[193]}
=\begin{bmatrix}
0.7714 & 1.5364 & -0.1604 & 0.9992
\end{bmatrix},
\]
and the corresponding closed-loop poles are
$-0.3376\pm1.6540i$ and $-0.3607\pm0.7862i$. 

For Algorithm~2, we use $\varepsilon =0$, $\rho =0.1$, $T_{0}=0.1$, $\gamma
=1.25$, $\mathbf{M}=\mathbf{0}$, and $\mathbf{W}_{c}=I_{4}$. The
finite-horizon policy is initialized by $\widehat{\mathbf{K}}_{\rho
,T_{s}}^{[0]}(t)=\mathbf{0}$, and the same basis functions and stopping
tolerance $10^{-7}$ as in the preceding Section 5.1 are used. As shown in Fig. 8, the
data-driven bootstrap procedure succeeds at the first tested horizon $T_s=0.1
$. The resulting candidate gain is
\begingroup
\setlength{\abovedisplayskip}{6pt}
\setlength{\belowdisplayskip}{6pt}
\setlength{\abovedisplayshortskip}{6pt}
\setlength{\belowdisplayshortskip}{6pt}
\begin{equation*}
\mathbf{K}_{c}
=
\begin{bmatrix}
-0.0057 & 0.0911 & 0.0041 & 0.0003
\end{bmatrix}.
\end{equation*}
\endgroup
For this gain, the data-driven Lyapunov certificate gives $\lambda_{\min}(%
\mathbf{P}_c)=10.84354>0$ and a relative residual $4.732\times10^{-4}$,
which is below the prescribed tolerance. The closed-loop spectral abscissa
is $\alpha(\mathbf{A}-\mathbf{B}\mathbf{K}_{\mathrm{c}})
=-1.0445\times10^{-2}<0$, confirming that the certified initializer is
admissible. The results in Figs. 6-8 illustrate the distinct advantage of the proposed bootstrap mechanism. 
The homotopic PI \cite{chen2022homotopic} and hybrid PI \cite{gao2022resilient} both succeed in generating admissible initializers, but their certification requires $101$ stabilizing iterations and $193$ iterations, respectively. 
By contrast, Algorithm 2 succeeds at the first tested horizon $T_s=0.1$ and the finite-horizon PI converges in only three iterations.

\section{Conclusion}

This article develops a finite-horizon bootstrap method for
data-driven infinite-horizon PI without requiring an
initial stabilizing policy. By exploiting the well-posedness of
finite-horizon policy evaluation and enlarging the horizon for a
shifted system, the method generates an admissible candidate gain.
An off-policy implementation reconstructs the time-varying value
matrix and policy from measured trajectories, while an independent
data-driven Lyapunov certificate verifies the candidate gain before
it initializes infinite-horizon PI. Numerical studies
demonstrate the effectiveness of the proposed method.


\vspace{1.5ex}
\bibliographystyle{abbrv}
\bibliography{shwj-workwjc}

@string{ auto = {Automatica} }

@string{ ieeeac = {IEEE Trans. Autom. Control} }

@string{ ieeease = {IEEE Trans. Automation Sci. Eng.} }

@string{ ieeec = {IEEE Trans. Cybern.} }

@string{ ieeecaa = {IEEE/CAA  J. Automatica Sinica} }

@string{ ieeecst = {IEEE Trans. Control Syst. Technol.} }

@string{ ieeeie = {IEEE Trans. Ind. Electron.} }

@string{ ieeennls = {IEEE Trans. Neural Netw. Learn. Syst.} }

@string{ ieeesmcs = {IEEE Trans. Syst. Man, Cybern. Syst.} }

@string{ m = {Mechatronics} }

@string{ ma= {Math. Ann.} }

@string{ ieeevt = {IEEE Trans. Veh. Technol.} }

@article{kleinman1968iterative,
  title={On an iterative technique for {R}iccati equation computations},
  author={Kleinman, David},
  journal=ieeeac,
  volume={13},
  number={1},
  pages={114--115},
  year={1968},
  month={Feb.},
  publisher={IEEE}
}

@article{jiang2012computational,
  title={Computational adaptive optimal control for continuous-time linear systems with completely unknown dynamics},
  author={Jiang, Yu and Jiang, Zhong-Ping},
  journal=auto,
  volume={48},
  number={10},
  pages={2699--2704},
  year={2012},
  publisher={Elsevier}
}

@article{vrabie2009adaptive,
  title={Adaptive optimal control for continuous-time linear systems based on policy iteration},
  author={Vrabie, Draguna and Pastravanu, Octavian and Abu-Khalaf, Murad and Lewis, Frank L},
  journal=auto,
  volume={45},
  number={2},
  pages={477--484},
  year={2009},
  publisher={Elsevier}
}

@article{liu2025adaptive,
  title={Adaptive dynamic programming-regulated extremum seeking for distributed feedback optimization},
  author={Liu, Tong and Krsti{\'c}, Miroslav and Jiang, Zhong-Ping},
  journal=ieeeac,
  volume={70},
  number={11},
  pages={7675--7682},
  month={Nov.},
  year={2025},
  publisher={IEEE}
}

@article{cui2025learning,
  title={Learning-based adaptive optimal control of linear time-delay systems: A value iteration approach},
  author={Cui, Leilei and Pang, Bo and Krsti{\'c}, Miroslav and Jiang, Zhong-Ping},
  journal=auto,
  volume={171},
  pages={111944},
  year={2025},
  publisher={Elsevier}
}

@book{lewis2012optimal,
  title={Optimal {C}ontrol},
  author={Lewis, Frank L and Vrabie, Draguna and Syrmos, Vassilis L},
  year={2012},
  publisher={John Wiley \& Sons}
}

@article{shen2024data,
  title={Data-driven near optimization for fast sampling singularly perturbed systems},
  author={Shen, Hao and Peng, Chuanjun and Yan, Huaicheng and Xu, Shengyuan},
  journal=ieeeac,
  volume={69},
  number={7},
  pages={4689--4694},
  month={Jul.},
  year={2024},
  publisher={IEEE}
}

@article{gao2016adaptive,
  title={Adaptive dynamic programming and adaptive optimal output regulation of linear systems},
  author={Gao, Weinan and Jiang, Zhong-Ping},
  journal=ieeeac,
  volume={61},
  number={12},
  pages={4164--4169},
  month={Dec.},
  year={2016},
  publisher={IEEE}
}

@article{possieri2022value,
  title={Value iteration for continuous-time linear time-invariant systems},
  author={Possieri, Corrado and Sassano, Mario},
  journal=ieeeac,
  volume={68},
  number={5},
  pages={3070--3077},
  month={May},
  year={2023},
  publisher={IEEE}
}

@article{wei2016value,
  title={Value iteration adaptive dynamic programming for optimal control of discrete-time nonlinear systems},
  author={Wei, Qinglai and Liu, Derong and Lin, Hanquan},
  journal=ieeec,
  volume={46},
  number={3},
  pages={840--853},
  month={Mar.},
  year={2016},
  publisher={IEEE}
}

@article{luo2019balancing,
  title={Balancing value iteration and policy iteration for discrete-time control},
  author={Luo, Biao and Yang, Yin and Wu, Huai-Ning and Huang, Tingwen},
  journal=ieeesmcs,
  volume={50},
  number={11},
  pages={3948--3958},
  month={Nov.},
  year={2020},
  publisher={IEEE}
}

@article{pang2020adaptive,
  title={Adaptive optimal control of linear periodic systems: An off-policy value iteration approach},
  author={Pang, Bo and Jiang, Zhong-Ping},
  journal=ieeeac,
  volume={66},
  number={2},
  pages={888--894},
  month={Feb.},
  year={2021},
  publisher={IEEE}
}

@article{bian2016value,
  title={Value iteration and adaptive dynamic programming for data-driven adaptive optimal control design},
  author={Bian, Tao and Jiang, Zhong-Ping},
  journal=auto,
  volume={71},
  pages={348--360},
  year={2016},
  publisher={Elsevier}
}

@article{gao2022resilient,
  title={Resilient reinforcement learning and robust output regulation under denial-of-service attacks},
  author={Gao, Weinan and Deng, Chao and Jiang, Yi and Jiang, Zhong-Ping},
  journal=auto,
  volume={142},
  pages={110366},
  year={2022},
  publisher={Elsevier}
}

@article{shen2024secure,
  title={Secure control for {M}arkov jump cyber-physical systems subject to malicious attacks: A resilient hybrid learning scheme},
  author={Shen, Hao and Wang, Yun and Wu, Jiacheng and Park, Ju H and Wang, Jing},
  journal=ieeec,
  volume={54},
  number={11},
  pages={7068--7079},
   month={Nov.},
  year={2024},
  publisher={IEEE}
}

@article{liang2025cooperative,
  title={Cooperative adaptive cruise control of connected and autonomous vehicles via hybrid iteration},
  author={Liang, Xue and Gao, Weinan and Hu, Chuan and Chai, Tianyou},
  journal=ieeevt,
  volume={75},
  number={6},
  pages={8793--8804},
   month={Jun.},
  year={2026},
  publisher={IEEE}
}

@article{jiang2022bias,
  title={Bias-policy iteration based adaptive dynamic programming for unknown continuous-time linear systems},
  author={Jiang, Huaiyuan and Zhou, Bin},
  journal=auto,
  volume={136},
  pages={110058},
  year={2022},
  publisher={Elsevier}
}

@article{yang2021model,
  title={Model-free $\lambda$-policy iteration for discrete-time linear quadratic regulation},
  author={Yang, Yongliang and Kiumarsi, Bahare and Modares, Hamidreza and Xu, Chengzhong},
  journal=ieeennls,
  volume={34},
  number={2},
  pages={635--649},
  month={Feb.},
  year={2023},
  publisher={IEEE}
}

@article{shen2025data,
  title={Data-driven single-loop policy iteration control of uncertain singularly perturbed systems},
  author={Shen, Hao and Wang, Yun and Yan, Huaicheng and Xu, Shengyuan},
  journal=ieeeac,
  volume={70},
  number={12},
  pages={8314--8320},
  month={Dec.},
  year={2025},
  publisher={IEEE}
}

@article{zhao2024novel,
  title={Novel single-loop policy iteration for linear zero-sum games},
  author={Zhao, Jianguo and Yang, Chunyu and Gao, Weinan and Park, Ju H},
  journal=auto,
  volume={163},
  pages={111551},
  year={2024},
  publisher={Elsevier}
}

@article{chen2022homotopic,
  title={Homotopic policy iteration-based learning design for unknown linear continuous-time systems},
  author={Chen, Ci and Lewis, Frank L and Li, Bo},
  journal=auto,
  volume={138},
  pages={110153},
  year={2022},
  publisher={Elsevier}
}

@article{fan2025homotopy,
  title={A homotopy method for continuous-time model-free {LQR} control based on policy iteration},
  author={Fan, Wenwu and Xiong, Junlin},
  journal=ieeecaa,
  volume={12},
  number={8},
  pages={1673--1682},
  month={Aug.},
  year={2025},
  publisher={IEEE}
}

@article{chen2023adaptive,
  title={Adaptive optimal control of unknown nonlinear systems via homotopy-based policy iteration},
  author={Chen, Ci and Lewis, Frank L and Xie, Kan and Xie, Shengli},
  journal=ieeeac,
  volume={69},
  number={5},
  pages={3396--3403},
  month={May},
  year={2024},
  publisher={IEEE}
}

@article{ma2025adaptive,
  title={Adaptive dynamic programming for optimal control of unknown {LTI} system via interval excitation},
  author={Ma, Yong-Sheng and Sun, Jian and Xu, Yong and Cui, Shi-Sheng and Wu, Zheng-Guang},
  journal=ieeeac,
  volume={70},
  number={7},
  pages={4896--4903},
  month={Jul.},
  year={2025},
  publisher={IEEE}
}

@article{wu2025memory,
  title={Memory-Efficient Inverse Reinforcement Learning for Multiplayer Differential Games},
  author={Wu, Jiacheng and Zhu, Yang and Su, Hongye},
  journal=ieeec,
  volume={55},
  number={11},
  pages={5545--5558},
  month={Nov.},
  year={2025},
  publisher={IEEE}
}

@article{wang2025parallel,
  title={A Parallel Homotopic Optimized Control Scheme of Uncertain Nonlinear {M}arkov Jump Systems and Its Applications},
  author={Wang, Jing and Huang, Zheng and Shen, Hao and Park, Ju H},
  journal=ieeease,
  volume={22},
  pages={19403--19414},
  year={2025},
  publisher={IEEE}
}

@article{li2025cooperative,
  title={Cooperative Optimal Output Tracking for Discrete-Time Multiagent Systems: Stabilizing Policy Iteration Frameworks},
  author={Li, Dongdong and Dong, Jiuxiang},
  journal=ieeeac,
  volume={71},
  number={4},
  pages={2746--2753},
  month={Apr.},
  year={2026},
  publisher={IEEE}
}

@article{luo2026recent,
  title={Recent advances on off-policy reinforcement learning for optimization control},
  author={Luo, Biao and Liu, Derong and Wu, Huai-Ning and Huang, Tingwen and Yang, Chunhua and Gui, Weihua},
  journal=ieeec,
  year={in press, DOI: 10.1109/TCYB.2026.3683384 },
  publisher={IEEE}
}

@article{lian2022data,
  title={Data-driven inverse reinforcement learning control for linear multiplayer games},
  author={Lian, Bosen and Donge, Vrushabh S and Lewis, Frank L and Chai, Tianyou and Davoudi, Ali},
  journal=ieeennls,
  volume={35},
  number={2},
  pages={2028--2041},
  month={Feb.},
  year={2024},
  publisher={IEEE}
}

@book{hull2013optimal,
  title={Optimal {C}ontrol {T}heory for {A}pplications},
  author={Hull, David G},
  year={Cham, Switzerland: Springer, 2003.},
}

@article{jiang2025off,
  title={Off-Policy Reinforcement Learning for {$H_{\infty}$} Control of Linear Discrete-Time Systems with Network Induced Dropouts},
  author={Jiang, Yi and Yang, Tao and Gao, Weinan and Wu, Jin and Chai, Tianyou and Lewis, Frank L},
  journal=ieeeac,
  volume={70},
  number={12},
  pages={8000--8015},
  month={Dec.},
  year={2025},
  publisher={IEEE}
}

@article{wu2025distributed,
  title={Distributed {FilterNet} reinforcement learning for achieving output consensus in heterogeneous multiplayer multiagent systems},
  author={Wu, Jiacheng and Lian, Bosen and Wen, Changyun and Zhu, Yang},
  journal=ieeennls,
  volume={37},
  number={2},
  pages={575--588},
  month={Feb.},
  year={2026},
  publisher={IEEE}
}

@article{wang2026secure,
  title={Secure optimal control of {It{\^o}} stochastic {M}arkov jump systems subject to {DoS} attacks: A hybrid learning algorithm},
  author={Wang, Xin and Kong, Linghuan and Zhu, Quanxin and Niu, Ben},
  journal=auto,
  volume={183},
  pages={112681},
  year={2026},
  publisher={Elsevier}
}

@article{zhang2025prescribed,
  title={Prescribed-time observer-based {HI-RL} secure output tracking control for heterogeneous {MASs} under {DoS} attacks},
  author={Zhang, Shuo-Qiu and Che, Wei-Wei and Wu, Zheng-Guang},
  journal=ieeesmcs,
  volume={56},
  number={1},
  pages={709--723},
  month={Jan.},
  year={2026},
  publisher={IEEE}
}

@article{qasem2023experimental,
  title={Experimental validation of data-driven adaptive optimal control for continuous-time systems via hybrid iteration: {A}n application to rotary inverted pendulum},
  author={Qasem, Omar and Gutierrez, Hector and Gao, Weinan},
  journal=ieeeie,
  volume={71},
  number={6},
  pages={6210--6220},
  month={Jun.},
  year={2024},
  publisher={IEEE}
}

@article{rizvi2018output,
  title={Output feedback {Q}-learning for discrete-time linear zero-sum games with application to the {H}-infinity control},
  author={Rizvi, Syed Ali Asad and Lin, Zongli},
  journal=auto,
  volume={95},
  pages={213--221},
  year={2018},
  publisher={Elsevier}
}

@article{walsh2002stability,
  title={Stability analysis of networked control systems},
  author={Walsh, Gregory C and Ye, Hong and Bushnell, Linda G},
  journal=ieeecst,
  volume={10},
  number={3},
  pages={438--446},
  month={May},
  year={2002},
  publisher={IEEE}
}

@article{lopez2023efficient,
  title={Efficient off-policy {Q}-learning for data-based discrete-time {LQR} problems},
  author={Lopez, Victor G and Alsalti, Mohammad and M{\"u}ller, Matthias A},
  journal=ieeeac,
  volume={68},
  number={5},
  pages={2922--2933},
  month={May},
  year={2023},
  publisher={IEEE}
}

\end{document}